\newcommand{\ifarx}[2]{\ifthenelse{\boolean{arxiv}}{#1}{#2}}
\newcommand{\arP}{\overrightarrow{\Pow}}
\providecommand{\catname}{\mathbf} 
\providecommand{\clsname}{\mathcal}
\providecommand{\oname}[1]{{\operatorname{\mathsf{#1}}}}
\def\defcatname#1{\expandafter\def\csname B#1\endcsname{\catname{#1}}}
\def\defcatnames#1{\ifx#1\defcatnames\else\defcatname#1\expandafter\defcatnames\fi}
\def\defclsname#1{\expandafter\def\csname C#1\endcsname{\clsname{#1}}}
\def\defclsnames#1{\ifx#1\defclsnames\else\defclsname#1\expandafter\defclsnames\fi}
\def\defbbname#1{\expandafter\def\csname BB#1\endcsname{{\bm{\mathsf{#1}}}}}
\def\defbbnames#1{\ifx#1\defbbnames\else\defbbname#1\expandafter\defbbnames\fi}
\def\Set{\catname{Set}}
\providecommand{\argument}{\operatorname{-\!-}}
\DeclareOldFontCommand{\bf}{\normalfont\bfseries}{\mathbf}
\providecommand{\id}{\mathsf{id}}
\providecommand{\comp}{\mathbin{\circ}}
\providecommand{\xto}[1]{\,\xrightarrow{#1}\,}
\providecommand{\To}{\mathrel{\Rightarrow}}			           %
\providecommand{\dar}{\kern-1.2pt\operatorname{\downarrow}}	
\providecommand{\uar}{\kern-1.2pt\operatorname{\uparrow}}	
\providecommand{\fst}{\oname{fst}}
\providecommand{\snd}{\oname{snd}}
\providecommand{\inl}{\oname{inl}}
\providecommand{\inr}{\oname{inr}}
\providecommand{\ev}{\oname{ev}}
\providecommand{\pacman}[1]{}					                     %
\newcommand{\undefine}[1]{\let #1\relax}					                       %
\providecommand{\mone}{{\text{\kern.5pt\rmfamily-}\mathsf{\kern-.5pt1}}}
\def\mfix#1{\oname{#1}\@ifnextchar\bgroup\@mfix{}}	       %
\def\@mfix#1{#1\@ifnextchar\bgroup\mfix{}}			           %
\providecommand{\case}[3]{\mfix{case}{\mathbin{}#1}{of}{#2}{\kern-1pt;}{\mathbin{}#3}}
\DeclareMathSymbol{\mathinvertedexclamationmark}{\mathord}{operators}{'074}
\DeclareMathSymbol{\mathexclamationmark}{\mathord}{operators}{'041}
\newcommand{\raisedmathinvertedexclamationmark}{%
  \mathord{\mathpalette\raised@mathinvertedexclamationmark\relax}%
}
\newcommand{\raised@mathinvertedexclamationmark}[2]{%
  \raisebox{\depth}{$\m@th#1\mathinvertedexclamationmark$}%
}
\newcommand{\skitermu}{\Lambda_{\mathsf{\tiny CL}}}
\newcommand{\appp}{\mathsf{app}}
\newcommand{\xCL}{\textbf{xCL}\xspace}
\newcommand{\mybar}[3]{%
  \mathrlap{\hspace{#2}\overline{\scalebox{#1}[1]{\phantom{\ensuremath{#3}}}}}\ensuremath{#3}
}
\newcommand{\barB}{\mybar{0.6}{1.5pt}{B}}
\newcommand{\barSigma}{\mybar{0.9}{0pt}{\Sigma}}
\newcommand{\RelCat}[2][]{\mathbf{Rel}_{#1}(#2)}
\newcommand{\cprd}{\lesssim^{\raisebox{-1pt}{\scriptsize $O$}}}
\newcommand{\contpre}{\lesssim^{\mathsf{ctx}}}
\newcommand{\appsim}{\lesssim^{\mathsf{app}}}
\newcommand{\logrel}{\mathcal{L}}
\newcommand{\cbpvcat}{(\Set^{\fset/{\Tyv}})^{\Tyl}}
\newcommand{\dom}{{\mathrm{dom}}}
\newcommand{\up}{\oname{up}}
\newcommand{\Pt}{V}
\newcommand{\Pow}{\mathcal{P}}
\newcommand{\xTo}{\xRightarrow}
\newcommand{\qqand}{\qquad\text{and}\qquad}
\newcommand{\dleq}[1]{{\rotatebox{#1}{$\preceq$}}}
\newcommand{\dgeq}[1]{{\rotatebox{#1}{$\succeq$}}}
\newcommand{\under}[1]{\lvert#1\rvert}
\renewcommand{\L}{\mathcal{L}}
\newcommand{\var}{\mathsf{var}}
\newcommand{\Sigmas}{\Sigma^{\star}}
\newcommand{\ar}{\mathsf{ar}}
\newcommand{\seq}{\subseteq}
\newcommand{\ol}{\overline}
\newcommand{\outl}{\mathsf{outl}}
\newcommand{\outr}{\mathsf{outr}}
\providecommand{\C}{}
\renewcommand{\C}{{\mathbb{C}}}
\renewcommand{\id}{{\mathsf{id}}}
\renewcommand{\Nat}{\mathds{N}}
\newcommand{\f}{\oname{f}}
\newcommand{\takeout}[1]{\empty}
\newcommand{\ini}{\iota}
\renewcommand{\rho}{\varrho}
\newcommand{\opp}{\mathsf{op}}
\newcommand{\pullbackangle}[2][]{\arrow[phantom,to path={
                     -- ($ (\tikztostart)!1cm!#2:([xshift=8cm]\tikztostart) $)
                        node[anchor=west,pos=0.0,rotate=#2,
                        inner xsep = 0]
                        {\begin{tikzpicture}[minimum
                        height=1mm,baseline=0,#1]
    \draw[-] (0,0) -- (.5em,.5em) -- (0,1em);
                        \end{tikzpicture}}}]{}}
\renewcommand{\inference}[2]{\infer{~#2~}{~#1~}}
\setlist[enumerate,1]{label=(\arabic*),font=\normalfont,align=left,leftmargin=0pt,labelindent=0pt,listparindent=\parindent,labelwidth=0pt,itemindent=!,topsep=2pt,parsep=0pt,itemsep=2pt,start=1}
\setlist[enumerate,2]{label=(\alph*),font=\normalfont,labelindent=*,leftmargin=*,start=1}
\setlist[itemize]{labelindent=*,leftmargin=*}
\setlist[description]{labelindent=*,leftmargin=*,itemindent=-1 em}
\renewcommand{\comp}{\cdot}
\renewcommand{\c}{\colon}
\newcommand{\vccomp}{\mathbin{\raisebox{1pt}{\scalebox{.6}{$\RIGHTcircle$}}}}
\newcommand{\cvcomp}{\mathbin{\raisebox{1pt}{\scalebox{.6}{$\LEFTcircle$}}}}
\newcommand{\cccomp}{\mathbin{\bullet}}
\newcommand{\vvcomp}{\mathbin{\circ}}
\newcommand{\pushright}[1]{\ifmeasuring@#1\else\omit\hfill$\displaystyle#1$\fi\ignorespaces}
\newcommand{\pushleft}[1]{\ifmeasuring@#1\else\omit$\displaystyle#1$\hfill\fi\ignorespaces}
\tikzstyle{shiftarr}=[
\tikzset{
    commutative diagrams/.cd,
    arrow style=tikz,
    diagrams={>=stealth},
    row sep=large,
    column sep = huge
}
\theoremstyle{definition}
\newtheorem{rem}[theorem]{Remark} %
\newcommand\tsup[2][2]{%
 \def\useanchorwidth{T}%
  \ifnum#1>1%
    \stackon[-1.3ex]{\tsup[\numexpr#1-1\relax]{#2}}{\scalebox{2}[1]{$\mathchar"307E$}\kern-.5pt}%
  \else%
    \stackon[-1ex]{#2}{\scalebox{2}[1]{$\mathchar"307E$}\kern-.5pt}%
  \fi%
}
\renewcommand\hat[1]{\hstretch{.71}{\widehat{\hstretch{1.4}{#1}}}}
\numberwithin{equation}{section}
\renewcommand{\xto}[1]{\mathrel{\raisebox{-.75pt}{$\xrightarrow{\;\smash{\raisebox{-1.5pt}{{\scriptsize $#1$}}\;}}$}}}
\renewcommand{\xTo}[1]{\mathrel{\raisebox{-.75pt}{$\xRightarrow{\;\smash{\raisebox{-1.25pt}{{\scriptsize $#1$}}\;}}$}}}
\let\xmpsto=\xmapsto
\renewcommand{\xmapsto}[1]{\xmpsto{~#1~}}
\DeclareFontFamily{U}{mathc}{}
\DeclareFontShape{U}{mathc}{m}{it}{<->s*[1.03] mathc10}{}
\DeclareMathAlphabet{\morph}{U}{mathc}{m}{it}
\renewcommand{\id}{\operatorname{\morph{id}}}
\renewcommand{\inl}{\operatorname{\morph{inl}}}
\renewcommand{\inr}{\operatorname{\morph{inr}}}
\renewcommand{\fst}{\operatorname{\morph{fst}}}
\renewcommand{\snd}{\operatorname{\morph{snd}}}
\renewcommand{\outl}{\operatorname{\morph{outl}}}
\renewcommand{\outr}{\operatorname{\morph{outr}}}
\newcommand{\linl}{\oname{inl}}
\newcommand{\linr}{\oname{inr}}
\newcommand{\lfst}{\oname{fst}}
\newcommand{\lsnd}{\oname{snd}}
\def\monoto{\rightarrowtail}
\renewcommand{\Nat}{\mathbb{N}}
\newcommand{\fset}{\mathbb{F}}
\newcommand{\gcat}{\mathbb{C}}
\newcommand{\mS}{{\mu\Sigma}}
\newcommand{\wSigma}{
	\mathchoice
		{\boldsymbol{\Sigma}\kern-.61em{\boldsymbol\Sigma}}
		{\boldsymbol{\Sigma}\kern-.61em{\boldsymbol\Sigma}}
		{\boldsymbol{\Sigma}\kern-.45em{\boldsymbol\Sigma}}
		{\boldsymbol{\Sigma}\kern-.35em{\boldsymbol\Sigma}}
}
\newcommand{\app}{\,}
\newcommand{\twoc}{\mathbf{2}}
\newcommand{\zero}{\mathsf{v}}
\newcommand{\one}{\mathsf{c}}
\newcommand{\CLFT}{\mathbf{G}}
\newcommand{\skif}{\textbf{xCL}$_{\mathrm{fg}}$\xspace}
\newcommand{\cbpv}{\textbf{CBPV}\xspace}
\newcommand{\wt}{\widetilde}
\theoremstyle{definition}
\newtheorem{notation}[theorem]{Notation}
\newcommand{\iimg}[2]{#1{}^\star[#2]}
\newcommand\copf{\mathord{\amalg}}
\newcommand\return[1]{[#1]}
\newcommand{\arty}[2]{#1 \rightarrowtriangle #2} %
\newcommand{\toexpr}[3]{#1\,\mathsf{to}\,#2\,\mathsf{in}\,#3}
\newcommand{\toexprt}[5]{#1\,\mathsf{to}_{#4}\,#2\,\mathsf{in}_{#5}\,#3}
\newcommand{\pred}[1]{\overset{#1}{\rightarrowtail}}
\newcommand{\utype}{\mathsf{unit}}
\newcommand{\vtype}{\varphi}
\newcommand{\ctype}{\kappa}
\newcommand{\type}{\tau}
\newcommand{\Ty}{\mathsf{Ty}}
\newcommand{\Tyl}{\mathsf{Ty}}
\newcommand{\Tyv}{\Phi}
\newcommand{\Tyc}{\mathsf{K}}
\begin{document}\allowdisplaybreaks

\title[Abstract Operational Methods for Call-by-Push-Value]{Abstract Operational
  Methods for Call-by-Push-Value}         %

\author{Sergey Goncharov}
\orcid{0000-0001-6924-8766}             %
\affiliation{
  \department{School of Computer Science}              %
  \institution{University of Birmingham} %
  \city{Birmingham}
  \country{UK}                    %
}
\email{s.goncharov@bham.ac.uk}          %

\author{Stelios Tsampas}
\orcid{0000-0001-8981-2328}             %
\affiliation{
  \institution{Friedrich-Alexander-Universität Erlangen-Nürnberg}            %
  \city{Erlangen}
  \country{Germany}                    %
}
\email{stelios.tsampas@fau.de}          %

\author{Henning Urbat}
\orcid{0000-0002-3265-7168}             %
\affiliation{
  \institution{Friedrich-Alexander-Universität Erlangen-Nürnberg}            %
  \city{Erlangen}
  \country{Germany}                    %
}
\email{henning.urbat@fau.de}          %

\begin{abstract}
Levy's call-by-push-value is a comprehensive programming paradigm
that combines elements from functional and imperative programming, supports
computational effects and subsumes both
call-by-value and call-by-name evaluation strategies. In the present work, we develop modular methods to reason about
program equivalence in call-by-push-value, and in fine-grain
call-by-value, which is a popular lightweight call-by-value sublanguage of the
former. Our approach is based on the fundamental observation that
presheaf categories of \emph{sorted sets} are suitable universes to model
call-by-(push)-value languages, and that natural, coalgebraic notions of program
equivalence such as applicative similarity and logical relations can be
developed within. Starting from this observation, we formalize fine-grain
call-by-value and call-by-push-value in the \emph{higher-order abstract
  GSOS} framework, reduce their key congruence properties
to simple syntactic conditions by leveraging existing theory and
argue that introducing changes to either language incurs minimal proof overhead.

\end{abstract}

\begin{CCSXML}
  <ccs2012>
  <concept>
  <concept_id>10003752.10010124.10010131.10010137</concept_id>
  <concept_desc>Theory of computation~Categorical semantics</concept_desc>
  <concept_significance>500</concept_significance>
  </concept>
  <concept>
  <concept_id>10003752.10010124.10010131.10010134</concept_id>
  <concept_desc>Theory of computation~Operational semantics</concept_desc>
  <concept_significance>500</concept_significance>
  </concept>
  </ccs2012>
\end{CCSXML}

\ccsdesc[500]{Theory of computation~Categorical semantics}
\ccsdesc[500]{Theory of computation~Operational semantics}
\keywords{Higher-order Abstract GSOS, Categorical semantics, Call-by-push-value}
\maketitle

\section{Introduction}

Call-by-push-value (CBPV), the subsuming programming paradigm introduced by~\citet{DBLP:phd/ethos/Levy01,DBLP:journals/siglog/Levy22},
has been gradually capturing the interest of researchers
in programming languages. Conceptually, CBPV is based on the slogan
``a value is, a computation
does''. What sets CBPV apart from other models of computation based
on the $\lambda$-calculus is the broadness of its features. For
example, CBPV is able to model computational effects such as exceptions and
non-determinism under the same roof, and both the call-by-name (CBN) and
call-by-value (CBV) $\lambda$-calculus can be faithfully embedded into CBPV.
Due to the latter fact, CBPV is being considered as a useful, formal
\emph{intermediate language} in compilation
chains~\cite{DBLP:conf/esop/McDermottM19, DBLP:journals/pacmpl/KavvosMLD20,
  DBLP:journals/corr/abs-1805-05400}.

The emergence of CBPV as an intermediate language and as a target for formal
verification has precipitated the need to better underpin its mathematical
foundations and also establish a robust theory of program equivalence.
Up to this point, work on program equivalence in CBPV has
been rather limited~\cite{DBLP:conf/itp/RizkallahGZ18,forstercbpv,
  DBLP:conf/esop/McDermottM19}. It applies to specific instances of CBPV,
requires instantiating existing complex methods from the ground up, and typically
relies on complex compatibility lemmas.
Moreover, widespread operational reasoning methods such as applicative
(bi)simulations~\cite{Abramsky:lazylambda,DBLP:journals/iandc/Howe96,pitts_2011}
have not yet been realized in CBPV. We attribute this to the
inherent complexity of CBPV, as well as the largely empirical and delicate
nature of many of the established operational reasoning methods.

The recently introduced framework of \emph{Higher-Order Mathematical Operational
  Semantics}~\cite{gmstu23}, or \emph{higher-order abstract GSOS}, provides
an abstract, categorical approach to the operational semantics of higher-order
languages that extends the well-known \emph{abstract GSOS} framework by \citet{DBLP:conf/lics/TuriP97}. Notably, higher-order abstract GSOS enabled the
development of Howe's method~\cite{UrbatTsampasEtAl23} and (step-indexed)
logical relations~\cite{gmstu24,gmtu24lics} at a high level of generality and in a
language-independent manner. Thus, for all languages that
can be modeled within the abstract framework, effective sound methods for
reasoning about contextual equivalence can be obtained
with neither the need to develop a proof strategy from scratch nor the reliance on laborious
compatibility lemmas. In principle, these methods appear to be the ideal test-bed
to develop efficient reasoning techniques for CBPV. However,
the extent to which higher-order abstract GSOS is suitable for
call-by-value evaluation has been unclear thus far, and the framework has been applied
only to call-by-name languages.

\paragraph{Contributions}
In this paper, we systematically develop a theory of program equivalence for CBPV languages, as well as the related class of {fine-grain call-by-value (FGCBV)} languages~\cite{LevyPowerEtAl03}, based on the higher-order abstract GSOS framework. Our contribution is two-fold:

First, we demonstrate how to model FGCBV and CBPV languages in the abstract GSOS
framework by presenting their operational rules as (di)natural transformations over suitably sorted presheaves, with the sorts providing an explicit distinction between values and computations. In particular, this establishes higher-order abstract GSOS as an eligible setting for call-by-value semantics.

 Second, building on the existing theory of operational methods in higher-order abstract GSOS \cite{UrbatTsampasEtAl23,gmtu24lics}, we derive notions of applicative similarity and (step-indexed) logical relations for both FGCBV and CBPV and prove them to be sound for the contextual preorder. Thanks to generic soundness results available in the abstract framework, the soundness proof for the above notions boils down to checking a simple condition on the rules of the language. We stress that due to the complexities of FGCBV and CBPV, deriving our soundness results from scratch would be a highly non-trivial and laborious task. Therefore, we regard the approach of our paper as an instructive manifestation of the power of categorical methods in operational semantics.

\paragraph{Related work}
The first formal theory of program equivalence for call-by-push-value was developed by
\cite{DBLP:conf/itp/RizkallahGZ18}. The authors introduce an untyped
version of CBPV and develop an equational theory, essentially the congruence
closure of the reduction relation, that is proven sound for contextual equivalence. They
do so by adapting the method of \emph{eager
  normal form bisimulation} by \citet{DBLP:conf/lics/Lassen05}, which is in turn
based on Howe's method~\cite{DBLP:conf/lics/Howe89,DBLP:journals/iandc/Howe96}. In addition, they formalize their results in Coq. Unlike \citet{DBLP:conf/itp/RizkallahGZ18}, our version
of CBPV is typed and, instead of redeveloping a complex method from scratch, we
make use of existing theory.

\citet{forstercbpv} provide a more extensive formalization of the operational
theory of call-by-push-value in Coq. In particular,
they formalize translations of CBV and CBN into CBPV, prove strong
normalization for a typed, effect-free CBPV and present an equational theory,
which is similar to that of \cite{DBLP:conf/itp/RizkallahGZ18} and proven sound for
contextual equivalence. As
an intermediate step, they utilize the \emph{logical
  equivalence} of \citet{10.5555/1076265}. Their
results are heavily dependent on manual compatibility lemmas and it is unclear
whether they apply to other settings.

Logical relations in the broader context of CBPV have been utilized in the work
of \citet{DBLP:conf/esop/McDermottM19}, in order to prove the correctness of
various translations. In particular, the authors extended CBPV to support
call-by-need evaluation, and developed translations from call-by-name
and call-by-need into this extended language.
Step-indexed logical relations for CBPV were used in \cite{10.1145/3290328} to
develop an operational model of Gradual Type Theory.

\section{Abstract Operational Methods}\label{sec:abstract-op-methods}
In this preliminary section, we give a self-contained overview of the theory of higher-order mathematical operational semantics as developed in previous work. Specifically, we recall from~\cite{gmstu23} how to model higher-order languages and their small-step operational semantics in the categorical framework of higher-order abstract GSOS. Moreover, we explain how to derive on that level of abstraction notions of applicative similarity~\cite{UrbatTsampasEtAl23} and (step-indexed) logical relation~\cite{gmtu24lics} that are sound for the contextual preorder.  While our aim is to apply these methods to complex fine-grain call-by-value languages (\autoref{sec:fgcbv}) and call-by-push-value languages (\autoref{sec:cbpv}), the running example for illustrating the concepts of the present section is a simple untyped combinatory logic with a call-by-name semantics, called \emph{extended combinatory logic} ($\xCL$)~\cite{gmstu23}. It is a variant of the well-known \textbf{SKI} calculus~\cite{10.2307/2370619} and forms a computationally complete fragment of the untyped call-by-name $\lambda$-calculus.

\subsection{Extended Combinatory Logic}\label{sec:unary-ski}
\paragraph{Syntax} The set $\skitermu$ of $\xCL$-terms
is generated by the grammar
\[
  t,s \Coloneqq S \mid K \mid I \mid
  S'(t) \mid K'(t) \mid S''(t,s) \mid \appp(t, s).
\]
The binary operation $\appp$ corresponds to function application; we usually write $t\,s$ for $\appp(t,s)$. The standard combinators (constants) $S$, $K$, $I$ represent the $\lambda$-terms \[S=\lambda x.\,\lambda y.\,\lambda z.\, (x\app z)\app (y\app z),\qquad K=\lambda x.\, \lambda y.\, x, \qquad I=\lambda x.\, x.\] The unary operators
$S'$ and $K'$ capture application of~$S$ and~$K$, respectively, to one
 argument: $S'(t)$ behaves like $S\app t$, and $K'(t)$
behaves like $K\app t$.
Finally, the binary operator $S''$ is meant to
capture application of $S$ to two arguments: $S''(t,s)$ behaves like
$(S\app t)\app s$. In this way, the behaviour of each combinator can
be described in terms of \emph{unary} higher-order functions; for example, the
behaviour of $S$ is that of a function taking a term $t$ to $S'(t)$.
\sgnote{This is more confusing than useful. The behaviour of $t$ is $S'(t)$ with unary $S'$. 
Then what is the behaviour of $S'(t)$? What is unary about it?}

\paragraph{Semantics} The small-step operational semantics of $\xCL$ is given by the inductive rules displayed in 
\Cref{fig:skirules}, where $p,p',q,t$ range over terms in $\skitermu$.
\begin{figure*}[t]
\begin{gather*}
\frac{}{S\xto{t}S'(t)}
\qquad\qquad
\frac{}{S'(p)\xto{t}S''(p,t)}
\qquad\qquad
\frac{}{S''(p,q)\xto{t}(p\app t)\app (q\app t)}
\\[1ex]
\frac{}{K\xto{t}K'(t)}
\qquad\quad
\frac{}{K'(p)\xto{t}p}
\qquad\quad
\frac{}{I\xto{t}t}
\qquad\quad
\frac{p\to p'}{p \app q\to p' \app q}
\qquad\quad
\frac{p\xto{q} p'}{p \app q\to p'}
\end{gather*}
\caption{Operational semantics of the $\xCL$ calculus.}
\label{fig:skirules}
\end{figure*}
The rules specify a labeled transition system 
\begin{equation}\label{eq:xcl-lts}\to~
\subseteq \skitermu \product (\skitermu + \{\_\}) \product \skitermu
\end{equation} where $\{\_\}$ denotes the lack
of a transition label and the set $\skitermu$ of labels coincides
with the state space of the transition system. Unlabeled transitions $p\to p'$
correspond to \emph{reductions} (i.e.\ computation steps) and labeled transitions represent
\emph{higher-order behaviour}: a transition $p\xto{t} p_t$ indicates that the program $p$ acts as a function that outputs $p_t$ on input $t$, where $t$ is itself a program term. For instance, the term $(S\app K)\app I$ evolves as follows for every $t\in\skitermu$:
\[ (S \app K) \app I  \to  S'(K)\app I\to  S''(K,I)\xto{t} (K \app t) \app (I \app
    t)\to K'(t) \app (I \app t)\to t \, \cdots \]
Every $p\in\skitermu$ either admits a single
unlabeled transition $p\to p'$ or a family of labeled transitions
${(p\xto{t} p_t)_{t\in\skitermu}}$; thus, the transition system \eqref{eq:xcl-lts} is deterministic, and it may be presented as a map
\begin{equation}\label{eq:oper-model-xcl}
\gamma_0\colon \skitermu \to \skitermu + \skitermu^{\skitermu}
\end{equation}
given by $\gamma_0(p)=p'$ if $p\to p'$ and $\gamma_0(p)=\lambda t. p_t$ otherwise.  In the first case, we say that $p$ \emph{reduces}, and in the second case that $p$ \emph{terminates}. We shall use the following notation for $p,p',t\in \skitermu$:
\begin{itemize}
\item $p\To p'$ if there exist $n\geq 0$ and $p_0,\cdots,p_n\in \skitermu$ such that $p=p_0\to p_1\to \cdots \to p_n=p'$;
\item $p\xTo{t} p'$ if there exists $p''\in \skitermu$ such that $p\To p''\xto{t} p'$;
\item $p \Downarrow p'$ if $p$ eventually terminates in $p'$, that is, $p\To p'$ and $p'$ terminates; 
\item $p{\Downarrow}$ if $p$ eventually terminates, that is, there exists $p'\in \skitermu$ such that $p\Downarrow p'$.
\end{itemize}

\paragraph{Contextual preorder} A simple and natural approach to relating the behaviour of $\xCL$-programs (and programs of higher-order languages in general) is given by the contextual preorder~\cite{morris}. A \emph{context} $C[\cdot]$ is a $\skitermu$-term with a hole `$\cdot$'; we write $C[p]$ for the outcome of substituting the term $p$ for the hole. For instance, $C = (S\app \cdot)\app I$ is a context, and $C[K]=(S\app K)\app I$.
The \emph{contextual preorder} for $\xCL$ is the relation $\contpre\,\seq\skitermu\times \skitermu$ given by
\begin{equation}\label{eq:contpre-xcl} p\contpre q \quad\text{iff}\quad \forall C[\cdot].\, C[p]{\Downarrow} \implies C[q]{\Downarrow}.\end{equation}
Equivalently, the contextual preorder is the greatest relation $R\seq \skitermu\times \skitermu$ that (i) is adequate for termination (i.e.\ if $R(p,q)$ and $p{\Downarrow}$ then $q{\Downarrow}$), and (ii) forms a congruence, i.e.\ is respected by all operations of the language. Indeed, clearly $\contpre$ is adequate (take the empty context $C=[\cdot]$) and a congruence. Moreover, if $R$ is an adequate congruence and $R(p,q)$, then $R(C[p],C[q])$ for every context $C$ because $R$ is a congruence, hence $C[p]{\Downarrow}$ implies $C[q]{\Downarrow}$ by adequacy of $R$, and so $p\contpre q$.

To prove $p\contpre q$, it thus suffices to come up with an adequate congruence $R$ such that $R(p,q)$. There are two natural candidates for such relations: \emph{applicative similarity}~\cite{Abramsky:lazylambda} and the \emph{step-indexed logical relation}~\cite{DBLP:journals/toplas/AppelM01}.
\begin{definition}\label{def:appsim-xcl}
An \emph{applicative simulation} is a relation $R\seq \skitermu\times \skitermu$ such that, whenever $R(p,q)$,
\[
p\to p' \; \implies\; \exists q'.\, q\To q' \,\wedge\, R(p',q') \qqand
p\xto{t} p'\; \implies\; \exists q'.\, q\xTo{t} q'\,\wedge\, R(p',q').\]
\emph{Applicative similarity} $\appsim \,\seq\, \skitermu\times \skitermu$ is the greatest applicative simulation, that is, the union of all applicative simulations.
\end{definition}
Thus applicative similarity is standard weak similarity on the labeled transition system \eqref{eq:xcl-lts}.
\begin{definition}\label{def:logrel-xcl}
The \emph{step-indexed logical relation} $\logrel\seq \skitermu\times \skitermu$ is defined by $\logrel=\bigcap_{n\in \Nat} \logrel^{n}$, where the relations $\logrel^n\seq \skitermu\times \skitermu$ are given inductively as follows: For all $p,q\in \skitermu$ one has $\logrel^0(p,q)$, and moreover $\logrel^{n+1}(p,q)$ iff $\logrel^n(p,q)$ and
\begin{align*}
p\to p'\; & \implies\; \exists q'.\, q\To q' \,\wedge\, \logrel^n(p',q');\\
p \text{ terminates} \;&\implies\; \exists \ol{q}.\, q \Downarrow \ol{q} \,\wedge\, (\forall d,e.p',q'.\, \logrel^n(d,e) \wedge p\xto{d}p' \wedge \ol{q}\xto{e}q' \implies \logrel^n(p',q')). 
\end{align*}  
\end{definition}
The last clause roughly says that related functions send related inputs to related outputs. In contrast, applicative simulations require that related functions send same inputs to related outputs. The following theorem ensures that both applicative similarity and the step-indexed logical relation yield a sound proof method for the contextual preorder:

\begin{theorem}[Soundness]\label{thm:soundness-xcl}
Both $\appsim$ and $\logrel$ are adequate congruences. Hence, for all $p,q\in \skitermu$,
\[
p \appsim q \; \implies\; p\contpre q\qquad\text{and}\qquad
\logrel(p,q) \; \implies\; p\contpre q.\]
\end{theorem}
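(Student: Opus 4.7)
My plan is to reduce the theorem to the generic soundness results for higher-order abstract GSOS already cited in the excerpt, rather than attempting a direct Howe-style argument from scratch. The first step is to repackage the transition system \eqref{eq:xcl-lts} as a higher-order GSOS law in the sense of \cite{gmstu23}: the operators $S,K,I,S',K',S'',\appp$ form a signature functor $\Sigma$ on $\Set$, the codomain $X+X^X$ of $\gamma_0$ in \eqref{eq:oper-model-xcl} defines the higher-order behaviour functor $B$, and the rules of \autoref{fig:skirules} assemble into a dinatural transformation $\Sigma(\Id \times B) \tto B(\Sigma^{\star})$ of the expected shape. Verifying that the rules fit this format is a routine inspection, since every rule produces the next state from the children's states and labels without any negative premises or nested transitions.

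Once $\xCL$ is presented in this form, adequacy of $\appsim$ and $\logrel$ is short. For $\appsim$, suppose $p\appsim q$ and $p\Downarrow p_0$; an induction on the length of $p\To p_0$ using the unlabeled clause of \Cref{def:appsim-xcl} yields $q\To q_0$ with $p_0\appsim q_0$, and because $p_0$ terminates, a single application of the labeled clause on any input $t$ (together with the fact that $\gamma_0$ is single-valued) forces $q_0$ to reach a terminating state after further reductions, hence $q\Downarrow$. Adequacy of each $\logrel^n$ for $n\geq 1$ is analogous by unfolding \Cref{def:logrel-xcl}, and adequacy of $\logrel=\bigcap_n \logrel^n$ follows.

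The main obstacle is congruence, and this is precisely where the generic theory pays off. I would invoke the abstract soundness theorem for applicative similarity from \cite{UrbatTsampasEtAl23}, which establishes a categorical version of Howe's method for any higher-order GSOS law satisfying its (mild) strength hypothesis, and the corresponding result for step-indexed logical relations from \cite{gmtu24lics}. Both hypotheses are satisfied by the $\xCL$-law constructed above, so instantiating these theorems immediately yields that $\appsim$ and $\logrel$ are congruences with respect to all seven operators of the signature, bypassing the tedious manual compatibility lemmas that dominate conventional proofs.

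Finally, the implication to the contextual preorder follows from the characterization of $\contpre$ as the greatest adequate congruence, already noted in the excerpt after \eqref{eq:contpre-xcl}. Concretely, if $R$ is an adequate congruence and $R(p,q)$, then $R(C[p],C[q])$ for every context $C$ by congruence, and then $C[p]\Downarrow$ implies $C[q]\Downarrow$ by adequacy, so $p\contpre q$. Applying this with $R=\appsim$ and $R=\logrel$ (each shown above to be an adequate congruence) delivers both soundness claims.
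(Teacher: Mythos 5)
Your overall strategy---recast $\xCL$ as a higher-order GSOS law and invoke the generic congruence theorems of \cite{UrbatTsampasEtAl23} and \cite{gmtu24lics}---is exactly the route the paper takes (it defers the proof of \autoref{thm:soundness-xcl} to the instantiation of \autoref{thm:congruence-ho-abstract-gsos} described in the example following it). The adequacy argument and the reduction to ``greatest adequate congruence'' are also fine. However, the instantiation as you describe it does not go through, for two concrete reasons. First, you take the behaviour functor to be the deterministic $B_0(X,Y)=Y+Y^X$ read off from $\gamma_0$. The abstract theorem is parametric in a \emph{weakening} $\widetilde{\gamma}\colon \mS\to B(\mS,\mS)$ and a relation lifting $\barB$, and the weak operational model $\widetilde{\gamma}(p)=\{p'\mid p\To p'\}\cup\{\gamma_0(p')\mid p\Downarrow p'\}$ is genuinely multi-valued, so it is not a coalgebra for $B_0$ at all. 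One must pass to the nondeterministic bifunctor $\Pow\cdot B_0$ (extending the law $\rho^0$ elementwise to a law $\rho$ over $\Pow\cdot B_0$) and choose the lifting $\arP\cdot\barB_0$ built from the left-to-right Egli--Milner lifting; conditions \ref{asm:2} and \ref{asm:4} of the abstract theorem are only met because of the up-closure of $\arP$. Without this, the abstract notions of applicative similarity and logical relation do not specialize to \Cref{def:appsim-xcl} and \Cref{def:logrel-xcl}, which are phrased in terms of weak transitions.

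Second, you dismiss the hypotheses of the generic theorems as a ``mild strength hypothesis'' satisfied by routine inspection. The decisive hypothesis is the lax bialgebra condition \ref{asm:6}, which the paper identifies as the language-specific heart of the result: one must check that every rule of \autoref{fig:skirules} remains sound when its premise transitions are replaced by weak transitions $\To$ and $\xTo{t}$ (e.g.\ that $t\To t'$ entails $t\app s\To t'\app s$, and that $t\xTo{s}t'$ entails $t\app s\To t'$). This verification is short but it is not vacuous, and it is precisely the step your proposal omits. Absence of negative premises is relevant to \ref{asm:5}, not to \ref{asm:6}. With the nondeterministic law, the Egli--Milner lifting, the weakening $\widetilde{\gamma}$, and the check of \ref{asm:6} supplied, your argument coincides with the paper's.
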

While adequacy is easy to verify, the congruence property is non-trivial. In the case of $\appsim$, it is typically established via a version of Howe's method~\cite{DBLP:conf/lics/Howe89,DBLP:journals/iandc/Howe96}. The congruence proof for $\logrel$ is structurally simpler and achieved by induction on the number of steps, but still requires tedious case distinctions along the constructors of the language and their respective operational rules. The abstract approach presented next puts soundness results such as \autoref{thm:soundness-xcl} under the roof of a general categorical framework and in this way significantly reduces the proof burden.  

\subsection{Higher-Order Abstract GSOS}\label{sec:ho-gsos}
The language $\xCL$ exemplifies the familiar style of introducing a higher-order language: the syntax is specified by a grammar that inductively generates the set of program terms, and the small-step operational semantics is given by a transition system on program terms specified by a set of inductive operational rules. The categorical framework of higher-order abstract GSOS provides a high-level perspective on this approach. In the following, we first review the necessary background from category theory~\cite{awodey10, mac2013categories} and then show how to model the syntax, behaviour and operational rules of higher-order languages in the abstract framework.

\paragraph*{Notation} For objects
$X_1, X_2$ of a category $\C$, we write $X_1+X_2$ for the coproduct, $\inl\c X_1\to X_1+X_2$ and
$\inr\c X_2\to X_1+X_2$ for its injections, $[g_1,g_2]\c X_1+X_2\to X$ for the copairing of morphisms $g_i\colon X_i\to X$,
$i=1,2$, and $\nabla=[\id_X,\id_X]\colon X+X\to X$ for the codiagonal. We denote the product by $X_1\times X_2$ and
the pairing of morphisms $f_i\c X\to X_i$, $i=1,2$, by $\langle f_1, f_2\rangle\c
X\to X_1\times X_2$. A \emph{relation} on $X\in \C$ is a subobject of $X\times X$, represented by a monomorphism $\langle \outl_{R}, \outr_{R} \rangle \colon R
\pred{} X \times X$; the projections $\outl_R$ and $\outr_R$ are usually left implicit.
The \emph{coslice category} $V/\gcat$,
where $V\in \gcat$, has as objects all \emph{$V$-pointed objects}, i.e.\ pairs $(X,p_X)$ consisting of an object $X\in \gcat$
and a morphism $p_X\c V\to X$, and a morphism from $(X,p_X)$ to
$(Y,p_Y)$ is a morphism $f\c X\to Y$ of $\gcat$ such that
$p_Y = f\comp p_X$. Finally, we denote by $\Set^{\C}$ the category of (covariant) presheaves over a small category $\C$ and natural transformations.

\paragraph*{Algebras in categories} Algebraic structures admit a natural categorical abstraction in the form of {functor algebras}. Given an endofunctor $\Sigma$ on a category $\C$, a \emph{$\Sigma$-algebra}
is a pair $(A,a)$ consisting of an object~$A$ (the \emph{carrier} of the
algebra) and a morphism $a\colon \Sigma A\to A$ (its \emph{structure}). A
\emph{morphism} from $(A,a)$ to an $\Sigma$-algebra $(B,b)$ is a morphism
$h\colon A\to B$ of~$\gcat$ such that $h\comp a = b\comp \Sigma h$.

A \emph{congruence} on $(A,a)$ is a relation $R\monoto A\times A$ that can be equipped with a $\Sigma$-algebra structure $r\colon \Sigma R\to R$ such that both projections $\outl_R,\outr_R\colon (R,r)\to (A,a)$ are $\Sigma$-algebra morphisms. Note that we do not require congruences to be equivalence relations.

A \emph{free $\Sigma$-algebra} on an object $X$ of $\gcat$ is a
$\Sigma$-algebra $(\Sigma^{\star}X,\iota_X)$ together with a morphism
$\eta_X\c X\to \Sigma^{\star}X$ of~$\gcat$ such that for every algebra $(A,a)$
and every morphism $h\colon X\to A$ of $\gcat$, there exists a unique
$\Sigma$-algebra morphism $h^\star\colon (\Sigma^{\star}X,\iota_X)\to (A,a)$
such that $h=h^\star\comp \eta_X$; the morphism $h^\star$ is called the \emph{free
  extension} of $h$. If the category $\C$ is cocomplete and $\Sigma$ is finitary (i.e.\ preserves directed colimits), then free algebras
exist on every object, and their formation gives rise to a monad
$\Sigma^{\star}\colon \gcat\to \gcat$, the \emph{free monad} generated by~$\Sigma$.
For every $\Sigma$-algebra $(A,a)$ we can derive an Eilenberg-Moore algebra
$\hat{a} \colon \Sigma^{\star} A \to A$ whose structure is the free extension of $\id_A\c A\to A$. We write $(\mu \Sigma,\ini)=(\Sigmas 0,\ini_0)$ for the \emph{initial algebra}, viz.\ the free algebra on the initial object $0$.

The standard instantiation of the above concepts is given by algebras for a
signature. Given a set $S$ of \emph{sorts}, an \emph{$S$-sorted algebraic signature} consists of a set~$\Sigma$
of \emph{operation symbols} and a map $\ar\colon \Sigma\to S^{\star}\times S$
associating to every $\f\in \Sigma$ its \emph{arity}. We write $\f\colon s_1\times\cdots\times s_n\to s$ if $\ar(\f)=(s_1,\ldots,s_n,s)$, and $\f\colon s$ if $n=0$ (in which case $\f$ is called a \emph{constant}). Every
signature~$\Sigma$ induces an endofunctor on the category $\Set^S$ of $S$-sorted sets and $S$-sorted functions, denoted by the
same letter $\Sigma$, defined by $(\Sigma X)_s = \coprod_{\f\colon s_1\cdots s_n\to s} \prod_{i=1}^n X_{s_i}$ for $X\in \Set^S$ and $s\in S$. (Functors of this form are called \emph{polynomial functors}.) An algebra for the functor $\Sigma$ is
precisely an algebra for the signature $\Sigma$, viz.~an $S$-sorted set $A=(A_s)_{s\in S}$ equipped with an operation $\f^A\colon \prod_{i=1}^n A_{s_i}\to A_s$ for every $\f\colon s_1\times\cdots \times s_n\to s$ in $\Sigma$. Morphisms of $\Sigma$-algebras are $S$-sorted
maps respecting all operations. 

 A {congruence} on a $\Sigma$-algebra $A$ is a relation ${R\seq A\times A}$ (i.e.\ a family of relations $R_s\seq A_s\times A_s$, $s\in S$) compatible with all operations of $A$: for each $\f\colon s_1\times\cdots \times s_n\to s$ and elements $x_i,y_i\in A_{s_i}$ such that $R_{s_i}(x_i,y_i)$ ($i=1,\ldots,n$), one has $R_s(\f^A(x_1,\ldots,x_n), \f^A(y_1,\ldots,y_n))$.

Given an $S$-sorted set $X$ of
variables, the free algebra $\Sigmas X$ is the $\Sigma$-algebra of
$\Sigma$-terms with variables from~$X$; more precisely, $(\Sigmas X)_s$ is inductively defined by $X_s\seq (\Sigmas X)_s$ and $\f(t_1,\ldots,t_n)\in (\Sigmas X)_s$ for all ${\f\colon s_1\times\cdots \times s_n\to s}$ and $t_i\in (\Sigmas X)_{s_i}$.
In particular, the initial algebra~$\mu \Sigma=\Sigmas 0$ is
formed by all \emph{closed terms} of the signature. We write $t\colon s$ for $t\in (\mS)_s$. For every
$\Sigma$-algebra $(A,a)$, the induced Eilenberg-Moore algebra
$\hat{a}\colon \Sigmas A \to A$ is given by the map that evaluates terms in~$A$.

\paragraph*{Syntax} In higher-order abstract GSOS, the syntax of a higher-order language is modeled by a finitary endofunctor of the form
\[ \Sigma=V+\Sigma'\colon \C\to \C \]
on a presheaf category\footnote{The higher-order abstract GSOS framework works with abstract categories $\C$; in the present paper, we restrict to presheaf categories for economy of presentation, as this suffices to capture the applications in \autoref{sec:fgcbv} and \ref{sec:cbpv}.} $\C=\Set^{\C_0}$, where $\Sigma'\colon \C\to \C$ and $V\in \C$ is an object of \emph{variables}. It follows that $\Sigma$ generates a free monad $\Sigmas$. In particular, $\Sigma$ has an initial algebra $(\mS,\ini)$, which we think of as the object of programs. The requirement that $\Sigma=V+\Sigma'$ explicitly distinguishes programs that are variables: $\mS$ is a $V$-pointed object with point
\begin{equation}\label{eq:pointed-ms} p_{\mS}\colon V\xto{\inl}V+\Sigma'(\mS) = \Sigma(\mS)\xto{\ini} \mS.\end{equation}\par
\begin{example}
 For $\xCL$, we take the polynomial functor on $\C=\Set$ corresponding to the single-sorted signature $\Sigma=\{\,S/0,K/0,I/0,S'/1,K'/1,S''/2,\appp/2\,\}$,
with arities as indicated, and set $V=\emptyset$ since combinatory logics do not feature variables. The initial $\Sigma$-algebra is carried by the set $\skitermu$ of $\xCL$-terms. For languages with variables and binders (\autoref{sec:cbpv}), we will consider categories $\C$ of presheaves over variable contexts and syntax functors corresponding to binding signatures.
\end{example}

\paragraph*{Behaviour} The type of small-step behaviour exposed by a higher-order language is modeled by a mixed-variance bifunctor
\[ B\colon \C^\opp\times \C\to \C\]
such that the intended operational model of the language forms a \emph{higher-order coalgebra}
\begin{equation}\label{eq:op-model}
\gamma\colon \mS\to B(\mS,\mS)
\end{equation}
on the object $\mS$ of program terms.

\begin{example} For $\xCL$, we choose the behaviour bifunctor $B_0$ on $\Set$ given by $B_0(X,Y)=Y+Y^X$. A higher-order coalgebra $c\colon X\to B(X,X)$ is a deterministic transition system with states $X$ where every state $x$ either has a unique unlabeled transition $x\to x'$ (where $x'=\gamma(x)\in X$) or a unique labeled transition $x\xto{e} x_e$ for every $e\in X$ (where $\gamma(x)\in X^X$ and $x_e=\gamma(x)(e)$). For instance, the transition system \eqref{eq:oper-model-xcl} on $\skitermu$ forms a higher-order coalgebra for $B_0$.
\end{example}

\paragraph{Operational rules} The core idea behind higher-order abstract GSOS is to represent small-step operational rules such as those of \autoref{fig:skirules} as \emph{higher-order GSOS laws}, a form of (di)natural transformation that distributes syntax over higher-order behaviour. Formally, a \emph{($V$-pointed) higher-order GSOS law} of $\Sigma\colon \C\to \C$ over $B\colon \C^\opp\times \C\to \C$ is given by a family of morphisms
  \begin{align}\label{eq:ho-gsos-law}
    \rho_{(X,p_X),Y} \c \Sigma (X \times B(X,Y))\to B(X, \Sigma^\star (X+Y))
  \end{align}
  dinatural in $(X,p_X)\in \Pt/\gcat$ and natural in $Y\in \gcat$. (We write $\rho_{X,Y}$ for $\rho_{(X,p_X),Y}$ if the point $p_X$ is clear from the context.) The intention is that a higher-order GSOS law encodes the operational rules of a given language into a parametrically polymorphic family of functions: given an operator $\f$ of the language and the one-step behaviours of its operands $t_1,\ldots,t_n$, the map $\rho_{X,Y}$ specifies the one-step behaviour of the program $\f(t_1,\ldots,t_n)$, i.e.\ the terms it transitions into next. The (di)naturality of $\rho_{X,Y}$ ensures that the rules are parametrically polymorphic, that is, they do not inspect the structure of their meta-variables; cf.~\cite[Prop.~3.5]{gmstu23}.
\begin{example}\label{ex:ho-gsos-law-xcl}
For $\xCL$, we encode the rules of \autoref{fig:skirules} into a higher-order GSOS law
\[
\rho^0_{X,Y}\colon \Sigma(X\times (Y+Y^X))\to \Sigmas(X+Y)+(\Sigmas(X+Y))^X \qquad (X,Y\in \Set)
\]
where the map $\rho^0_{X,Y}$ is given by
\begin{align*}
S \;& \mapsto\; \lambda x. S'(x) & K \;& \mapsto\; \lambda x. K'(x)\\
S'(x,-) \;& \mapsto\; \lambda x'. S''(x,x') & K'(x,-) \;& \mapsto\; \lambda x'. x \\ 
S''((x,-),(x',-)) \;& \mapsto\; \lambda x''. \appp(\appp(x,x''),\appp(x',x'')) & I \;& \mapsto\; \lambda x.x\\
\appp((x,y),(x',-)) \;& \mapsto\; \appp(y,x') &
\appp((x,f),(x',-)) \;& \mapsto\; f(x'),
\end{align*}
for all $x,x'\in X$, $y\in Y$ and $f\in Y^X$. Note that despite $\xCL$ being a deterministic language, applicative simulations (\autoref{def:appsim-xcl}) involve the inherently \emph{non}deterministic concept of weak transition: a given program $p$ may admit multiple (even infinitely many) weak transitions $p\To p'$. To capture this phenomenon, we
 extend the above law $\rho^0$ to a higher-order GSOS law 
\[
\rho_{X,Y}\colon \Sigma(X\times \Pow(Y+Y^X))\to \Pow(\Sigmas(X+Y)+(\Sigmas(X+Y))^X) \qquad (X,Y\in \Set)
\]
of $\Sigma$ over the ``nondeterministic'' bifunctor $\Pow\cdot B_0$, where $\Pow\colon \Set\to\Set$ is the power set functor. This is achieved by applying the law $\rho^0$ element-wise; for instance, for $x,x'\in X$ and $S\in \Pow(Y+Y^X)$,
\[\appp((x,S),(x',-)) \,\mapsto\; \{\, \appp(y,x') \mid y\in S\cap Y \,\} \cup \{\, f(x') \mid f\in S\cap Y^X \,\}. \]
\end{example}

Every higher-order GSOS law $\rho$ \eqref{eq:ho-gsos-law} induces a canonical \emph{operational model}: the higher-order coalgebra $\gamma\colon \mS\to B(\mS,\mS)$ defined by primitive recursion~\cite[Prop.~2.4.7]{DBLP:books/cu/J2016} as the unique morphism making the diagram in \autoref{fig:gamma} commute. Informally, $\gamma$ is the transition system that runs programs according to the operational rules encoded by the given law $\rho$. 

\begin{example} Since the higher-order GSOS law $\rho^0$ (\autoref{ex:ho-gsos-law-xcl}) simply encodes the rules of $\xCL$, its operational model $\gamma_0\colon \skitermu \to \skitermu + \skitermu^{\skitermu}$ defined by \autoref{fig:gamma} coincides with the transition system determined by the rules in \autoref{fig:skirules}. The operational model $\gamma\colon \skitermu \to \Pow(\skitermu + \skitermu^{\skitermu})$ of $\rho$ is $\gamma_0$ composed with the map $u\mapsto \{u\}$. Thus $\gamma$ essentially coincides with $\gamma_0$; this is unsurprising since $\rho$ is merely a formal modification of $\rho_0$ that does not add any new information.
\end{example} 

\begin{figure*}
\begin{tikzcd}[column sep=3em]
\Sigma(\mS) \ar{rrr}{\iota} \ar{d}[swap]{\Sigma\langle \id, \gamma\rangle} & & & \mS \ar[dashed]{d}{\gamma} \\
\Sigma(\mS\times B(\mS,\mS)) \ar{r}{\rho_{\mS,\mS}} & B(\mS,\Sigmas(\mS+\mS)) \ar{r}{B(\id,\Sigmas \nabla)} & B(\mS,\Sigmas(\mS)) \ar{r}{B(\id,\hat\ini)} & B(\mS,\mS) 
\end{tikzcd}
\caption{Operational model of a higher-order GSOS law}\label{fig:gamma}
\end{figure*}

\subsection{Relation Liftings} In order to model notions of applicative simulation and logical relation in the abstract setting, we consider relation liftings of the underlying syntax and behaviour functors. For that purpose, we first turn relations in $\C$ into a suitable category. A \emph{morphism} from a relation $R
\pred{} X \times X$ to another relation $S
\pred{} Y \times Y$ is given by a morphism $f \colon X \to Y$ in $\C$
such that there exists a (necessarily unique) morphism
$R \to S$ rendering the square below commutative:
\[
\begin{tikzcd}[column sep=5em]
	R \ar[d,tail,"{\langle \outl_{R}, \outr_{R} \rangle}"'] \ar[r,dashed] & S
  \ar[d,tail,"{\langle \outl_{S}, \outr_{S} \rangle}"]\\
	X \times X \ar[r,"f \times f"] & Y \times Y
\end{tikzcd}
\]
We write $\Rel(\C)$ for the category of relations in $\C$ and their morphisms. Since $\C=\Set^{\C_0}$, a relation on $X\in \C$ can be presented as a family of set-theoretic relations $R=(R(C)\seq X(C)\times X(C))_{C\in\C_0}$ such that $R(C)(x,x')$ implies $R(C')(Xf(x),Xf(x'))$ for all $f\colon C\to C'$ in $\C_0$.  For every $X\in \C$, the set $\Rel_X(\C)$ of relations on $X$ forms a complete lattice ordered by componentwise inclusion; we denote this order by $\leq$. Moreover, we denote by $\Delta$ the identity relation $\langle \id,\id\rangle\colon X \monoto X\times X$, and by $R\bullet S$ the (left-to-right) composition of relations $R,S\monoto X\times X$; it is computed componentwise as ordinary composition of relations in $\Set$. 
A \emph{relation lifting} of an endofunctor $\Sigma \colon \C \to \C$ is a functor $\overline \Sigma \colon  \RelCat \C \to \RelCat \C$ making the first diagram below commute, where $\under{-}$ denotes the forgetful functor sending $R \pred{} X \times X$ to $X$. Similarly, a \emph{relation lifting} of a bifunctor $B\colon \C^\opp\times \C\to \C$ is a bifunctor $\barB$ making the second diagram commute.
\begin{equation*}
	\label{eq:liftingRel}
	\begin{tikzcd}
		\RelCat{\C} \ar{d}[swap]{\under{-}}  \ar{r}{\barSigma} & \RelCat{\C} \ar{d}{\under{-}}  \\
		\C \ar{r}{\Sigma}  & \C
	\end{tikzcd}
\quad\quad
\begin{tikzcd}
		\RelCat{\C}^\opp\times \RelCat{\C}
		\ar{d}[swap]{\under{-}^\opp\times \under{-}} \ar{r}{\barB} & \RelCat{\C} \ar{d}{\under{-}}  \\
		\C^\opp \times \C \ar{r}{B}  & \C
	\end{tikzcd}
\end{equation*}

\begin{remark}[Canonical Liftings]
\begin{enumerate}
\item Every endofunctor
$\Sigma\colon \C\to \C$ has a \emph{canonical relation lifting} $\overline \Sigma$, which takes a relation $R
\pred{} X \times X$ to the relation $\barSigma R\monoto \Sigma X\times \Sigma X$ given by the image of the morphism $\langle \Sigma\outl_{R}, \Sigma\outr_{R}\rangle\colon \Sigma R\to \Sigma X\times \Sigma X$, obtained via its (surjective, injective)-factorization.
\item Similarly, every bifunctor ${B\c \C^\opp\times \C\to \C}$ has a \emph{canonical relation
lifting} $\overline B$, which takes two relations $R\monoto X\times X$ and $S\monoto Y\times Y$ to the relation $\overline{B}(R,S) \monoto B(X,Y) \times B(X,Y)$ given by the image of the morphism $u_{R,S}$ in the pullback below, obtained via its (surjective, injective)-factorization:
\begin{equation*}
  \label{eq:liftingpb}
  \begin{tikzcd}[column sep=2em, row sep=.4ex]
    &[-3.25em]& {T_{R,S}}
    \pullbackangle{-45}
    &&&& {B(R,S)} \\
    \\
    {\overline{B}(R,S)} \\
    \\
    && {B(X,Y) \times B(X,Y)} &&&& {B(R,Y) \times B(R,Y)}
    \arrow[two heads, from=1-3, to=3-1]
    \arrow[tail, pos=.8, from=3-1, to=5-3]
    \arrow["v_{R,S}", from=1-3, to=1-7]
    \arrow["u_{R,S}"', from=1-3, to=5-3]
    \arrow["{\langle B(\id,\outl_{S}), B(\id,\outr_{S}) \rangle}", from=1-7, to=5-7]
    \arrow["{B(\outl_{R},\id) \times B(\outr_{R},\id)}", from=5-3, to=5-7]
  \end{tikzcd}
\end{equation*}

\end{enumerate}
\end{remark}

\begin{example}\label{ex:functor-liftings}
\begin{enumerate}
\item\label{ex:functor-liftings-polynomial} For a polynomial functor $\Sigma$ on $\Set$, the canonical lifting $\barSigma$ sends $R\seq X\times X$ to the relation $\barSigma R\seq \Sigma X\times \Sigma X$ relating $u,v\in \Sigma X$ iff $u=\f(x_1,\ldots,x_n)$ and $v=\f(y_1,\ldots,y_n)$ for some $n$-ary operation symbol $\f\in\Sigma$, and $R(x_i,y_i)$ for $i=1,\ldots,n$.
\item\label{ex:functor-liftings-pow} The canonical lifting $\ol{\Pow}$ of the power set functor $\Pow\colon\Set\to \Set$ takes a relation $R\monoto X\times X$ to the \emph{(two-sided) Egli-Milner relation} $\ol{\Pow} R\seq \Pow X\times \Pow X$  defined by
\[ \ol{\Pow}R(A,B) \;\iff\; \forall a\in A.\, \exists b\in B.\, R(a,b) \;\wedge\; \forall b\in B.\, \exists a\in A.\, R(a,b).\]
Taking instead the \emph{left-to-right Egli-Milner relation} $\arP R \seq \Pow X\times \Pow X$ given by
\[ \arP R(A,B) \;\iff\; \forall a\in A.\, \exists b\in B.\, R(a,b)\] 
yields a non-canonical lifting. Note that $\arP R$ is \emph{up-closed}: $\arP R(A,B)$ and $B\seq B'$ implies $\arP R(A,B')$.
\item\label{ex:functor-lifting-homspace} The canonical lifting of the bifunctor $B_0(X,Y)=Y+Y^X$ on $\Set$ sends $(R\seq X\times X,\, S\seq Y\times Y)$ to the relation $\barB_0(R,S)\seq (Y+Y^X)\times (Y+Y^X)$ where $\barB_0(R,S)(u,v)$ iff 
\begin{itemize}
\item either $u,v\in Y$ and $S(u,v)$;
\item or $u,v\in Y^X$ and for all $x,x'\in X$, if $R(x,x')$ then $S(u(x),v(x'))$.
\end{itemize}
The second clause expresses that related functions send related inputs to related outputs. This corresponds precisely to the key requirement of logical relations, and hence such relations liftings will be employed to capture logical relations abstractly. 
\end{enumerate}
\end{example}

\subsection{Abstract Soundness Theorem} Next we introduce our abstract notion of contextual preorder, which is parametric in a preorder of \emph{observations}. Here, a \emph{preorder} in $\C=\Set^{\C_0}$ is relation $R\monoto X\times X$ such that each component $R(C)\seq X(C)\times X(C)$, $C\in \C_0$, is reflexive and transitive.

\begin{definition}[Abstract Contextual Preorder]\label{def:abscontpre}
Given a preorder $O\monoto \mS\times \mS$, the \emph{contextual preorder} $\cprd \,\monoto \mS\times \mS$ is the greatest congruence on the initial algebra $\mS$ contained in $O$.
\end{definition}
The contextual preorder can be constructed as the join (in the complete lattice $\Rel_\mS(\C)$) of all congruences contained in $O$, and it is always a preorder~\cite[Thm.~4.17]{gmtu24lics}.

\begin{example}
By choosing $O= \{\, (p,q)\in \skitermu\times \skitermu \mid p{\Downarrow} \implies q{\Downarrow} \,\}$ we recover the contextual preorder $\contpre$ for $\xCL$.
\end{example}

As suggested by \autoref{ex:functor-liftings}\ref{ex:functor-lifting-homspace}, we model applicative similarity and logical relations using relation liftings of behaviour bifunctors. In the following, for $f,g\colon X\to Y$ in $\C$ and $R\monoto Y\times Y$ let $\iimg{(f\times g)}{R}$ be the preimage of $R$ under $f\times g$, that is, the pullback of $\langle \outl_R,\outr_R\rangle$ along $f\times g$.

\begin{definition}[Abstract Applicative Similarity]
  \label{def:app-sim}
Fix a relation lifting $\barB$ of $B\colon \C^\opp\times \C\to \C$ and coalgebras $\gamma, \widetilde{\gamma}\colon \mS\to B(\mS,\mS)$.
An \emph{applicative simulation}  is a relation $R\monoto \mS\times \mS$ such that
\begin{equation}\label{eq:app-sim} R\leq \iimg{(\gamma\times \widetilde{\gamma})}{\barB(\Delta,R)}. \end{equation} 
\emph{Applicative similarity} $\appsim\,\monoto \mS\times \mS$ is the join (in $\Rel_\mS(\C)$) of all applicative simulations. 
\end{definition}

\begin{rem}
\begin{enumerate}
\item The lifting $\barB$ need not be canonical; it is non-canonical in all our applications.
\item By the Knaster-Tarski theorem, $\appsim$ is the greatest fixed point of the monotone map given by $R\mapsto \iimg{(\gamma\times \widetilde{\gamma})}{\barB(\Delta,R)}$ on the complete lattice $\Rel_\mS(\C)$.
\end{enumerate}
\end{rem}

\begin{definition}[Abstract Step-Indexed Logical Relation]
\label{def:abslogrel}
Fix a relation lifting $\barB$ of $B\colon \C^\opp\times \C\to \C$ and coalgebras $\gamma, \widetilde{\gamma}\colon \mS\to B(\mS,\mS)$. The \emph{step-indexed logical relation} $\logrel\monoto \mS\times \mS$ is given by $\logrel=\bigwedge_{\alpha} \logrel^\alpha$, where $\alpha$ ranges over ordinals and $\logrel^\alpha\monoto \mS\times \mS$ is defined by transfinite induction:
\begin{align*}
\logrel^0 &= \mS\times \mS
& \logrel^{\alpha+1} &= \logrel^\alpha \wedge \iimg{(\gamma\times \widetilde{\gamma})}{\barB(\logrel^{\alpha},\logrel^{\alpha})} \\
\logrel^{\alpha} &= \bigwedge\nolimits_{\beta<\alpha} \logrel^\beta && \text{for limit ordinals $\alpha$}.
\end{align*}
\end{definition}
Note that due to $\C=\Set^{\C_0}$ being a well-powered category, the descending chain $(\logrel^\alpha)$ will eventually stabilize, that is, $\logrel=\logrel^\alpha$ for some ordinal $\alpha$.

\begin{example}\label{ex:weakening}
For $\xCL$, we consider the coalgebras $\gamma,\widetilde{\gamma}\colon \skitermu \to \Pow(\skitermu+\skitermu^{\skitermu})$ where $\gamma$ is the operational model of the language, and $\widetilde{\gamma}$ is the \emph{weak operational model} given by
\[ \widetilde{\gamma}(p) = \{\, p' \mid p\To p' \,\} \cup \{\, \gamma_0(p') \mid p\Downarrow p' \,\}. \]
Moreover, we choose the lifting $\arP\cdot \barB_0$ of the behaviour bifunctor $\Pow\cdot B_0$, where $\arP$ is the left-to-right Egli-Milner lifting and $\barB_0$ is the canonical lifting of $B_0$. Then abstract applicative similarity $\appsim$ and the abstract step-indexed logical relation $\logrel$ instantiate to the concrete notions of \autoref{def:appsim-xcl} and \autoref{def:logrel-xcl}. For $\logrel$, one readily verifies that the descending sequence of $\logrel^\alpha$'s stabilizes after $\omega$ steps. (For nondeterministic languages like in \autoref{sec:cbpv}, one generally needs to go beyond $\omega$.)
\end{example}

\begin{remark}In $\xCL$, the simulation condition \eqref{eq:app-sim} is equivalent to
\begin{equation}\label{eq:app-sim-equiv} R\leq \iimg{(\widetilde{\gamma}\times \widetilde{\gamma})}{\barB(\Delta,R)}. \end{equation}
This follows from the observation that in \autoref{def:appsim-xcl}, the premises $p\to p'$ and $p\xto{t} p'$ of the two simulation conditions may be equivalently replaced by weak transitions $p\To p'$ and $p\xTo{t} p'$, respectively.
In general, a coalgebra $\widetilde{\gamma}\colon \mS\to B(\mS,\mS)$ is called a \emph{weakening} of $\gamma\colon \mS\to B(\mS,\mS)$ if \eqref{eq:app-sim} and \eqref{eq:app-sim-equiv} are equivalent for every relation $R\monoto \mS\times \mS$. 
\end{remark}

For all languages modeled in the higher-order abstract GSOS framework, we have the following general congruence and soundness result for applicative similarity and step-indexed logical relations, see \cite[Cor.~VIII.7]{UrbatTsampasEtAl23} and \cite[Cor.~4.33]{gmtu24lics}:

\begin{theorem}[Abstract Soundness Theorem]\label{thm:congruence-ho-abstract-gsos} 
Fix the following data:
\begin{itemize}
\item a higher-order GSOS law $\rho$ of $\Sigma\colon \C\to \C$ over $B\colon \C^\opp\times \C\to \C$;
\item a relation lifting $\barB\colon \Rel(\C)^\opp\times \Rel(\C)\to \Rel(\C)$ of $B$;
\item the operational model $\gamma\colon \mS\to B(\mS,\mS)$ of $\rho$, and a weakening $\widetilde{\gamma}\colon  \mS\to B(\mS,\mS)$. 
\end{itemize}
If this data satisfies \ref{asm:1}--\ref{asm:6} below, then both $\appsim$ and $\L$ are congruences on the initial algebra $\mS$. Therefore, for every preorder $O\monoto \mS\times \mS$, the following implications hold: 
\[
\appsim\, \leq O  \; \implies\; \appsim\,\leq\, \cprd \qqand
\logrel\, \leq O  \; \implies\; \logrel\,\leq\, \cprd.\]
\end{theorem}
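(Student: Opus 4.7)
The plan is to observe first that the two displayed implications are immediate once congruence is established. Indeed, by \autoref{def:abscontpre} the contextual preorder $\cprd$ is the greatest congruence on $\mS$ contained in $O$, so any congruence $R$ with $R\leq O$ satisfies $R\leq \cprd$ by maximality. Hence the entire theorem reduces to the congruence claim for $\appsim$ and $\logrel$.

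For applicative similarity, I would use an abstract form of Howe's method. The idea is to construct the \emph{Howe closure} $\hat{\appsim}$, built by inductively ``lifting'' $\appsim$ through the syntax functor $\Sigma$ so that it is a congruence on $\mS$ by construction and contains $\appsim$. The core step, \emph{Howe's key lemma}, is to show that $\hat{\appsim}$ is itself an applicative simulation, i.e.\ satisfies the fixed-point inequation $\hat{\appsim}\leq \iimg{(\gamma\times\widetilde{\gamma})}{\barB(\Delta,\hat{\appsim})}$. This is where the assumptions \ref{asm:1}--\ref{asm:6} come in: they presumably encode a compatibility between $\rho$ and the relation lifting $\barB$ (so that $\rho$ lifts to a natural transformation at the level of relations), together with conditions on the weakening $\widetilde{\gamma}$. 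Using the characterization of $\gamma$ as the operational model derived by primitive recursion from $\rho$ (\autoref{fig:gamma}), one then simulates one-step transitions of $\hat{\appsim}$-related terms by pushing behaviour through $\rho$ and closing under $\widetilde{\gamma}$. By maximality of $\appsim$, Howe's key lemma gives $\hat{\appsim}\leq \appsim$, and since the reverse inclusion is immediate, we conclude $\appsim=\hat{\appsim}$ is a congruence.

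For the step-indexed logical relation, I would proceed by transfinite induction on the index $\alpha$, showing that each $\logrel^\alpha$ is compatible with $\Sigma$ in the sense that $\barSigma(\logrel^\alpha)\leq \langle\iota,\iota\rangle^{\star}[\logrel^\alpha]$ (possibly with a one-step shift, depending on how the rules ``consume'' steps). The zero case is trivial since $\logrel^0=\mS\times \mS$; the limit case follows from the fact that congruences are closed under arbitrary meets in $\Rel_\mS(\C)$; the successor case requires showing that $\Sigma$-combinations of $\logrel^\alpha$-related terms have one-step behaviours related by $\barB(\logrel^\alpha,\logrel^\alpha)$, which once again boils down to the compatibility of $\rho$ with $\barB$ assumed in \ref{asm:1}--\ref{asm:6}. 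Since the descending chain $(\logrel^\alpha)$ stabilizes by well-poweredness of $\Set^{\C_0}$, its limit $\logrel$ inherits compatibility, hence is a congruence.

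The main obstacle in both arguments is a clean abstract formulation of the compatibility condition between $\rho$ and $\barB$ and the calculational use of the defining diagram of $\gamma$. In the Howe case, the technical heart is proving the key lemma, which in concrete settings requires a detailed case analysis on operational rules but here is reduced to a single diagrammatic check against $\rho$; in the step-indexed case, the successor step similarly hinges on a ``fundamental lemma'' obtained by chasing $\rho$ through the diagram in \autoref{fig:gamma}. Both results are already available in the cited works~\cite{UrbatTsampasEtAl23,gmtu24lics}, so the present paper need only verify that the FGCBV and CBPV laws satisfy \ref{asm:1}--\ref{asm:6} to invoke them.
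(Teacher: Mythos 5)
Your proposal is correct and follows essentially the same route as the paper: the paper does not reprove this theorem but imports it from \cite[Cor.~VIII.7]{UrbatTsampasEtAl23} and \cite[Cor.~4.33]{gmtu24lics}, whose proofs proceed exactly as you outline --- an abstract Howe closure and key lemma for $\appsim$, and transfinite induction establishing compatibility of each $\logrel^\alpha$ for the logical relation --- with the displayed implications following from the maximality of $\cprd$ among congruences contained in $O$. Your reading of the role of \ref{asm:1}--\ref{asm:6} (relation lifting of $\rho$, up-closure, and the lax bialgebra condition on the weakening $\widetilde{\gamma}$) also matches the paper's own discussion of these assumptions.
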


\begin{remark}
For the case of $\logrel$, the conditions of the theorem may be slightly relaxed: the coalgebra $\widetilde{\gamma}$ need not be a weakening, and the assumption \ref{asm:3} can be dropped.
\end{remark}

\begin{example}
We recover the soundness result for $\xCL$ (\autoref{thm:soundness-xcl}) by instantiating the Abstract Soundness Theorem to the higher-order GSOS law $\rho$ of $\Sigma$ over $\Pow\comp B_0$ as in \autoref{ex:ho-gsos-law-xcl}, the relation lifting $\arP\comp \barB_0$ of $\Pow\comp B_0$, and the weak operational model $\widetilde{\gamma}$ as in \autoref{ex:weakening}.
\end{example}

It remains to state the assumptions \ref{asm:1}--\ref{asm:6} on the data of the theorem. We list them below and discuss the underlying intuitions afterwards:

\begin{enumerate}[label=(A\arabic*)]
\item\label{asm:1} $\Sigma$ preserves directed colimits, strong epimorphisms, and monomorphisms.
\item\label{asm:2} Each hom-set $\C(Z,B(X,Y))$ (where $X,Y,Z\in \C$) is equipped with the structure of a preorder $\preceq$ such that, for all $q,q'\colon Z\to B(X,Y)$ and $p\colon Z'\to Z$, if $q\preceq q'$ then $q\comp p\preceq q'\comp p$.
\item\label{asm:3} The relation lifting $\barB$ satisfies
\[ \Delta\leq \barB(\Delta,\Delta) \qqand \barB(R,S)\bullet \barB(\Delta,T) \leq \barB(R,S\bullet T)  \text{ for all $R\monoto X\times X$ and $S,T\monoto Y\times Y$}. \]
\item\label{asm:4} For all relations $R\monoto X\times X$ and $S\monoto Y\times Y$, the relation $\barB(R,S)$ is \emph{up-closed}. This means that for every span $B(X,Y)\xleftarrow{f} Z
  \xrightarrow{g} B(X,Y)$ and every morphism $Z\to \barB(R,S)$ such that the
  left-hand triangle in the first diagram below commutes, and the right-hand triangle
  commutes laxly as indicated, there exists a morphism $Z\to \barB(R,S)$ such that the second
  diagram commutes.
\[
  \begin{tikzcd}
    & \ar[bend right=2em]{dl}[swap]{f} \ar[bend left=2em]{dr}{g} \ar{d} Z & {~}\\
    B(X,Y) & \ar[phantom]{ur}[description, pos=.35, xshift=-10]{\dleq{45}} \ar{l}[swap]{\outl} \barB(R,S) \ar{r}{\outr} & B(X,Y)
  \end{tikzcd}
  \implies
  \begin{tikzcd}
    & \ar[bend right=2em]{dl}[swap]{f} \ar[bend left=2em]{dr}{g} \ar[dashed]{d} Z & {~}\\
    B(X,Y) & \ar{l}[swap]{\outl} \barB(R,S) \ar{r}{\outr} & B(X,Y)
  \end{tikzcd}
\]
Here, $\preceq$ in the first diagram refers to the preorder on $\C(Z,B(X,Y))$ chosen in \ref{asm:2}, and $\outl$ and $\outr$ are the projections of the relation $\barB(R,S)\monoto B(X,Y)\times B(X,Y)$.
\item\label{asm:5} $\rho$ has a relation lifting: For each $R\monoto X\times X$ and $S\monoto Y\times Y$, the component $\rho_{X,Y}$ is a $\Rel(\C)$-morphism from $\barSigma(R\times \barB(R,S))$ to $\barB(R,\barSigma^\star(R+S))$, where $\barSigma$ is the canonical lifting.
\item\label{asm:6} The triple $(\mS,\ini,\widetilde{\gamma})$ forms a \emph{higher-order lax
    $\rho$-bialgebra} (cf.~\cite{DBLP:conf/concur/BonchiPPR15} for the corresponding first-order notion), that is, the diagram below commutes laxly:
  \begin{equation*}\label{eq:lax-bialgebra}
    \begin{tikzcd}[column sep=1.95em]
      \Sigma(\mS) \ar{r}{\ini} \ar{d}[swap]{\Sigma \langle \id, \widetilde{\gamma}\rangle}
      &
      \mS \ar{r}{\widetilde{\gamma}}
      &[2em]
      B(\mS,\mS)
      \\
      \Sigma(\mS\times B(\mS,\mS))
      \ar{r}{\rho_{\mS,\mS}}
      &
      B(\mS,\Sigmas(\mS+\mS))
      \ar[phantom]{u}[description]{\dgeq{-90}}
      \ar{r}{B(\id,\Sigmas\nabla)}
      &
      B(\mS,\Sigmas(\mS)) \ar{u}[swap]{B(\id,\hat{\ini})}
    \end{tikzcd}
  \end{equation*}

\end{enumerate}
Let us further elaborate on the above assumptions:
\begin{enumerate}[label=(A\arabic*)]
\item The conditions on the syntax functor $\Sigma$ ensure that the free monads of both $\Sigma$ and $\barSigma$ exist, and moreover that the latter lifts the former~\cite[Prop.~V.4]{UrbatTsampasEtAl23}:
\[\barSigma^{\star} = \ol{\Sigmas}.\]
Here $\ol{(-)}$ refers to the respective canonical liftings. Note that (A1) holds for all polynomial functors.
\item and (A4) are meant to abstract from the up-closure property of the left-to-right Egli-Milner relation (\autoref{ex:functor-liftings}\ref{ex:functor-liftings-pow}). Specifically, for the behaviour bifunctor $\Pow\comp B_0(X,Y)=\Pow(Y+Y^X)$ for $\xCL$ and its lifting $\arP\cdot \barB_0$, we order $\Set(Z,\Pow\comp B_0(X,Y))$ by pointwise inclusion. Then each relation $\arP\comp \barB_0(R,S)$ is up-closed due to the corresponding property of $\arP$.
\item states that the lifting $\barB$ respects diagonals and composition of relations. This condition enables an abstract version of Howe's method for proving congruence of applicative similarity. \ref{asm:3} it always satisfied for the canonical lifting $\barB$ provided that $B$  preserves weak pullbacks~\cite[Prop.~C.9]{utgms23_arxiv}. 
\setcounter{enumi}{4}
\item can be regarded as a monotonicity condition on the rules represented by $\rho$. For instance, for functors $B$ modelling nondeterministic behaviours and whose relation lifting involves the left-to-right Egli-Milner lifting $\arP$, it entails the absence of rules with negative premises~\cite{fiorepositive}. For the canonical lifting $\barB$, the condition always holds~\cite[Constr.~D.5]{utgms23_arxiv}.
\item is the heart of the Abstract Soundness Theorem. Informally, this condition states that the rules encoded by $\rho$ remain sound when strong transitions (represented by $\gamma$) are replaced by weak ones (represented by $\widetilde{\gamma}$). For instance, consider the two rules for application and their weak versions: 
\begin{align*}
	\inference{t\to t'}{t\app s\to t' \app s} 
\qquad
	\inference{t\xto{s} t'}{t \app s\to t'} 
\qquad
	\inference{t\To t'}{t \app s\To t' \app s}
\qquad
	\inference{t \stackon[.5ex]{~\To~}{\scalebox{.75}{$\scriptsize s$}} t'}{t \app s\To t'}
\end{align*}
The third rule is sound because it emerges via repeated application of the first one. The fourth rule follows from the second and third rule. The weak versions of the other rules of \autoref{fig:skirules} are trivially sound because they are premise-free. Hence the lax bialgebra condition is satisfied for $\xCL$.

Generally, the lax bialgebra condition exposes the language-specific core of soundness results for applicative similarity and logical relations, and as illustrated above, its verification is typically straightforward and amounts to an inspection of the rules of the language.
\end{enumerate}

\section{Fine-grain call-by-value}
\label{sec:fgcbv}

The first step towards realizing call-by-push-value is the explicit distinction
between \emph{values} and \emph{computations}. This is not the only idea behind
call-by-push-value; rather, in this halfway point between call-by-value and
call-by-push-value, one speaks of \emph{fine-grain call-by-value
  (FGCBV)}~\cite{LevyPowerEtAl03}\,\footnote{A language similar to FGCBV
  was independently developed by \citet{lassenthesis}.}. In
this paradigm, computations coincide precisely with those terms that can
$\beta$-reduce, and moreover values can be explicitly coerced to computations
through an analogue of the \emph{return} operator of Moggi's computational
  metalanguage~\cite{DBLP:journals/iandc/Moggi91}.

In this section, we introduce \skif, a fine-grain call-by-value untyped combinatory logic that is similar to $\xCL$ (\autoref{sec:unary-ski}),
and demonstrate how our abstract operational methods instantiate to them. The relatively simple nature of \skif allows us to
focus on the key concepts behind fine-grain call-by-value and its modelling in higher-order abstract GSOS; in particular, since the combinatory logic $\xCL$ does not feature variables, we avoid the largely orthogonal technical  challenges of variable management (e.g.\ binding and substitution). FGCBV languages with variables can be treated using the presheaf-based techniques of \autoref{sec:cbpv}.

The top part of \Cref{fig:fgcbv-rules} inductively defines the sort of values $\CLFT_{\zero}$ and 
the sort of computations~$\CLFT_{\one}$ of \skif.
In this multisorted setting, operations and their arities are decorated by their
sorts, which in this case can be either a value or a computation. The combinators $S$, $K$, $I$ belong
to the sort of values and the auxiliary operators $K'$, $S'$, $S''$ can only have computations as operands, as evidenced by the choice of metavariables
$t$ and $s$. On the other hand, the sort of computations consists of the return
operator $[-]$,
whose argument is a \emph{value}, and four different versions of the
(binary) operation of application, one for each combination of computation-computation
($\argument\cccomp\argument$), value-computation $(\argument\vccomp\argument)$,
computation-value $(\argument\cvcomp\argument)$ and value-value
$(\argument\circ\argument)$. We write $v\app w$ for $v\circ w$. Alternatively, one can think of application as a
single operation whose arguments can belong to either sort.

\begin{figure}
  \begin{flalign*}
		\makebox[0pt][l]{\qquad (Values 			$\CLFT_{\zero}$)}&&\hspace{10em}  v,w,u,\ldots&\Coloneqq I\mid K\mid S\mid K'(t)\mid S'(t)\mid S''(s,t)&\hspace{3cm}\\ 
		\makebox[0pt][l]{\qquad (Computations $\CLFT_{\one}$)} &&   t,s,\ldots&\Coloneqq [v] \mid t \cccomp s \mid v \vccomp s \mid t\cvcomp v\mid v \circ w&\\[-2ex]
	\end{flalign*}
  \begin{gather*}
    \hspace{1cm}\inference{}{S\xto{v}\return{S'(\return{v})}}
    \qquad
    \inference{}{S'(t)\xto{v}\return{S''(t,\return{v})}}
    \qquad
    \inference{}{S''(t,s)\xto{v}(t \cvcomp v)\cccomp(s\cvcomp v)}
    \\[2ex]
    \qquad  
    \inference{}{K\xto{v}\return{K'(\return{v})}}
    \qquad
    \inference{}{K'(t)\xto{v}t}
    \qquad
    \inference{}{I\xto{v}\return{v}}
    \qquad
    \inference{}{\return{v}\to v}
    \qquad
    \inference{t \to t'}{t\cccomp s\to t'\cccomp s}
    \\[2ex]
    \infer{t\cccomp s\to v\vccomp s}{t\to v}
    \qquad
    \infer{t\cvcomp v\to t'\cvcomp v}{t \to t'}
    \qquad
    \infer{t\cvcomp w\to v\app w}{t\to v}
    \qquad
    \infer{v \vccomp s \to v \vccomp s'}{s \to s'}
    \qquad
    \infer{v \vccomp s \to v \app w}{s\to w}
    \qquad
    \infer{v \app w \to t}{v \xto{w} t}
  \end{gather*}
  \caption{Call-by-value operational semantics for the \skif calculus.}
  \label{fig:fgcbv-rules}
\end{figure}

The operational semantics of \skif is also presented in \Cref{fig:fgcbv-rules}, where $s,s',t,t'$ range over computations and $v,w$ over values. Computations admit unlabeled transitions (corresponding to reduction steps), and their target can be either a value or a computation. Values admit labeled transitions, and their target is always a computation. Note that only values appear as labels, which echoes the requirement of call-by-value semantics that functions can only be applied
to values. Note also that the application operators can observe the sort of the conclusion of their operands, which is why there are two rules for three of the application operators. Finally, from an operational perspective, the only
task of the return operator is to expose the inner value to the outside.

\subsection{Modelling \skif in Higher-Order Abstract GSOS}

Our goal is to implement \skif in the higher-order abstract GSOS framework and
leverage existing theory to reason about program equivalences. One key insight
of the present work is that, for languages with multiple sorts such as \skif,
one has to work in a mathematical universe where a ``sort'' is an intrinsic
notion. Specifically, we work with the category $\Set^\twoc=\Set\times\Set$ of
two-sorted sets (see also \cite[\textsection
8]{DBLP:journals/lmcs/HirschowitzL22} for a similar idea). We denote objects of $\Set^\twoc$ as pairs of sets $(X_{\zero}, X_{\one})$, and morphisms $f\colon X\to Y$ as pairs of maps $(f^{\zero} \c X_{\zero} \to Y_{\zero},f^{\one} \c X_{\one} \to Y_{\one})$. The intention is to interpret objects of $\Set^{\twoc}$ as pairs of sets of \emph{values} and \emph{computations}. We sometimes write $x\in X$ instead of $x\in X_{\zero}$ or $x\in X_{\one}$ if the sort is irrelevant. 

\begin{notation}
  We write $\copf \c \Set^{\twoc} \to \Set$ for the \emph{sum} functor, mapping each object
  $X \in \Set^{\twoc}$ to $X_{\zero} + X_{\one}$ and each morphism $f \c X \to
  Y$ to $f^{\zero} + f^{\one} \c \copf X \to \copf Y$.
\end{notation}

We argue that the category $\Set^{\twoc}$ is the suitable mathematical
universe to implement the semantics of \skif and, more generally, call-by-value
languages. First, the syntax of \skif
is given by the $\twoc$-sorted algebraic signature corresponding to the polynomial
functor $\Sigma \c \Set^{\twoc} \to \Set^{\twoc}$ given by
\begin{equation}
  \label{eq:sigmaski}
  \begin{aligned}
    & {\Sigma_{\zero}(X)} = \underbrace{1}_{S} + \underbrace{1}_{K}
      + \underbrace{1}_{I} + \underbrace{X_{\one}}_{S'} + \underbrace{X_{\one}}_{K'}
      + \underbrace{X_{\one} \times X_{\one}}_{S''}, \\
    & {\Sigma_{\one}(X)} = \underbrace{X_{\zero}}_{[\argument]}
      + \underbrace{X_{\one} \times X_{\one}}_{\argument \cccomp \argument} +
      \underbrace{X_{\zero} \times X_{\one}}_{\argument \vccomp \argument} +
      \underbrace{X_{\one} \times X_{\zero}}_{\argument \cvcomp \argument} +
      \underbrace{X_{\zero} \times X_{\zero}}_{\argument \vvcomp \argument}.
  \end{aligned}
\end{equation}
The initial algebra of $\Sigma$ is carried by the $\twoc$-sorted set $\CLFT = (\CLFT_{\zero},\CLFT_{\zero})$ of \skif-terms.
The behaviour of \skif is modeled by the bifunctor $B=\langle B_{\zero}, B_{\one}\rangle\c (\Set^{\twoc})^{\opp} \times \Set^{\twoc} \to \Set^{\twoc}$ given by
\begin{equation}
  \label{eq:behski}
   B_{\zero}(X,Y) = Y_{\one}^{X_{\zero}} \qquad \text{and} \qquad
   B_{\one}(X,Y) = \Pow(\copf Y).
\end{equation}
The component $B_{\zero}$ models the behaviour of values, while $B_{\one}$ that
of computations. Values behave as combinators, meaning as
functions from values (and \emph{only} values) to computations. Computations, on
the other hand, may reduce to either a value $Y_{\zero}$ or a
computation $Y_{\one}$. Similar to the categorical modelling $\xCL$ in \autoref{sec:abstract-op-methods}, despite the language being deterministic, we use the power set functor in the computation component in order to subsequently capture weak transitions. The operational rules in \Cref{fig:fgcbv-rules} are represented by a $0$-pointed higher-order GSOS law \[\rho_{X,Y}\c \Sigma(X
\times B(X,Y)) \to B(X,\Sigma^{\star}(X+Y)) \qquad(X,Y\in \Set^\twoc);\]
here $\times$ and $+$ refer to the product and coproduct in $\Set^2$, which are formed as sortwise products and coproducts in $\Set$. The value and computation components of $\rho_{X,Y}$ are maps
\begin{align*} \rho_{X,Y}^{\zero}&\colon \Sigma_{\zero}(X
\times B(X,Y)) \to (\Sigmas_{\one}(X+Y))^{X_{\zero}} \\
\rho_{X,Y}^{\one}&\colon \Sigma_{\one}(X
\times B(X,Y)) \to \Pow(\Sigmas_{\zero}(X+Y)+ \Sigmas_{\one}(X+Y))
\end{align*}
which are defined as follows for $v,w\in X_{\zero}$, $t,s\in X_{\one}$, $L,D\in \Pow(Y_{\zero}+Y_{\one})$ and $f,g\in Y_{\one}^{X_{\zero}}$:
\begin{equation*}
  \begin{aligned}
    \rho_{X,Y}^{\zero}(S)
    &\;= \lambda v.\,[S'([v])]
    &\rho_{X,Y}^{\zero}(S'(t,L))
    &\;= \lambda v.\,[S''(t,[v])]
    \\
    \rho_{X,Y}^{\zero}(S''((t,L),(s,D)))
    &\;= \lambda v.(t \cvcomp v)\cccomp(s \cvcomp v)
    & \rho_{X,Y}^{\zero}(I)
    &\;= \lambda v.\,[v]
    \\
    \rho_{X,Y}^{\zero}(K)
    &\;= \lambda v.\,[K'([v])]
    &\rho_{X,Y}^{\zero}(K'(t,L))
    &\;= \lambda v.t\\[-1ex]
  \end{aligned}
\end{equation*}
\begin{align*} 
\rho_{X,Y}^{\one}([(v,f)])
    &= \{v\} & \rho_{X,Y}^{\one}((t,L) \cccomp (s,D)) &= \{ t'\cccomp s  \mid t'\in L\cap Y_{\one}\} \cup \{ v \vccomp s \mid v\in L\cap Y_{\zero}\} \\ 
&&\rho_{X,Y}^{\one}((t,L) \cvcomp (w,g)) &= \{ t'\cvcomp w \mid t'\in L\cap Y_{\one}\} \cup \{ v \vvcomp w \mid v\in L\cap Y_{\zero}\}\\
&&\rho_{X,Y}^{\one}((v,f) \vccomp (s,D)) &= \{ v\vccomp s' \mid s'\in D\cap Y_{\one}\} \cup \{ v \vvcomp w \mid w\in D\cap Y_{\zero}\}\\*
&&\rho_{X,Y}^{\one}((v,f) \vvcomp (w,g)) &= \{f(w)\} 
\end{align*}

\begin{rem}
The above law is implicitly generated in two steps similar to the law of $\xCL$ (\autoref{ex:ho-gsos-law-xcl}): one first gives a law $\rho^0$ of $\Sigma$ over the deterministic version $B_0$ of $B$ (dropping the power set functor in the $\one$-component), and then extends $\rho^0$ to the above law $\rho$ of $\Sigma$ over $B$.
\end{rem}

The operational model of $\rho$ is the deterministic ($\twoc$-sorted) transition system
\[\gamma = (\gamma^{\zero},\gamma^{\one}) \c \CLFT \to B(\CLFT,\CLFT)\]
on \skif-terms specified by the rules of \autoref{fig:fgcbv-rules}: for every $v\in \CLFT_{\zero}$ and $t\in \CLFT_{\one}$,
\[
  \gamma^{\zero}(v) = \lambda u. t_{u} \iff v \xto{u} t_{u}, \quad
  \gamma^{\one}(t) = \{t'\} \iff t \to t' \quad \text{and} \quad
  \gamma^{\one}(t) = \{v\} \iff t \to v.
\]
Having defined the categorical semantics of \skif, we are now able to instantiate the abstract operational reasoning techniques presented in \autoref{sec:abstract-op-methods}.

\subsection{Operational Methods for \skif}
\label{sec:fgequiv}
Let us first introduce a suitable notion of contextual preorder for \skif. A \emph{$\twoc$-sorted context} $C[\cdot]$ is a $\Sigma$-term with a hole (i.e.\ a term $C$ in a single variable that occurs at most once), and as before we write $C[p]$ for the term emerging by substitution of a term $p$ of compatible sort for the hole. The \emph{contextual preorder} is the $\twoc$-sorted relation $\contpre \,\seq \CLFT\times \CLFT$ given by
\begin{equation} p\contpre q \quad\text{iff}\quad \forall C[\cdot].\, C[p]{\Downarrow} \implies C[q]{\Downarrow},\end{equation}
where $p,q$ are terms of the same sort, $C$ ranges over contexts whose hole is of that sort, and $p{\Downarrow}$ means that either $p$ is a value or $p$ is a computation that eventually reduces to a value. This corresponds to the instantiation of \autoref{def:abscontpre} to the preorder $O\seq \CLFT\times \CLFT$ given by
\begin{equation}\label{eq:o-xclfg}
 O_{\zero} = \CLFT_{\zero} \times \CLFT_{\zero} \qqand O_{\one} = \{ (t,s) \mid t{\Downarrow} \implies s{\Downarrow}\}. \end{equation}

Recall that both abstract applicative similarity (\autoref{def:app-sim}) and the abstract step-indexed logical relation (\autoref{def:abslogrel}) are parametric in a choice of weakening of the operational model $\gamma$ and in the choice of a relation lifting of the behaviour bifunctor $B$. In the present setting, we choose the weakening $\wt{\gamma} \c \CLFT \to B(\CLFT,\CLFT)$ with components $\wt{\gamma}^{\zero} \c \CLFT_{\zero} \to
\CLFT_{\one}^{\CLFT_{\zero}}$ and $\wt{\gamma}^{\one} \c \CLFT_{\one} \to
\Pow(\copf\CLFT)$ given by \begin{equation}
  \wt{\gamma}^{\zero} = \gamma^{\zero} \qquad \text{and} \qquad
  \wt{\gamma}^{\one}(t) = \{e \in \copf\CLFT \mid t \To e\},
\end{equation}
where $\To$ is the reflexive transitive closure of the one-step transition relation ${\to} \subseteq \CLFT_{\one} \times \copf\CLFT$. Moreover, we take the following relation lifting $\barB\colon \Rel(\Set^\twoc)^\opp\times \Rel(\Set^\twoc)\to \Rel(\Set^\twoc)$:
for every $R \pred{} X \times X$ and $S
\pred{} Y \times Y$, the relation $\barB(R,S) \pred{} B(X,Y) \times B(X,Y)$ is given by the components $\barB_{\zero}(R,S) \subseteq Y_{\one}^{X_{\zero}} \times
Y_{\one}^{X_{\zero}}$ and $\barB_{\one}(R,S) \subseteq \Pow(Y_{\zero} +
Y_{\one}) \times \Pow(Y_{\zero} + Y_{\one})$ where
\begin{equation}
  \begin{aligned}
    & \barB_{\zero}(R,S) = \{\,(f,g)
      \mid \forall v,w.\, R_{\zero}(v,w) \implies S_{\one}(f(v),g(w))\,\}. \\
    & \barB_{\one}(R,S) = \arP(S_{\zero} + S_{\one}),\text{  where $\arP$ is the left-to-right Egli-Milner relation lifting.}
  \end{aligned}
\end{equation}
By instantiating \autoref{def:app-sim} to this data, we obtain the following notion:
\begin{definition}[Applicative similarity for \skif]
  \label{def:fgapplsim}
  An \emph{applicative simulation} is a $\twoc$-sorted relation $R\seq \CLFT\times \CLFT$ satisfying the following conditions for all $u,v,w\in \CLFT_{\zero}$ and $t,t',s\in \CLFT_{\one}$:
  \begin{enumerate}
  \item $R_{\zero}(v,w)\, \wedge\, v \xto{u} t \implies
        \exists s.\, w \xto{u} s \, \wedge\, R_{\one}(t,s)$;
  \item $R_{\one}(t,s) \, \wedge \, t \to v \implies \exists w.\, s \To w \wedge R_{\zero}(v,w)$
  \item $R_{\one}(t,s) \, \wedge \, t \to t' \implies \exists s'.\, s \To s' \wedge R_{\one}(t',s')$.
  \end{enumerate}
\emph{Applicative similarity} $(\lesssim_{\zero},\lesssim_{\one})$ is the (sortwise) union of all applicative simulations.
\end{definition}
Similarly, \autoref{def:abslogrel} yields
\begin{definition}[Step-indexed logical relation for \skif] The \emph{step-indexed logical
relation} $\logrel = \bigcap_n \logrel^n$ is given by the relations $\logrel^{n} =
(\logrel^{n}_{\zero},\logrel^{n}_{\one})\pred{} \CLFT \times \CLFT$
defined inductively by $\logrel^{0}= \CLFT\times \CLFT$ and
\begin{align*}
\logrel^{n+1}_{\zero} &= \{\,(v,w) \mid \forall
  u,z.\,\logrel^{n}_{\zero}(u,z) \land v\xto{u}t
  \implies \exists s.\, w \xto{z} s \land \logrel^{n}_{\one}(t,s)\,\};\\
    \logrel^{n+1}_{\one} &= \{\,(t,s) \mid (t \to v
  \implies \exists w.\, s \To w \land \logrel_{\zero}^{n}(v,w)) \land (t \to t' \implies
  \exists s'.\, s \To s' \land \logrel_{\one}^{n}(t',s'))\,\}.
\end{align*} 
\end{definition}
As in the case of $\xCL$, since \skif is a deterministic language, it suffices to index over natural numbers instead of ordinals. As a consequence of the Abstract Soundness Theorem \ref{thm:congruence-ho-abstract-gsos}, we derive
\begin{theorem}[Soundness Theorem for \skif] Both $\appsim$ and $\logrel$ are sound for the contextual preorder: for all terms $p,q\in \CLFT$ of the same sort,
\[
p \appsim q \; \implies\; p\contpre q\qquad\text{and}\qquad
\logrel(p,q) \; \implies\; p\contpre q.\]
\end{theorem}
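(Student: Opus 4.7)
The plan is to apply the Abstract Soundness Theorem~\ref{thm:congruence-ho-abstract-gsos} to the data assembled above: the polynomial syntax functor $\Sigma$ on $\Set^\twoc$, the behaviour bifunctor $B$, the higher-order GSOS law $\rho$ of $\Sigma$ over $B$, the operational model $\gamma$, the weakening $\wt{\gamma}$, and the two-sorted relation lifting $\barB$. Once assumptions \ref{asm:1}--\ref{asm:6} are in place, the theorem yields that $\appsim$ and $\logrel$ are congruences on $\CLFT$. It then remains to verify the adequacy inclusions $\appsim \leq O$ and $\logrel \leq O$ with $O$ as in~\eqref{eq:o-xclfg}, so that both relations fall below $\contpre$ as desired.

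The verifications of the abstract assumptions largely mirror the $\xCL$ case, lifted to the two-sorted setting. Assumption~\ref{asm:1} holds because $\Sigma$ is polynomial on $\Set^\twoc \cong \Set \times \Set$. For~\ref{asm:2}, I order each hom-set $\Set^\twoc(Z, B(X,Y))$ sortwise: the value component discretely, and the computation component by pointwise inclusion in $\Pow(\copf Y)$. Assumptions~\ref{asm:3} and~\ref{asm:4} split componentwise: the value part of $\barB$ is the canonical function-space (\emph{logical}) lifting, which respects diagonals and composition by a straightforward chase, and is trivially up-closed under the discrete preorder; the computation part $\arP(S_\zero + S_\one)$ inherits up-closure and the diagonal/composition properties from $\arP$, exactly as in the treatment of $\xCL$ in \autoref{sec:abstract-op-methods}. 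Assumption~\ref{asm:5} is dispatched by a direct case analysis on the defining clauses of $\rho$: for each rule, related inputs give rise to related outputs because the right-hand sides merely apply the operational rules element-wise to the arguments.

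The main obstacle is~\ref{asm:6}, the lax bialgebra condition, which requires that the operational rules of \autoref{fig:fgcbv-rules} remain sound when single-step transitions are replaced by weak transitions. The premise-free rules for $S$, $K$, $I$, their partial applications, and the return operator are handled trivially. For the four application operators the argument is structural: a congruence rule such as the one deriving $t \cccomp s \to t' \cccomp s$ from $t \to t'$ lifts to its weak version $t \cccomp s \To t' \cccomp s$ by iteration, and the value-extraction rule deriving $t \cccomp s \to v \vccomp s$ from $t \to v$ composes with such a weak run of its premise. Symmetric reasoning applies to the other three application operators, so that verifying~\ref{asm:6} reduces to the closure of $\To$ under the congruence rules, together with the observation that value-producing transitions commute with weak reductions in the expected way.

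Finally, the adequacy inclusions $\appsim \leq O$ and $\logrel \leq O$ are established directly. On the value sort $O_\zero$ is the total relation, so there is nothing to show. On the computation sort, assume $\appsim_\one(t,s)$ and $t \Downarrow v$; iterating clauses~(2) and~(3) of \autoref{def:fgapplsim} along the reduction path of $t$ yields a weak reduction $s \To w$ that tracks it step by step, culminating in $v \appsim_\zero w$ and hence $s \Downarrow$. The same argument applies to $\logrel$ upon choosing the step index $n$ strictly larger than the number of reduction steps in $t \Downarrow v$ and peeling off the corresponding layers of $\logrel^{n+1}_\one$. With these ingredients in place, the Abstract Soundness Theorem delivers the two desired implications.
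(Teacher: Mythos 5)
Your proposal follows the paper's proof essentially verbatim: instantiate the Abstract Soundness Theorem with the two-sorted data, verify \ref{asm:1}--\ref{asm:6} by the same sortwise arguments (polynomiality for \ref{asm:1}, discrete order on values and inclusion on computations for \ref{asm:2}, reduction to the $\xCL$ case via the structure of $\barB$ for \ref{asm:3}--\ref{asm:5}, and inspection of the rules for the lax bialgebra condition \ref{asm:6}), and conclude via the inclusions in $O$. The only divergences are cosmetic — you spell out the adequacy inclusions that the paper dismisses as clear, and you check \ref{asm:5} by direct case analysis where the paper appeals to the canonical-lifting-composed-with-$\arP$ structure — so the argument is correct and the same in substance.
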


\begin{proof}
Clearly both $\appsim$ and $\logrel$ are contained in $O$ \eqref{eq:o-xclfg}. Therefore, we only need to check the conditions \ref{asm:1}--\ref{asm:6} of \autoref{thm:congruence-ho-abstract-gsos}. Condition \ref{asm:1} holds for all polynomial functors; note that strong epis and monos in $\Set^\twoc$ are the sortwise surjective and injective maps, resp. The order in \ref{asm:2} is given by pointwise equality in the value sort and pointwise inclusion in the computation sort; hence, due to the use of the left-to-right Egli-Milner lifting, \ref{asm:4} holds. \ref{asm:3} and \ref{asm:5} follow like for $\xCL$, since our lifting $\barB$ can be expressed as a canonical lifting of a deterministic behaviour bifunctor composed with the left-to-right lifting of the power set functor. The lax bialgebra condition \ref{asm:6} again means precisely that the operational rules remain sound when strong transitions are replaced by weak ones, which is easily verified by inspecting the rules. For illustration, consider the two rules for the application operator $\bullet$ and the weak version of the second rule:
\[
  \inference{t \to t'}{t\cccomp s\to t'\cccomp s}
    \qquad \inference{t\to v}{t\cccomp s\to v\vccomp s} \qquad
  \inference{t\To v}{t\cccomp s\To v\vccomp s}
\]
The weak rule is clearly sound, since it emerges via repeated application of the first rule, followed by a single application of the second rule. Similarly for the other rules of application. The rules of the remaining operators are premise-free, hence trivially sound w.r.t.\ weak transitions.
\end{proof}

\begin{example}[$\beta$-law]
  \label{ex:betalawfg}
  The $\beta$-law for \skif says that for all $v,w \in \CLFT_{\zero}$ and $t\in \CLFT_\one$,
  \[
    v \xto{w} t \implies v \vvcomp w \contpre_{\one} t \,\land\, t \contpre_{\one} v \vvcomp w.
  \]
  We make use of applicative similarity to prove the above statement. According
  to the soundness theorem, it suffices to come up with an applicative
  simulation which relates the two terms.
  \begin{enumerate}[label=(\roman*)]
  \item $v \vvcomp w \contpre_{\one} t$:\,\,\,Let $R = (\varnothing, \Delta \cup \{(v \vvcomp w, t)\})$. We have $v \vvcomp w \to t$ and $t
    \To t$ and $R_{\one}(t,t)$, which shows that $R$ is an applicative simulation.
  \item $t \contpre_{\one} v \vvcomp w$:\,\,\,  Let $R = (\Delta, \Delta\cup 
    \{(t,v \vvcomp w)\})$. If $t \to u$ for
    some value $u$, then $v \vvcomp w \To u$ and $R_{\zero}(u,u)$. Similarly, if $t\to t'$ for some computation $t'$, then $v \vvcomp w \To t'$ and $R_{\one}(t',t')$. Therefore $R$ is an applicative simulation.
  \end{enumerate}
\end{example}

We will next show the call-by-value $\eta$-law for \skif via the logical
relation $\logrel$. As an auxiliary result, let us first observe that $\logrel$ is closed
under backwards $\beta$-reduction:

\begin{lemma}
  \label{lem:backbeta}
  For all $t,t',s,s' \in \CLFT_{\one}$ and $n\in \Nat$ such that $\logrel^{n}_{\one}(t,s)$, we have
  \[
    \text{(i)}\,\,t' \to t \implies \logrel^{n}_{\one}(t',s) \qquad \text{and} \qquad
    \text{(ii)}\,\,s' \to s \implies \logrel^{n}_{\one}(t,s').
  \]
\end{lemma}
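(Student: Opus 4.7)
The plan is to unfold the concrete definition of $\logrel^n_\one$ case-analytically, splitting on whether $n = 0$ (in which case the statement is trivial since $\logrel^0 = \CLFT\times \CLFT$) or $n = k+1$. Two silently used background facts carry the whole argument: first, the chain $(\logrel^k)_{k\in\Nat}$ is descending, i.e.\ $\logrel^{k+1} \leq \logrel^k$, as follows from the abstract definition $\logrel^{\alpha+1} = \logrel^\alpha \wedge \iimg{(\gamma\times\wt{\gamma})}{\barB(\logrel^\alpha,\logrel^\alpha)}$ in \autoref{def:abslogrel}; second, the small-step reduction of \skif is deterministic, as witnessed by $\gamma^\one$ sending each computation to a singleton set.

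For part (ii), I would unfold $\logrel^{k+1}_\one(t,s')$ and verify its two clauses. For each clause, the hypothesis $\logrel^{k+1}_\one(t,s)$ supplies a witness: either some $w$ with $s \To w$ and $\logrel^k_\zero(v,w)$ if $t \to v$, or some $s''$ with $s \To s''$ and $\logrel^k_\one(t'',s'')$ if $t \to t''$. Prepending the extra step $s' \to s$ and invoking transitivity of $\To$ yields $s' \To w$, respectively $s' \To s''$, as required. No use of determinism or the descending chain is needed here.

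For part (i), determinism is decisive. Unfolding $\logrel^{k+1}_\one(t',s)$, I would check the two clauses for reductions of $t'$. By determinism, $t' \to t$ is the unique one-step reduct of $t'$, and since $t \in \CLFT_\one$ is a computation, no reduction of $t'$ to a value exists, so the value-clause is vacuous. In the computation-clause, the sole reduct to match is $t$ itself, and I would take $s$ as its own witness: reflexivity of $\To$ gives $s \To s$, while $\logrel^k_\one(t,s)$ follows from the given $\logrel^{k+1}_\one(t,s)$ via the descending chain property.

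The main obstacle is essentially cosmetic: there is no genuine difficulty, as both parts amount to one-step unfoldings of the definition of $\logrel^{k+1}_\one$. The key conceptual observations are that determinism of \skif reduces (i) to an identity-witness choice with a vacuous value-clause, while transitivity of $\To$ makes (ii) immediate; the descending chain property of $\logrel$ is used silently in (i) and is a direct consequence of the abstract definition.
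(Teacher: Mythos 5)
Your proposal is correct and follows essentially the same route as the paper's proof: (ii) is handled by prepending $s'\to s$ to the weak transitions supplied by $\logrel^{n+1}_\one(t,s)$, and (i) by taking $s$ as its own witness via $s\To s$ together with $\logrel^n_\one(t,s)$. You merely make explicit two facts the paper uses silently, namely determinism of \skif (so that $t$ is the unique reduct of $t'$ and the value clause is vacuous) and the descending chain property $\logrel^{n+1}\leq\logrel^n$.
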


\begin{proof} Both statements are trivial for $n=0$. We prove them for a positive integer $n+1$:
  \begin{enumerate}[label=(\roman*)]
  \item Suppose $\logrel^{n+1}_{\one}(t,s)$ and $t' \to t$. Then $\logrel^{n+1}_{\one}(t',s)$ holds because $t' \to t$,
    $s \To s$ and $\logrel^{n}_{\one}(t,s)$. 
  \item Suppose $\logrel^{n+1}_{\one}(t,s)$ and $s' \to s$. We have to prove
    that $\logrel^{n+1}_{\one}(t,s')$. Suppose first that $t \to t'$ where $t' \in
    \CLFT_{\one}$. Then there exists $s''$ such that $s \To s''$ and
    $\logrel^{n}_{\one}(t',s'')$. Since $s'\To s''$, this shows $\logrel^{n+1}_{\one}(t,s')$. Analogous for the case $t\to v$ where $v\in \CLFT_{\zero}$.
 \qedhere
  \end{enumerate}
\end{proof}

\begin{example}[$\eta$-law]
  The (call-by-value) $\eta$-law in \skif says that, for all values $v \in
  \CLFT_{\zero}$,
  \begin{equation}
    \label{eq:exeta1}
    v \contpre_{\zero} S''([K'([I])], [v]) \,\land\, S''([K'([I])], [v]) \contpre_{\zero} v.
  \end{equation}
  Alternatively, another way to express the $\eta$-law is as
  \begin{equation}
    \label{eq:exeta2}
    [v] \contpre_{\one} (S \vccomp (K \vvcomp I)) \cvcomp v \,\land\,
    (S \vccomp (K \vvcomp I)) \cvcomp v \contpre_{\one} [v].
  \end{equation}
  We focus on \eqref{eq:exeta1}, as the proof of \eqref{eq:exeta2} follows a
  similar structure.
  By the soundness of $\L$, it suffices to prove that $\logrel^{n+1}_{\zero}(v, S''([K'([I])], [v]))$ and $\logrel^{n+1}_{\zero}(S''([K'([I])], [v]),v)$ for every $n\in \Nat$.
  \begin{enumerate}[label=(\roman*)]
  \item $\logrel^{n+1}_{\zero}(v, S''([K'([I])], [v]))$:\,\,\,
    It suffices to show that
    \[
      \forall {w,u}.\,\logrel^{n}_{\zero}(w,u) \land v \xto{w} t
      \implies \logrel_{\one}^{n}(t, ([K'([I])] \cvcomp u)\cccomp([v]\cvcomp u)).
    \]
    Suppose $v \xto{u} s$, which gives that $([K'([I])] \cvcomp
    u)\cccomp([v]\cvcomp u) \To s$. Since $\logrel$ is reflexive (being a congruence on an initial algebra), we have
    $\logrel_{\zero}(v,v)$ and thus $\logrel^{n}_{\one}(t,s)$.
    By \Cref{lem:backbeta}, we conclude $\logrel_{\one}^{n}(t, ([K'([I])]
    \cvcomp u)\cccomp([v]\cvcomp u))$.
  \item $\logrel^{n+1}_{\zero}(S''([K'([I])], [v]),v)$:\,\,\,
    This case is similar to the previous. It suffices to show that
    \[
      \forall {w,u}.\,\logrel^{n}_{\zero}(w,u) \land v \xto{u} s
      \implies \logrel_{\one}^{n}(([K'([I])] \cvcomp w)\cccomp([v]\cvcomp w),
      s).
    \]
    Suppose $v \xto{w}t$, thus $([K'([I])] \cvcomp w)\cccomp([v]\cvcomp w)
    \To t$. By reflexivity of $\logrel$, we get $\logrel_{\zero}(v,v)$ and so
    $\logrel^{n}_{\one}(t,s)$, which suffices by \Cref{lem:backbeta}.
  \end{enumerate}
\end{example}

\begin{remark}
  The results of our paper, which are themselves instances of a more
  general theory, are \emph{compositional} or \emph{modular} by nature. For
  example, we could enrich the \skif language by adding a unary fixpoint operator $\mathsf{fix}$
  to the grammar with the semantics:
  \[
    \inference{}{\mathsf{fix}(t) \to t\cvcomp S''(K \vvcomp I, \mathsf{fix}(t))}
  \] 
(mimicking the standard reduction $\mathsf{fix}\;f\to f\cdot (\lambda x.\, (\mathsf{fix}\, f)\, x)$ for the fixpoint combinator in call-by-value~\cite{Mitchell96}).
  Since the above rule is premise-free, it does not violate the lax bialgebra
  condition and hence applicative similarity and the step-indexed logical relation in the enriched \skif are still sound for the contextual preorder.
\end{remark}

\section{Call-by-push-value}
\label{sec:cbpv}

Levy's \emph{call-by-push-value} \cite{DBLP:phd/ethos/Levy01,
  DBLP:journals/siglog/Levy22} (CBPV) is a programming paradigm based on the
fundamental principle that computations are reducing program terms, while
values represent finished computations. Importantly, CBPV subsumes
both call-by-name and call-by-value semantics and also supports general
computational effects. We go on to implement a CBPV language 
with recursive types and nondeterministic choice, simply called \cbpv, in the
higher-order abstract GSOS framework. Subsequently, we develop a theory of
program equivalence in \cbpv as an application of the abstract results from \autoref{sec:abstract-op-methods}. 

The types of \cbpv are divided into \emph{value types} and
\emph{computation types}, and defined at the top of \Cref{fig:computations}.
There $\alpha$ ranges over a fixed countably infinite set of type variables. We denote by $\Tyv$ and $\Tyc$ the sets of closed value types and closed computation types, respectively, and by $\Ty=\Tyv+\Tyc$ the set of all closed types. We will use the metavariables $\tau, \tau_{1},\tau_{2},\dots$ to denote generic
types in $\Tyl$. The terms of \cbpv are intrinsically generated by the rules in
\Cref{fig:computations} (where $\Gamma$ ranges over contexts $\{x_1\colon \vtype_1,\ldots, x_n\colon \vtype_n\}$ of variables of value type) and the small-step (open
evaluation) operational semantics of \cbpv are presented in \Cref{fig:lamsemantics}. We annotate
operations with their type arguments (for example $\linl_{\vtype_{1},\vtype_{2}}(t) \c
\vtype_{1} \boxplus \vtype_{2}$) in places where these annotations are useful.
In addition, we will be writing $\Gamma \vdash t \c \tau$ if the sort of the
type $\tau$ is unspecified.

\begin{figure}
  \begin{flalign*}
		\makebox[0pt][l]{\qquad (Value types)}&&\hspace{10em}  \vtype_{1},\vtype_{2},\ldots&\Coloneqq U \ctype \mid \utype \mid \vtype_1\boxplus\vtype_2\mid\vtype_1\boxtimes\vtype_2,&\hspace{2cm}\\ 
		\makebox[0pt][l]{\qquad (Computation types)} && \hspace{10em}\ctype, \ctype_{1},\ctype_{2} \ldots&\Coloneqq \alpha \mid F \vtype \mid \arty{\vtype}{\ctype} \mid \ctype_{1} \otimes \ctype_{2} \mid
      \mu \alpha. \ctype&\\[-2ex]
	\end{flalign*}
\begin{flushleft}
Typing rules for values:\\[-2ex]
\end{flushleft}

\begin{gather*}
  \inference{\Gamma \vdash^{\zero} v \c \vtype_{1}}{\Gamma \vdash^{\zero} \linl_{\vtype_{1},\vtype_{2}}(v) \c \vtype_{1}
    \boxplus \vtype_{2}}
  \qquad
  \inference{\Gamma \vdash^{\zero} v \c \vtype_{2}}{\Gamma \vdash^{\zero} \linr_{\vtype_{1},\vtype_{2}}(v) \c \vtype_{1}
    \boxplus \vtype_{2}}
  \qquad
  \inference{x \c \vtype \in \Gamma}{\Gamma \vdash^{\zero} x \c \vtype}
  \\[1ex]
  \inference{}{\Gamma \vdash^{\zero} \star \c \utype}
  \qquad
  \inference{\Gamma \vdash^{\one} t \c \ctype}{\Gamma \vdash^{\zero} \mathsf{thunk}_{\ctype}(t) \c
    U \ctype}
  \qquad
  \inference{\Gamma \vdash^{\zero} v \c \vtype_{1} &
    \Gamma \vdash^{\zero} w \c \vtype_{2}}{\Gamma \vdash^{\zero}  \mathsf{pair}_{\vtype_{1},\vtype_{2}}(v,w)
  \c \vtype_{1} \boxtimes \vtype_{2}}
\end{gather*}

\bigskip
\begin{flushleft}
Typing rules for computations:\\[-2ex]
\end{flushleft} 

\begin{gather*}
  \inference{\Gamma \vdash^{\zero} v \c \vtype}{\Gamma \vdash^{\one} \mathsf{prod}_{\vtype}(v)
    \c F\vtype} \qquad
  \inference{\Gamma \vdash^{\zero} v \c U \ctype}{\Gamma \vdash^{\one} \mathsf{force}_{\ctype}(v)
    \c \ctype} \qquad
  \inference{\Gamma \vdash^{\zero} v \c \vtype & \Gamma \vdash^{\one} t \c
    \arty{\vtype}{\ctype}}{\Gamma \vdash^{\one} \mathsf{app}_{\vtype,\ctype}(t,v) \c \ctype}
  \\[1ex]
  \inference{\Gamma \vdash^{\one} t \c \ctype[\mu \alpha.\ctype/\alpha]}{\Gamma \vdash^{\one} \mathsf{fold}_{\ctype}(t)
    \c \mu \alpha.\ctype} \qquad
  \inference{\Gamma \vdash^{\one} t \c \mu \alpha.\ctype}{\Gamma \vdash^{\one} \mathsf{unfold}_{\ctype}(t)
    \c \ctype[\mu \alpha.\ctype/\alpha]}
  \\[1ex]
  \inference{\Gamma, x \c \vtype \vdash^{\one} t \c \ctype}
  {\Gamma \vdash^{\one} \mathsf{lam}_{\ctype}\,x\c\vtype.\, t\c \arty{\vtype}{\ctype}}
  \qquad
  \inference{\Gamma \vdash^{\one} t \c \ctype & \Gamma \vdash^{\one} s \c \ctype}
  {\Gamma \vdash^{\one} t \oplus_{\ctype} s \c \ctype}
  \qquad
  \inference{\Gamma \vdash^{\one} s \c F \vtype & \Gamma,x \c \vtype \vdash^{\one} t \c
    \ctype}{\Gamma \vdash^{\one} \toexprt{s}{x}{t}{\vtype}{\ctype} \c \ctype}
  \\[1ex]
  \inference{\Gamma \vdash^{\zero} v \c \vtype_{1} \boxplus \vtype_{2}
    & \Gamma,x \c \vtype_{1} \vdash^{\one} s \c \ctype
    & \Gamma,y \c \vtype_{2} \vdash^{\one} r \c \ctype}
  {\Gamma \vdash^{\one} \mathsf{case}_{\vtype_{1},\vtype_{2},\ctype}(v,x.s,y.r) \c \ctype}
  \qquad
  \inference{\Gamma \vdash^{\one} t \c \ctype_{1} \otimes \ctype_{2}}
  {\Gamma \vdash^{\one} \lfst_{\ctype_{1},\ctype_{2}}(t) \c \ctype_{1}}  \\[1ex]
  \inference{\Gamma \vdash^{\one} t \c \ctype_{1} \otimes \ctype_{2}}
  {\Gamma \vdash^{\one} \lsnd_{\ctype_{1},\ctype_{2}}(t) \c \ctype_{2}}
  \qquad
  \inference{\Gamma \vdash^{\one} t \c \ctype_{1} & \Gamma \vdash^{\one} s \c \ctype_{2}}
  {\Gamma \vdash \mathsf{pair}_{\ctype_{1},\ctype_{2}}(t,s) \c \ctype_{1} \otimes \ctype_{2}}
  \qquad
  \inference{\Gamma \vdash^{\zero} v \c \vtype_{1} \boxtimes \vtype_{2}
    & \Gamma,x \c \vtype_{1},y \c \vtype_{2} \vdash^{\one} t \c \ctype}
  {\Gamma \vdash^{\one} \mathsf{pm}_{\vtype_{1},\vtype_{2},\ctype}(v,(x,y).t) \c \ctype}
\end{gather*}
\caption{Syntax of \cbpv.}
\label{fig:computations}
\end{figure}

We briefly explain the core ideas behind \cbpv. The overarching principle is
that computations ``compute'' (i.e. $\beta$-reduce) by manipulating values. The
value operations
in \Cref{fig:computations} are mostly straightforward, apart from the
$\mathsf{thunk}$ expression: placing a computation inside a $\mathsf{thunk}$
expression essentially ``freezes'' the computation and turns it into a value.
Such a frozen computation may be resumed via the $\mathsf{force}$ expression.
Conversely, the $\mathsf{prod}$ expression turns any value into a computation; a
term $\mathsf{prod}(v)$ represents a finished computation, with the outcome
being $v$. The operation
$\mathsf{to}-\mathsf{in}$ is useful for sequencing computations: a term
$\toexpr{s}{x}{t}$ first evaluates $s$ until it produces a value $v$; it
then moves to $t$, making sure to utilise the produced value $v$.

\subsection{Modelling \cbpv in Higher-Order Abstract GSOS}

The categorical modelling of \cbpv takes place in a universe of (covariant)
presheaves, namely $(\Set^{\fset/{\Tyv}})^{\Tyl}$, based on the established theory of
abstract syntax and variable binding~\cite{DBLP:conf/lics/FiorePT99,DBLP:conf/csl/FioreH10,DBLP:journals/mscs/Fiore22},
and following up on earlier work on modelling call-by-name $\lambda$-calculi in higher-order abstract
GSOS~\cite{gmstu24,gmtu24lics}. One important difference between the present
and the aforementioned works is that the categorical notions of typing contexts,
variables and substitution need to be carefully adjusted to apply to a two-sorted setting with values and computations, in which
only values can be substituted for variables.

\begin{figure}
  \begin{gather*}
  \inference{}{\mathsf{fold}(\mathsf{unfold}(t)) \to t}
    \qquad
  \inference{t\to t'}{\mathsf{app}(t,v)\to \mathsf{app}(t',v)}
  \qquad
  \inference{}{\mathsf{app}((\mathsf{lam}\,x\c\vtype.\,t),v) \to t[v/x]}
  \\[1ex]
  \inference{t \to t'}{\lfst(t) \to \lfst(t')}
  \qquad
  \inference{t \to t'}{\lsnd(t) \to \lsnd(t')}
  \qquad
  \inference{}{\lfst(\mathsf{pair}(t,s)) \to t}
  \qquad
  \inference{}{\lsnd(\mathsf{pair}(t,s)) \to s}
  \\[1ex]
  \inference{}{t \oplus s \to t}
  \qquad
  \inference{}{t \oplus s \to s}
  \qquad
  \inference{s \to s'}{\toexpr{s}{x}{t} \to \toexpr{s'}{x}{t}}
  \qquad
  \inference{s \to \mathsf{prod}(v)}{\toexpr{s}{x}{t} \to \mathsf{app}((\mathsf{lam}\,x.t),v)}
  \\[1ex]
  \inference{}{\mathsf{case}(\linl(v),x.s,y.r)\to
    \mathsf{app}((\mathsf{lam}\,x.s),v)} \qquad
  \inference{}{\mathsf{case}(\linr(v),x.s,y.r)\to
    \mathsf{app}((\mathsf{lam}\,y.r),v)}
  \\[1ex]
  \inference{}{\mathsf{force}(\mathsf{thunk}(t)) \to t}
  \qquad
  \inference{}{\mathsf{pm}(\mathsf{pair}(v,w),(x,y).t) \to
    \mathsf{app}((\mathsf{lam}\,y.\mathsf{app}((\mathsf{lam}\,x.t),v)),w)}
\end{gather*}
\caption{Small-step operational semantics for \cbpv.}
\label{fig:lamsemantics}
\end{figure}

\paragraph{The category of typing contexts.}

Let $\fset/{\Tyv}$ be the free cocartesian category over the set of
(value) types $\Tyv$ or, equivalently, the comma category $\fset \xto{J} \Set
\xleftarrow{\Tyv} 1$, where $J$ is the inclusion of the category $\fset$ of
finite cardinals to the category of sets. In detail, the objects of
$\fset/{\Tyv}$ are functions
$\Gamma \c n=\{1,\dots,n\} \to \Tyv$, while morphisms $r \c \Gamma \to
\Delta$ are functions $|r| \c \dom(\Gamma) \to
\dom(\Delta)$ such that $\Gamma = \Delta \comp |r|$. The identity
morphism on an object $\Gamma$ is given by the identity function $\id \c
\dom(\Gamma) \to \dom(\Gamma)$ and composition in $\fset/{\Tyv}$
is given by function composition.

The
category $\fset/{\Tyv}$ is intuitively understood as the category of
\emph{typing contexts}: an object $\Gamma \c n \to \Tyv$ represents the context
$\{x_{1} \c \Gamma(1), x_{2} \c \Gamma(2),\dots,x_{n} \c \Gamma(n)\}$.
Morphisms of $\fset/{\Tyv}$ are \emph{renamings}: a morphism $r \c
\Gamma \to \Delta$ from $\Gamma$ to $\Delta$ maps each variable
$x_{i} \in \Gamma$ to the corresponding variable $x_{|r|(i)} \in \Delta$. The
condition $\Gamma = \Delta \comp |r|$ translates to $x_{i}$ and
$x_{|r|(i)}$ being of the same type. Each value type $\vtype \in \Tyv$ induces the
single-variable context 
  $\check \vtype \c 1 \to \Tyv$, such that $\check \vtype(\star) = \vtype$
  or informally,
  $\check\vtype = \{x_{1} \c \vtype\}$.
The initial object of\/ $\fset/{\Tyv}$ is the empty context $\varnothing \c 0 \to
\Tyv$; coproducts in $\fset/{\Tyv}$ are formed by copairing:
\begin{equation}
  \label{eq:coprod}
  (\Gamma \c \dom(\Gamma) \to \Tyv) + (\Delta
  \c \dom(\Delta)
  \to \Tyv)
  = [\Gamma, \Delta] \c \mathrm{dom}(\Gamma) + \mathrm{dom}(\Delta) \to \Tyv.
\end{equation}

\begin{notation}
  We introduce the following notation when $\Delta$ in \eqref{eq:coprod}
  is $\check\vtype$ for some type $\vtype \in \Tyv$:
  \begin{equation}
  \label{coproduct}
  \Gamma \xrightarrow{\oname{old}^{\vtype}_{\Gamma}} \Gamma +
  \check\vtype \xleftarrow{\oname{new}^{\vtype}_{\Gamma}} \check\vtype
\end{equation}
Here, $\oname{old}^{\vtype}_{\Gamma}$ maps a variable $x_{i}\in \Gamma$ to
the variable $x_{i}\in \Gamma + \check\vtype$ and the variable $x \in
\check\vtype$ to $x_{|\Gamma|+1}$.
\end{notation}

\paragraph{The category of variable sets.}
Moving over to category $(\Set^{\fset/{\Tyv}})^{\Tyl}$, its objects are
$\Tyl$-indexed families $(X_{\tau})_{\tau \in \Tyl}$ of \emph{presheaves} in
$\Set^{\fset/{\Tyv}}$. Given $X\in (\Set^{\fset/{\Tyv}})^{\Tyl}$ we write $X_{\tau}(\Gamma)$ for $X(\tau)(\Gamma)$ and $X_\tau(r)$ for $X(\tau)(r)$. Our main example of an object in
$(\Set^{\fset/{\Tyv}})^{\Tyl}$ is the presheaf $\Lambda$ of \cbpv-terms:
\begin{equation}
  \label{eq:presheaflambda}
  \Lambda_{\tau}(\Gamma) = \{t \mid \Gamma \vdash t \c \tau \},
\end{equation}
where all terms are taken modulo $\alpha$-equivalence. The map $\Lambda_{\tau}(r)$ acts by renaming free variables in terms according to $r\colon \Gamma\to \Delta$.

A morphism (i.e., natural transformation) $f \c X \to Y$ in $(\Set^{\fset/{\Tyv}})^{\Tyl}$ is a
family of functions $f_{\tau,\Gamma}\colon X_{\tau}(\Gamma)\to Y_{\tau}(\Gamma)$,
indexed by $\tau \in \Tyl$ and $\Gamma \in \fset/{\Tyv}$, that is
compatible with renaming:
\[
  \forall r \c \Gamma \to \Delta.\, Y_{\tau}(r) \comp f_{\tau,\Gamma} =
  f_{\tau,\Delta} \comp X_{\tau}(r).
\]
A \emph{relation} on $X\in (\Set^{\fset/{\Tyv}})^{\Tyl}$ is represented by a monomorphism $R
\monoto X \times X$, that is, a family of relations $R_{\tau}(\Gamma)
\subseteq X_{\tau}(\Gamma) \times X_{\tau}(\Gamma)$ that are compatible with
renamings. For any $X \in (\Set^{\fset/{\Tyv}})^{\Tyl}$, relations on $X$ form a
complete lattice in the expected way: the join $\bigcup_i R_i$ of relations $R_i \monoto X \times X$ ($i\in I$) is given by $(\bigcup_i R_i)_{\tau}(\Gamma)=\bigcup_i (R_i)_{\tau}(\Gamma)$ for
each $\tau$ and $\Gamma$.

\begin{notation}
  We occasionally omit the subscripts $\tau$ and $\Gamma$ when applying
  morphisms in $(\Set^{\fset/{\Tyv}})^{\Tyl}$, e.g., we write $f(p)$ instead of
  $f_{\tau,\Gamma}(p)$ for $f\colon X\to Y$ and $p \in X_{\tau}(\Gamma)$.
\end{notation}

\paragraph{Core constructions in $(\Set^{\fset/{\Tyv}})^{\Tyl}$.}

We set up the basic definitions in $(\Set^{\fset/{\Tyv}})^{\Tyl}$ that enable
the categorical modelling of the syntax and semantics of call-by-value languages
with variable binding and substitution. The presheaf $V\in (\Set^{\fset/{\Tyv}})^{\Tyl}$ of \emph{variables} is given by $V_\kappa(\Gamma)=\emptyset$ and
\[
  V_{\vtype}(\Gamma) = \{x \in \dom(\Gamma) \mid \Gamma(x) = \vtype\},
  \qquad
  V_{\vtype}(r)(x) = r(x) \,\,\text{ for } r \c
  \Gamma \to \Delta,
\]

An important construction in $(\Set^{\fset/{\Tyv}})^{\Tyl}$ is the
\emph{substitution tensor} $- \bullet -$. It is a novel
variation of the substitution tensor of \cite{DBLP:conf/lics/FiorePT99}, and is defined as the following
coend for $X,Y\in (\Set^{\fset/{\Tyv}})^{\Tyl}$:  
\begin{equation}
  (X \bullet_{\tau} Y) (\Gamma) = \int^{\Delta \in {\fset/{\Tyv}}} X_{\tau}(\Delta)
  \times \prod\nolimits_{i \in \dom(\Delta)}Y_{\Delta(i)}(\Gamma).
\end{equation}

More explicitly, elements of $(X \bullet_{\tau} Y) (\Gamma)$ are terms paired with
\emph{substitutions}, meaning triples $(\Delta,t,\vec{u})$ where $\Delta
\in \fset/{\Tyv}$,
$t \in X_{\tau}(\Delta)$ and $\vec{u} \in \prod_{i \in
  \dom(\Delta)}Y_{\Delta(i)}(\Gamma)$, modulo the equivalence relation generated by
$(\Delta,t,\vec{u}) \approx (\Delta',t',\vec{u}')$, identifying such
triples if and only if there exists $r \c \Delta \to \Delta'$ with
$X_{\tau}(r)(t) = t'$ and $u(i) = u'(r(i))$\,\footnote{In the published version of the current
  manuscript~\cite{10.1145/3704871}, we incorrectly state that
  $(\Set^{\fset/{\Tyv}})^{\Tyl}$ is monoidal w.r.t $\bullet$ and $V$. We wish to thank
  Ohad Kammar for pointing this out.}.
For each $Y \in (\Set^{\fset/{\Tyv}})^{\Tyl}$, the functor $(\argument)
\bullet Y \c (\Set^{\fset/{\Tyv}})^{\Tyl} \to (\Set^{\fset/{\Tyv}})^{\Tyl}$ has
a right adjoint $\llangle Y, \argument \rrangle$,
given by
\begin{equation}
  \llangle Y,Z \rrangle_{\tau}(\Gamma) = \Set^{\fset/{\Tyv}}\bigl(\prod\nolimits_{i \in
    \dom(\Gamma)}Y_{\Gamma(i)}, Z_{\tau}\bigr).
\end{equation}
An element of $\llangle Y,Z \rrangle_{\tau}(\Gamma)$, namely a natural transformation $f\colon \prod_{i\in \dom(\Gamma)} Y_{\Gamma(i)}\to Z$, models the \emph{simultaneous
substitution} of a tuple of $Y$-terms of length $\dom(\Gamma)$ that
produces a term in $Z_{\tau}$.

\begin{notation}
  We use the following notation for the natural isomorphism witnessing the
  adjoint situation $(\argument) \bullet Y\dashv \llangle Y, \argument \rrangle$:\\[-2ex]
  \[
    \begin{tikzcd}
      (\Set^{\fset/{\Tyv}})^{\Tyl}(X \bullet Y, Z)
      \ar[phantom]{r}{\cong}
      \ar[bend left=1em]{r}{(\argument)^{\flat}}
      & (\Set^{\fset/{\Tyv}})^{\Tyl}(X, \llangle Y , Z \rrangle) \ar[bend
      left=1em]{l}{(\argument)^{\sharp}}
    \end{tikzcd}
  \]
\end{notation}
As an example of simultaneous substitution, there is map $\mathrm{sub} \c
\Lambda \bullet \Lambda \to \Lambda$ (equivalently, a map
$\mathrm{sub}^{\flat} \c \Lambda \to \llangle \Lambda,\Lambda\rrangle$),
mapping pairs of terms $\Delta \vdash t \c \tau$ and substitutions
$\vec{u} \c \prod_{i \in \dom(\Delta)}\Lambda_{\Delta(i)}(\Gamma)$ to $\Gamma \vdash t[\vec{u}] \c \tau$, where $t[\vec{u}]$ denotes the term obtained by substituting $\vec{u}$ for the free variables in $t$. Since morphisms in
$(\Set^{\fset/{\Tyv}})^{\Tyl}$ are natural transformations, naturality for
$\mathrm{sub} \c \Lambda \bullet \Lambda \to \Lambda$ means that substitution in
\cbpv commutes with renamings of free variables. 

We next look at exponentials in $\Set^{\fset/{\Tyv}}$. Given $X,Y\in \Set^{\fset/\Tyv}$, the exponential $Y^{X}$ and
its evaluation morphism \mbox{$\ev\c Y^X \times X \to Y$} in $\Set^{\fset/{\Tyv}}$
are respectively computed by the following formulas, standard in presheaf
toposes~\cite[Sec.~I.6]{MacLaneMoerdijk92}:
\[
  Y^{X}(\Gamma) =  \mbox{$\Set$}^{\fset/{\Tyv}}\bigl({\fset/{\Tyv}}(\Gamma,
  \argument) \times X, Y\bigr) \quad \text{and} \quad
  \ev_{\Gamma}(f, x) = f_{\Gamma}(\id_{\Gamma}, x) \in Y(\Gamma).
\]
We write $f(x)$ for $\ev_{\Gamma}(f,x)$. This allows us to construct presheaves such as
$Y_{\ctype}^{X_{\vtype}}$ to represent a space of \emph{functions} of the
type $\arty{\vtype}{\ctype}$.

To each value type $\vtype \in \Tyv$ we associate the
endofunctor $\delta^{\vtype} \c \Set^{\fset/{\Tyv}} \to \Set^{\fset/{\Tyv}}$,
where
\[
  \delta^{\vtype}X(\Gamma) = X(\Gamma + \check\vtype).
\]
The endofunctor $\delta^{\vtype}$ abstractly captures \emph{variable binding}: informally, an element of $\delta^{\vtype} X(\Gamma)$ arises from a $X$-term in context $\Gamma + \check\vtype$ by binding the distinguished variable of type $\vtype$. 

One useful fundamental operation in $\Set^{\fset/{\Tyv}}$ is that of
\emph{weakening} given by the natural transformation $\up^{\vtype} \c \mathrm{Id} \to \delta^{\vtype}$, which
is induced by the coproduct structure of $\fset/{\Tyv}$:
\begin{equation}\label{eq:up}
  \up_{X,\Gamma}^{\vtype} = X(\oname{old}_{\Gamma}^{\vtype})\c X(\Gamma)\to X(\Gamma+\check\vtype).
\end{equation}
Informally, this operation regards a term in context $\Gamma$ as a term in context $\Gamma + \check\vtype$.

\paragraph{Categorical modelling of \cbpv} We are prepared to implement the
syntax and semantics of \cbpv in the higher-order abstract GSOS framework. The syntax of the language is modeled by the syntax endofunctor $\Sigma \c
(\Set^{\fset/{\Tyv}})^{\Tyl} \to (\Set^{\fset/{\Tyv}})^{\Tyl}$ that is canonically
induced by the ({binding}) signature of \cbpv. It is defined as follows:
\begin{equation}
  \label{eq:cbpvsyn}
  \begin{aligned}
    & \Sigma_{\tau} X = \underbrace{V_{\tau}}_{\text{Variables}} +\, \Sigma_{\tau}^{1}X + \Sigma_{\tau}^{2}X + \Sigma^{3}_{\tau}X + \underbrace{\coprod\nolimits_{\ctype \c \tau = \ctype[\mu \alpha.\ctype/\alpha]}X_{\mu \alpha.\ctype}}_{\mathsf{unfold}}, \quad \text{where}
    \\
    & \Sigma^{1}_{\vtype_{1} \boxplus \vtype_{2}} X = \underbrace{X_{\vtype_{1}}}_{\linl} + \underbrace{X_{\vtype_{2}}}_{\linr}, \quad
      \Sigma^{1}_{\vtype_{1} \boxtimes \vtype_{2}} X = \underbrace{X_{\vtype_{1}} \times X_{\vtype_{2}}}_{\mathsf{pair}_{\vtype_{1},\vtype_{2}}}, \quad
      \Sigma^{1}_{\utype} X = \underbrace{1}_{\star},
      \quad \Sigma^{1}_{U\ctype}X = \underbrace{X_{\ctype}}_{\mathsf{thunk}},
    \\
    & \Sigma^{1}_{F\vtype} X = \underbrace{X_{\vtype}}_{\mathsf{prod}}, \qquad
      \Sigma^{1}_{\arty{\vtype}{\ctype}} X = \underbrace{\delta^{\vtype}X_{\ctype}}_{\mathsf{lam}}, \qquad
      \Sigma^{1}_{\ctype_{1} \otimes \ctype_{2}} X  =
      \underbrace{X_{\ctype_{1}} \times X_{\ctype_{2}}}_{\mathsf{pair}_{\ctype_{1},\ctype_{2}}},
      \qquad \Sigma^{1}_{\mu \alpha.\ctype} X  =
      \underbrace{X_{\ctype[\mu \alpha.\ctype/\alpha]}}_{\mathsf{fold}},
    \\
    & \Sigma^{2}_{\ctype} X = \coprod_{\vtype_{1} \in \Tyv}
      \coprod_{\vtype_{2} \in \Tyv}
      \underbrace{X_{\vtype_{1} \boxplus \vtype_{2}}
      \times \delta^{\vtype_{1}}X_{\ctype}
      \times \delta^{\vtype_{2}}X_{\ctype}}_{\mathsf{case}}
      + \underbrace{X_{\vtype_{1} \otimes \vtype_{2}} \times
      \delta^{\vtype_{2}}\delta^{\vtype_{1}}X_{\ctype}}_{\mathsf{pm}},
      \quad \Sigma^{2}_{\vtype}X = 0,
    \\
    & \Sigma^{3}_{\ctype} X =
      \coprod_{\vtype \in \Tyv}
      (\underbrace{X_{\arty{\vtype}{\ctype}} \times X_{\vtype}}_{\mathsf{app}}
      + \underbrace{X_{F\vtype} \times \delta^{\vtype}X_{\ctype}}_{\mathsf{to}\text{-expression}}) +
      \coprod_{\ctype_{1} \in \Tyc}(\underbrace{X_{\ctype \boxtimes \ctype_{1}}}_{\lfst}
      + \underbrace{X_{\ctype_{1} \boxtimes \ctype}}_{\lsnd}) +
      \underbrace{X_{U\ctype}}_{\mathsf{force}} +
      \underbrace{X_{\ctype} \times X_{\ctype}}_{\argument\oplus\argument},
      ~~~
      \Sigma^{3}_{\vtype}X = 0.
  \end{aligned}
\end{equation}
The initial $\Sigma$-algebra
$\mu\Sigma$ is carried by the presheaf $\Lambda$
of \cbpv-terms\,\eqref{eq:presheaflambda}, cf.~\cite{DBLP:conf/csl/FioreH10}.

The behaviour bifunctor $B\colon (\cbpvcat)^\opp\times (\Set^{\fset/{\Tyv}})^{\Tyl}\to (\Set^{\fset/{\Tyv}})^{\Tyl}$ for \cbpv is given by
\begin{equation} 
  \label{eq:cbpvbeh}
  B(X,Y)= \llangle X,Y \rrangle \times \Pow_{\star}B'(X,Y),
\end{equation}
where $(\Pow_{\star}X)_{\tau} = \mathcal{\Pow} \cdot X_{\tau}$ (recall that
$\Pow \c \Set \to \Set$ is the power set functor) and
\begin{equation*}
    \begin{aligned}
      B'_{\vtype}(X,Y)
      &\;= D_{\vtype}(X,Y),
      &B'_{\ctype}(X,Y)
      &\;= Y_{\ctype} + D_{\ctype}(X,Y) \\
      D_{U\ctype}(X,Y)
      &\;= Y_{\ctype}
      &D_{\mu \alpha.\ctype}(X,Y)
      &\;= Y_{\ctype[\mu \alpha.\ctype/\alpha]}
      \\
      D_{\utype}(X,Y)
      &\,= 1,
      & D_{\vtype_{1} \boxplus \vtype_{2}}(X,Y)
      &\,= Y_{\vtype_{1}} +  Y_{\vtype_{2}}
      & D_{\vtype_{1} \boxtimes \vtype_{2}}(X,Y)
      &\;= Y_{\vtype_1} \times Y_{\vtype_2}, \\
      D_{\arty{\vtype}{\ctype}}(X,Y)
    &\;= Y_{\ctype}^{X_{\vtype}},
    & D_{F\vtype}(X,Y)
      &\;= Y_{\vtype},
    & D_{\ctype_{1} \otimes \ctype_{2}}(X,Y)
    &\;= Y_{\ctype_1} \times Y_{\ctype_2}.
  \end{aligned}
\end{equation*}
The component
$B'_{\tau}(X,Y)$ gives the range of dynamic behaviour of terms of type $\tau$.
For example, a computation $\Gamma \vdash t \c
\arty{\vtype}{\ctype}$ may $\beta$-reduce, witnessed by the left component $Y_{\arty{\vtype}{\ctype}}$ in
$B'_{\arty{\vtype}{\ctype}}(X,Y)$, or act as a function on terms, witnessed by 
$D_{\arty{\vtype}{\ctype}}(X,Y)=Y_\ctype^{X_\vtype}$.
The full behaviour bifunctor $B$ is the cartesian
product of $\llangle X,Y \rrangle $ and $\Pow_{\star}B'(X,Y)$. Hence, we consider each term as exhibiting two types of behaviour,
the first being its simultaneous substitution structure and the second being its
dynamics. The operational model of the higher-order GSOS law for
\cbpv should thus be given by
\begin{equation}\label{eq:op-model-cbpv}
  \langle \mathrm{sub}^\flat, \gamma^{2} \rangle \c \Lambda \to
  \llangle \Lambda,\Lambda \rrangle \times \Pow_{\star}B'(\Lambda,\Lambda),
\end{equation}
where $\mathrm{sub}^\flat$ and $\gamma^{2}$ model, resp., simultaneous substitution and the
dynamics of \cbpv. 

We capture the operational semantics of \cbpv by a $V$-pointed
higher-order GSOS law $\rho$ of $\Sigma$ \eqref{eq:cbpvsyn} over $B$
\eqref{eq:cbpvbeh}. The component
\[\rho_{(X,\var),Y}\colon \Sigma(X \times \llangle X,Y \rrangle
    \times \Pow_{\star}B'(X,Y)) \to
    \llangle X,\Sigmas(X+Y)\rrangle \times \Pow_{\star}B'(X,\Sigma^{\star}(X + Y)) \] 
is defined as a pairing $\langle\rho^{1}_{(X,\var),Y} \comp
\Sigma p,\,\rho^{2}_{(X,\var),Y}\rangle$, where $p$ is the middle product projection and
\begin{equation}
  \label{eq:rhorho}
    \begin{aligned}
  & \rho^{1}_{(X,\var),Y} \c \Sigma\llangle X,Y \rrangle \to
    \llangle X, \Sigma^{\star}(X + Y) \rrangle, \\
  & \rho^{2}_{(X,\var),Y} \c \Sigma(X \times \llangle X,Y \rrangle
    \times \Pow_{\star}B'(X,Y)) \to
    \Pow_{\star}B'(X,\Sigma^{\star}(X + Y)).
  \end{aligned}
\end{equation}
Here $\rho^{1}$ models the simultaneous substitution structure
of $\Lambda$ in the sense that it induces the map $\mathrm{sub}^{\flat} \c
\Lambda \to \llangle \Lambda,\Lambda\rrangle$. For each $\tau\in\Ty$ and $\Gamma\in \fset/\Tyv$, the component
\[ \rho^{1}_{(X,\var),Y,\tau,\Gamma} \c \Sigma_\tau\llangle X,Y \rrangle(\Gamma) \to
    \Set^{\fset/{\Tyv}}\bigl(\prod\nolimits_{i \in
    \dom(\Gamma)}X_{\Gamma(i)}, \Sigmas_{\tau}(X+Y)\bigr) \]
is defined as follows. (We only give a few exemplary clauses; a full definition of $\rho^{1}_{(X,\var),Y,\tau,\Gamma}$ can be found in the extended arXiv version~\ifarx{\autoref{app:rho-cbpv}}{\cite[App.~A]{gtu25_arxiv}} of our paper.)
\begin{alignat*}{4}
& x,\vec{u}
    && \quad \mapsto \quad && \vec{u}_{x} &&\qquad (x\in V_\tau(\Gamma), \text{ i.e. } \Gamma(x)=\tau \})\\
& \mathsf{app}_{\vtype,\ctype}(f,g),\vec{u}
    &&\quad\mapsto\quad && \mathsf{app}_{\vtype,\ctype}(f_\Delta(\vec{u}),g_\Delta(\vec{u}))  && \hspace*{-2.6em}(f\in \llangle X,Y\rrangle_{\arty{\vtype}{\ctype}}(\Gamma),\, g\in \llangle X,Y\rrangle_{\vtype}(\Gamma))\\
& \mathsf{force}_\ctype(f),\vec{u}
    &&\quad\mapsto\quad && \mathsf{force}_\ctype(f_\Delta(\vec{u})) &&\qquad\qquad \hspace*{1.0em} (f\in \llangle X,Y\rrangle_{U\ctype}(\Gamma))\\
& \mathsf{lam}_\ctype\,x\c\vtype. f,\vec{u}
    &&\quad\mapsto\quad && \mathsf{lam}_\ctype\,x\c\vtype.f_{\Delta+\check\vtype}(\vec{\up}_{X,\Delta}^\vtype(\vec{u}), \var_{\Delta+\check\vtype,\vtype}(\oname{new}_{\Delta}^{\vtype})) && \qquad\qquad\hspace*{1em} (f\in \delta^{\vtype}\llangle X,Y\rrangle(\Gamma))
\end{alignat*}
This requires some explanation. We fix a context $\Delta\in \fset/\Phi$ and use ``curried'' notation; for
instance, the third clause states that
$\rho^{1}_{\vtype_{1} \boxplus \vtype_{2},\Gamma}(\mathsf{force}_\ctype(f))
 = \lambda \vec{u}.\mathsf{force}_\kappa(f_\Delta(\vec{u}))$), where $\vec{u}\in \prod_{i\in \dom_{\Gamma(i)}} X_{\Gamma(i)}(\Delta)$. In the first clause, $\vec{u}_i\in X_{\Gamma(i)}(\Delta)$ denotes the $i$-th entry of $\vec{u}$.
In the last clause, we use the operation $\vec{\up}^{\vtype}$ that applies the operation $\up^{\vtype}$ \eqref{eq:up} componentwise, i.e.\  it takes $\vec{u}$ to $\vec{\up}^{\vtype}(\vec{u})
\in \prod_{i \in
\dom(\Gamma)}X_{\Gamma(i)}(\Delta + \check\vtype)$. Unfolding the last clause takes some care (see~\ifarx{\autoref{app:rho-cbpv}}{\cite[App.~A]{gtu25_arxiv}} for details); intuitively, this clause simply expresses that a simultaneous substitution $(\lambda x.t)[\vec{u}]$ is given by $\lambda x.t[\vec{u},x]$.

  Similarly, $\rho^{2}$ models the dynamics of \cbpv. Its component at $\tau\in \Ty$ and $\Gamma\in\fset/\Tyv$ is the map
\[    \rho^{2}_{(X,\var),Y,\tau,\Gamma} \c
     \Sigma_\tau(X \times \llangle X,Y \rrangle
      \times \Pow_{\star}B'(X,Y))(\Gamma)
     \to \Pow\cdot B'_\tau(X,\Sigma^{\star}(X + Y))(\Gamma);
\]
defined as follows. (Again, we give a few exemplary clauses and refer to~\ifarx{\autoref{app:rho-cbpv}}{\cite[App.~A]{gtu25_arxiv}} for a full definition.)
{\allowdisplaybreaks
  \begin{alignat*}{4}
    &{x}
    && \quad\mapsto\quad && \emptyset \\[-1ex]
    &{\mathsf{app}_{\vtype,\ctype}((-,-,L),(v,-,-))}
    && \quad\mapsto\quad && \{ f(v) \mid f \in L \cap
       Y_{\ctype}^{X_{\vtype}}(\Gamma)\}\,{\cup}\, \{\mathsf{app}_{\vtype,\ctype}(t', v) \mid t' \in L \cap
        Y_{\arty{\vtype}{\ctype}}(\Gamma)\} \\
    &{\mathsf{prod}_{\vtype}(v,-,-)}
    && \quad\mapsto\quad && \{v\}\\
    &{\mathsf{lam}_{\ctype}\,x\c
      \vtype.\,(-,f,-)}
    && \quad\mapsto\quad && \{\lambda e. f(x_{1},\dots,x_{\dom(\Gamma)},e)\}
\end{alignat*}
Here $x\in V_\vtype(\Gamma)$ in the first clause, $L\in \Pow\cdot B'_{\arty{\vtype}{\ctype}}(X,Y)(\Gamma)$ and $v\in X_{\vtype}(\Gamma)$ in the second clause, $v\in X_\vtype$ in the third clause, and $f\in \delta^\vtype\llangle X,Y \rrangle_{\ctype}(\Gamma)$ in the last clause. Moreover, the expression $\lambda e. f(x_{1},\dots,x_{\dom(\Gamma)},e)$ on the right-hand side denotes the element of $(\Sigmas_\ctype(X+Y))^{X_\vtype}(\Gamma) = \mbox{$\Set$}^{\fset/{\Tyv}}\bigl({\fset/{\Tyv}}(\Gamma,
  \argument) \times X_\vtype, \Sigmas_\ctype(X+Y)\bigr)$ whose component at $\Delta\in \fset/\Tyv$ sends $(h,e)\in \fset/\Tyv(\Gamma,\Delta)\times X_\vtype(\Delta)$ to $f_\Delta(x_1,\ldots,x_{\dom(\Gamma)},e)$, where $x_i\in X_{\Gamma(i)}(\Delta)$ is the image of $i\in \dom(\Gamma)$ under the map
\[ V_{\Gamma(i)}(\Gamma)\xto{V_{\Gamma(i)}h} V_{\Gamma(i)}(\Delta) \xto{\var_{\Gamma(i),\Delta}} X_{\Gamma(i)}(\Delta).\]
The clause thus specifies the desired labeled transition 
$\mathsf{lam}_\kappa\, x\colon \varphi.t \xto{e} t[e/x]$ in the model \eqref{eq:op-model-cbpv}.

\begin{remark}
 The need for substitution $\llangle-,-\rrangle$ as part of the behaviour functor \eqref{eq:cbpvbeh} comes (only) from the clause for $\lambda$-abstraction in the definition of $\rho^2$, where the substitution structure $f$ of the term $t$ is used to describe the behaviour of $\mathsf{lam}_\kappa\, x\colon \varphi.t$. This phenomenon occurs in the modelling of all languages with binders in higher-order abstract GSOS; see e.g.~\cite{gmstu23,gmtu24lics}. 
\end{remark}

\begin{remark} Under the prism of higher-order abstract GSOS, terms that do not $\beta$-reduce
are also assigned a dynamics via the component $\rho^{2}$. As before, these
assignments can be given in the form of labeled transitions. For example,
in a $\mathsf{prod}$ expression,
\[
  \mathsf{prod}_{\vtype}(v,-,-) \;\mapsto\; \{v\} \qquad
   \text{corresponds to} \qquad
  \frac{}{\mathsf{prod}(v) \xto{U} v}.
\]
In other words, a $\mathsf{prod}$ expression signals (via the distinguished
label $U$) that it is a producer with value $v$. This allows us to
implement rules such as
\[
  \frac{s \to \mathsf{prod}(v)}{\toexpr{s}{x}{t} \to
    \mathsf{app}((\mathsf{lam}\,x.t),v)}
  \quad \text{and} \quad
  \frac{}{\mathsf{force}(\mathsf{thunk}(t)) \to t}
\]
in the style of higher-order abstract GSOS as
\[
  \inference{s \xto{U} v}{\toexpr{s}{x}{t} \to
    \mathsf{app}((\mathsf{lam}\,x.t),v)} \quad \text{and} \quad
  \inference{s \xto{F} t}{\mathsf{force}(s) \to t}.
\]
\end{remark}

The operational model of $\rho$ is the coalgebra $
\gamma = \langle \gamma^{1},\gamma^{2} \rangle \c \Lambda
\to \llangle \Lambda,\Lambda\rrangle \times \Pow_\star B'(\Lambda,\Lambda) = B(\Lambda,\Lambda)$ 
given by \autoref{fig:gamma}. The following two propositions assert that the component $\gamma^{1}$ is
indeed the simultaneous substitution map $\mathrm{sub}^\flat$ and that
$\gamma^{2}$ accurately models $\beta$-reduction in \cbpv.

\begin{proposition}
  \label{prop:sub}
  For any well-typed term $\Gamma \vdash t \c \tau$, the following is true:
  \[
    \forall \vec{u} \in \prod_{i \in
      \dom(\Gamma)}\Lambda_{\Gamma(i)}(\Delta), \qquad \gamma^{1}(t)(\vec{u}) = t[\vec{u}].
  \]
\end{proposition}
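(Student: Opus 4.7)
The plan is to prove the stronger equation $\gamma^{1} = \mathrm{sub}^{\flat}$ as morphisms $\Lambda \to \llangle \Lambda, \Lambda\rrangle$ in $(\Set^{\fset/\Tyv})^{\Tyl}$; evaluating both sides at a tuple $\vec{u}$ then yields the proposition, since $\mathrm{sub}^{\flat}(t)(\vec{u}) = t[\vec{u}]$ by definition of $\mathrm{sub}^{\flat}$ as the transpose of $\mathrm{sub}$. Because $\Lambda = \mu\Sigma$ is the initial $\Sigma$-algebra, this identity is naturally established by structural induction along the generators of $\Sigma$, matching the defining clauses of $\rho^{1}$ with the inductive clauses of capture-avoiding substitution.

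The first step is to extract a recursion equation for $\gamma^{1}$. Projecting the commuting diagram of \autoref{fig:gamma} onto the first factor of $B(\Lambda,\Lambda) = \llangle \Lambda,\Lambda\rrangle \times \Pow_{\star}B'(\Lambda,\Lambda)$, and using that $\rho = \langle \rho^{1}\comp \Sigma p,\, \rho^{2}\rangle$ with $p$ the middle projection, one readily obtains
\[
  \gamma^{1}\comp \ini \;=\; \llangle \Lambda,\, \hat\ini\comp \Sigmas\nabla\rrangle \comp \rho^{1}_{\Lambda,\Lambda} \comp \Sigma\gamma^{1}.
\]
Hence $\gamma^{1}$ is uniquely determined by structural recursion, with the recursive step dictated by $\rho^{1}$. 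It now suffices to check that, when this equation is unfolded pointwise on each constructor of $\Sigma$ and evaluated at a tuple $\vec{u}$, one recovers the corresponding inductive clause of capture-avoiding substitution on $\Lambda$.

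The first-order cases are immediate from the definitions. For a variable $x\in V_{\vtype}(\Gamma)$, the equation yields $\gamma^{1}(x)(\vec{u}) = \vec{u}_{x} = x[\vec{u}]$; for a constructor $\f(t_{1},\ldots,t_{n})$ without binders, such as $\mathsf{app}$, $\mathsf{force}$, $\mathsf{prod}$, $\mathsf{fold}$, $\mathsf{unfold}$, or any of the pairing/injection operators, the $\rho^{1}$-clause simply evaluates the recursion hypotheses at $\vec{u}$ and reassembles them, matching the congruence step $\f(t_{1},\ldots,t_{n})[\vec{u}] = \f(t_{1}[\vec{u}],\ldots,t_{n}[\vec{u}])$ by the induction hypothesis.

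The main obstacle is the binder cases, exemplified by $\mathsf{lam}$; the clauses for $\mathsf{case}$, $\mathsf{pm}$, and the $\mathsf{to}$-expression are analogous. For $t = \mathsf{lam}_{\ctype}\,x\c\vtype.\, t_{0}$, writing $f = \gamma^{1}(t_{0})$ by the induction hypothesis, the recursion equation unfolds to
\[
  \gamma^{1}(t)(\vec{u}) \;=\; \mathsf{lam}_{\ctype}\,x\c\vtype.\; f_{\Delta+\check\vtype}\bigl(\vec{\up}^{\vtype}(\vec{u}),\, \var(\oname{new}^{\vtype})\bigr),
\]
and one must verify this equals $(\mathsf{lam}_{\ctype}\,x\c\vtype.\, t_{0})[\vec{u}]$. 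The key ingredients are naturality of $\gamma^{1}$ along the weakening renaming $\oname{old}^{\vtype}_{\Delta}\colon \Delta \to \Delta+\check\vtype$, which ensures that the weakened tuple $\vec{\up}^{\vtype}(\vec{u})$ correctly extends $\vec{u}$ to the enlarged context, together with the standard identity $(\mathsf{lam}\,x.\, t_{0})[\vec{u}] = \mathsf{lam}\,x.\, t_{0}[\vec{u}, x]$ for capture-avoiding substitution on $\alpha$-equivalence classes. Combining these facts closes the induction step, and the analogous binder cases finish the proof.
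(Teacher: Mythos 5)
Your proposal is correct and follows essentially the same route as the paper: a structural induction over $t$ that matches the defining clauses of $\rho^{1}$ (equivalently, the recursion equation for $\gamma^{1}$ extracted from the operational-model diagram) against the inductive clauses of simultaneous substitution, with the variable and $\mathsf{lam}$ cases as the representative ones. The only difference is presentational — you make explicit the projected recursion equation $\gamma^{1}\comp\ini = \llangle \Lambda,\hat\ini\comp\Sigmas\nabla\rrangle\comp\rho^{1}_{\Lambda,\Lambda}\comp\Sigma\gamma^{1}$, which the paper invokes implicitly as ``by definition of $\gamma^{1}$''.
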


\begin{proof}
We proceed by structural induction over $t$.
  We show the cases for variables and $\lambda$-abstraction, as the rest are
  similar. For a variable $\Gamma \vdash x \c \vtype$, we have that
  $\gamma^{1}(x)(\vec{u}) = \vec{u}_{x}$ by definition of $\gamma^{1}$,
  and $\vec{u}_{x} = x[\vec{u}]$ by definition of substitution. For a
  $\lambda$-abstraction $\mathsf{lam}\,x\c\vtype.M$, we have
  \[
    \begin{aligned}
      \gamma^{1}(\mathsf{lam}\,x\c\vtype.M)(\vec{u})
      & =
        \mathsf{lam}\,x\c\vtype.\gamma^{1}(M)(\vec{\up}^{\vtype}(\vec{u}),
        \oname{var}^{\vtype}(\oname{new})) \\
      &= \mathsf{lam}\,x\c\vtype.M[(\vec{\up}^{\vtype}(\vec{u}),
        \oname{var}^{\vtype}(\oname{new})] \\
      & = (\mathsf{lam}\,x\c\vtype.M)[\vec{u}]
    \end{aligned}
  \]
  where the first equality is by the definition of $\gamma^{1}$, the second by
  the inductive hypothesis and the third by the definition of simultaneous
  substitution on $\lambda$-abstractions.
\end{proof}

\begin{proposition}
  \label{prop:reduction}
  For any computation $\Gamma \vdash^{\one} t \c \ctype,\,\,t \to t' \iff t' \in
  \gamma^{2}(t)$.
\end{proposition}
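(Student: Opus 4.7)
The plan is to proceed by structural induction on the well-typed computation $\Gamma \vdash^{\one} t \c \ctype$, after unfolding the definition of $\gamma^2$ via the commuting diagram of \autoref{fig:gamma}. Concretely, for $t = \iota(u)$ with $u\in \Sigma_{\one}(\Lambda)(\Gamma)$, the value $\gamma^2(t) = \gamma^2(\iota(u))$ is obtained by applying $\rho^2_{\Lambda,\Lambda}$ to $\Sigma\langle \id,\gamma\rangle(u)$, flattening the resulting $\Sigmas$-terms via $\hat\iota$, and extracting the relevant component. For each top-level computation constructor this unwinds to an explicit set-theoretic formula in terms of the subterms and their $\gamma$-images, which can be compared directly with the rules of \autoref{fig:lamsemantics}.

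The case analysis splits into three groups. First, the congruence-style rules (for $\mathsf{app}(s,v)$, $\lfst(t)$, $\lsnd(t)$, $\toexpr{s}{x}{t}$) are handled by applying the induction hypothesis to the distinguished reducing subterm: the clauses of $\rho^2$ for these operators all send a reduction $s \to s'$ of the subterm (which by IH corresponds to $s' \in \gamma^2(s)$) to the expected enclosing reduction in $\gamma^2$. Second, the premise-free rules (for $\mathsf{fold}(\mathsf{unfold}(t))$, $\lfst(\mathsf{pair}(t,s))$, $\lsnd(\mathsf{pair}(t,s))$, $t \oplus s$, and the two $\mathsf{case}$ rules) are read off directly from the corresponding clauses of $\rho^2$, which place exactly the reduct into the output set. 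Third, the $\beta$-style rules (for $\mathsf{app}((\mathsf{lam}\,x.t),v)$, $\mathsf{force}(\mathsf{thunk}(t))$, the $\mathsf{prod}$ branch of $\toexpr{s}{x}{t}$, and $\mathsf{pm}(\mathsf{pair}(v,w),(x,y).t)$) are fired via the labeled transitions of the redex's subterms that are recorded in $\gamma^2$ of those subterms (e.g.\ $\mathsf{lam}\,x.t \xto{v} t[v/x]$, $\mathsf{prod}(v) \xto{U} v$, $\mathsf{thunk}(t) \xto{F} t$) and subsequently consumed by the clause of $\rho^2$ for the enclosing operator.

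The main obstacle is precisely the $\beta$-case, where the abstract framework produces an expression of the form $f(v)$, with $f$ arising from the $\mathsf{lam}$-subterm's labeled behaviour. By the $\mathsf{lam}$-clause of $\rho^2$, this $f$ equals $\lambda e.\gamma^1(t')_\Delta(x_1,\ldots,x_{\dom(\Gamma)},e)$, and therefore $f(v) = \gamma^1(t')(x_1,\ldots,x_{\dom(\Gamma)},v)$. Invoking \autoref{prop:sub} identifies this with $t'[x_1,\ldots,x_{\dom(\Gamma)},v]$, which by the identity behaviour of variable substitutions equals $t'[v/x]$, matching the concrete $\beta$-rule. The same bridge applies for the $\mathsf{pm}$ rule (using \autoref{prop:sub} for the two nested binders), as well as for the $\toexpr$ and $\mathsf{case}$ rules (which internally use $\mathsf{app}$ of a $\mathsf{lam}$). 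Modulo this routine but delicate bookkeeping over renamings and the $\delta^\vtype$ functor, both inclusions in $t \to t' \iff t' \in \gamma^2(t)$ follow.
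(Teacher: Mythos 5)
Your proposal is correct and follows essentially the same route as the paper: structural induction over $t$, unfolding $\gamma^2$ via the clauses of $\rho^2$, and bridging the $\beta$-case by identifying $f(v)$ with $t'[v/x]$ through \autoref{prop:sub}. The paper spells out only the application case (covering both the congruence and the $\beta$-redex situations) and notes the remaining constructors are handled similarly, which is exactly the case grouping you describe in more detail.
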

\begin{proof}
We proceed by structural induction over $t$.
  We show the case for application, i.e.
  $t = \mathsf{app}_{\vtype,\ctype}(s,v)$ with $\Gamma \vdash^{\one}
  \mathsf{app}_{\vtype,\ctype}(s,v) \c \ctype$. The other cases are handled similarly.
  \begin{enumerate}
  \item $\mathsf{app}_{\vtype,\ctype}(s,v) \to t' \implies
    t' \in \gamma^{2}(\mathsf{app}_{\vtype,\ctype}(s,v))$. We
    identify two situations: first, $s \to s'$, thus
    $t' = \mathsf{app}_{\vtype,\ctype}(s',v)$ and second, $s = \mathsf{lam}\,x\c \vtype.M$,
    thus $t' = M[v/x]$. For the former, by the inductive hypothesis we have that
    $s' \in \gamma^{2}(s)$ and by definition of $\gamma^{2}$,
    $\mathsf{app}_{\vtype,\ctype}(s',v) \in
    \gamma^{2}(\mathsf{app}_{\vtype,\ctype}(s,v))$. For the
    latter, by the definition of $\gamma^{2}$,
    $\gamma^{1}(M)(x_{1},\dots,x_{\dom(\Gamma)},v) \in
    \gamma^{2}(\mathsf{app}_{\vtype,\ctype}(\lambda\,x\c \vtype.M, v))$. By
    \Cref{prop:sub}, $\gamma^{1}(M)(x_{1},\dots,x_{\dom(\Gamma)},v) =
    M[x_{1}/x_{1},\dots,x_{\dom(\Gamma)}/x_{\dom(\Gamma)},v/x]$, which is $M[v/x]$.
  \item $\gamma^{2}_{\ctype,\Gamma}(\mathsf{app}_{\vtype,\ctype}(s,v)) \ni t' \in
    \Lambda_{\ctype}(\Gamma) \implies \mathsf{app}_{\vtype,\ctype}(t,s) \to t'$.
    By the definition of $\gamma^{2}$ and the definition
    of $\rho^{2}$, we distinguish two cases: either there exists $s' \in
    \Lambda_{\arty{\vtype}{\ctype}}(\Gamma)$, with $s' \in \gamma^{2}(s)$ and thus
    $t' = \mathsf{app}_{\vtype,\ctype}(s',v)$, or there exists $d \in
    \Lambda_{\ctype}^{\Lambda_{\vtype}}(\Gamma)$ and $s' = d(v)$. In the first
    case, by the induction hypothesis we have that $s \to s'$ and thus
    $\mathsf{app}_{\vtype,\ctype}(s,v) \to \mathsf{app}_{\vtype,\ctype}(s',v)$.
    In the second case, by the definition of $\rho^{2}$, we infer that $s$ is of
    the form $\mathsf{lam}\,x\c \vtype.M$ and furthermore $d(v) = M[v/x]$. It suffices to show that
    $\mathrm{app}_{\vtype,\ctype}(\mathsf{lam}\,x\c \vtype.M,v) \to M[v/x]$,
    which is true. \qedhere
  \end{enumerate}
\end{proof}

\subsection{Operational Methods for \cbpv}
Next, we derive from the general framework the appropriate notions of applicative similarity and step-indexed logical relation for \cbpv. Recall that a relation $R\monoto \Lambda\times \Lambda$ can be presented as a family of relations $R_\tau(\Gamma)\seq \Lambda_\tau(\Gamma)\times \Lambda_\tau(\Gamma)$ indexed by $\tau\in \Ty$ and $\Gamma\in \fset/\Tyv$. Recall also from \autoref{sec:ho-gsos} the notion of congruence for functor algebras, and that congruences are not necessarily equivalence relations. Instantiating this to the initial $\Sigma$-algebra $\Lambda$ of \cbpv-terms, we obtain: 

\begin{definition}[Congruence for \cbpv]
  A relation $R \monoto \Lambda \times \Lambda$ is a \emph{congruence} if it is respected by all the operations of
  \cbpv:
  \begin{enumerate}
  \item $R_{\utype}(\Gamma)(\star,\star)$.
  \item For all variables $\Gamma \vdash^{\zero} x \c \vtype$, $R_{\vtype}(\Gamma)(x,x)$.
  \item For all values $\Gamma \vdash^{\zero} v_{1},v_{2} \c \vtype_{1}$,
    $R_{\vtype_{1}}(\Gamma)(v_{1},v_{2}) \implies R_{\vtype_{1} \boxplus
      \vtype_{2}}(\Gamma)(\linl_{\vtype_{1},\vtype_{2}}(v_{1}),
    \linl_{\vtype_{1},\vtype_{2}}(v_{2}))$.
  \item For all computations $\Gamma \vdash^{\one} t_{1},t_{2} \c \ctype$,
    $R_{\ctype}(\Gamma)(t_{1},t_{2}) \implies
    R_{U\ctype}(\Gamma)(\mathsf{thunk}_{\ctype}(t_{1}),
    \mathsf{thunk}_{\ctype}(t_{2}))$.
  \item For all values $\Gamma \vdash^{\zero} v_{1},v_{2} \c \vtype$,
    $R_{\vtype}(\Gamma)(v_{1},v_{2}) \implies
    R_{F\vtype}(\Gamma)(\mathsf{prod}_{\vtype}(v_{1}),
    \mathsf{prod}_{\vtype}(v_{2}))$.
  \item For all values $\Gamma \vdash^{\zero} v_{1},v_{2} \c \vtype$ and
    computations $\Gamma \vdash^{\one} t_{1},t_{2} \c \arty{\vtype}{\ctype}$,
    \[
      R_{\vtype}(\Gamma)(v_{1},v_{2}) \land R_{\arty{\vtype}{\ctype}}(\Gamma)(t_{1},t_{2}) \implies
    R_{{\ctype}}(\Gamma)(\mathsf{app}_{\vtype,\ctype}(t_{1},v_{1}),
    \mathsf{app}_{\vtype,\ctype}(t_{2},v_{2})).
  \]
  \item For all computations $\Gamma,x\c\vtype \vdash^{\one} M_{1},M_{2} \c
    \ctype$,
    \[
      R_{\ctype}(\Gamma + \check\vtype)(M_{1},M_{2}) \implies
      R_{\arty{\vtype}{\ctype}}(\Gamma)(\mathsf{lam}_{\ctype}\,x\c\vtype.M_{1},
    \mathsf{lam}_{\ctype}\,x\c\vtype.M_{2}).
    \]
  \end{enumerate}
Analogously for the remaining operations.
\end{definition}

\begin{definition}
 \emph{Contextual equivalence} is the greatest congruence $\contpre$ contained in $O\monoto \Lambda\times \Lambda$,
\[  O_\vtype(\Gamma) = \Lambda_\vtype(\Gamma)\times \Lambda_\vtype(\Gamma) \qqand O_\ctype(\Gamma) = \{ (t,s) \mid t{\Downarrow} \To s{\Downarrow} \},\]
where $t{\Downarrow}$ means that $t$ eventually $\beta$-reduces to a term that does not further reduce. (Since the language \cbpv is nondeterministic, this corresponds to \emph{may-termination}.)
\end{definition}
 As usual, $\contpre$ could alternatively defined in terms of contexts~\cite[Thm.~7.5.3]{10.5555/1076265}.

The weakening of the operational model
$\gamma = \langle \mathrm{sub}^\flat, \gamma^{2} \rangle \c \Lambda \to
\llangle \Lambda,\Lambda \rrangle \times \Pow_{\star}B'(\Lambda,\Lambda)$ is given by
\begin{equation}
  \label{eq:laxmodelcbpv}
  \widetilde{\gamma} = \langle \mathrm{sub}^\flat, \widetilde{\gamma}_2 \rangle \c
  \Lambda \to \llangle \Lambda,\Lambda \rrangle \times \Pow_{\star}B'(\Lambda,\Lambda),
\end{equation}
where $\widetilde{\gamma}_2 \c \Lambda \to \Pow_{\star}B'(\Lambda,\Lambda)$ is
defined as follows. On value types $\vtype$,
$(\widetilde{\gamma}^{2})_{\vtype} = ({\gamma}^{2})_{\vtype}$. On computation
types $\ctype$, we put $(\tilde{\gamma}_2)_{\kappa,\Gamma} = \{ t'\mid \Gamma \vdash t'\colon \kappa \text{ and } t\To t' \}$, where $t\To t'$ means that $t$ eventually $\beta$-reduces to~$t'$ (that is, $\To$ is the reflexive transitive closure of $\to$).

Finally, we pick a suitable relation lifting for the bifunctor $B(X,Y)= \llangle X,Y
\rrangle \times \Pow_{\star}B'(X,Y)$ \eqref{eq:cbpvbeh}:
\begin{equation}
  \label{eq:eglimilnercbpv}
  \barB(R,S) = \ol{\llangle} R,S \ol{\rrangle} \times \arP_\star\barB'(R,S)
\end{equation}
where $\ol{\llangle} \argument,\argument \ol{\rrangle}$ and $\barB'$
are the canonical liftings of the bifunctors $\llangle \argument,\argument
\rrangle$ and $\barB'$, and $\arP_\star$ is the pointwise left-to-right Egli-Milner
relation lifting.

We are now in a position to define applicative similarity for \cbpv. Spelling out \Cref{def:app-sim} for the weakening $\wt{\gamma}$
\eqref{eq:laxmodelcbpv} and the relation lifting $\barB$
\eqref{eq:eglimilnercbpv} yields the following notion:

\begin{definition}[Applicative similarity for \cbpv]\label{prop:weak-sim-cbpv}
A relation $R \monoto
\Lambda \times \Lambda$ is a \emph{weak simulation} if for all pairs of
terms $\Gamma \vdash t,s \c \tau$ such that $R_{\tau}(\Gamma)(t,s)$, the
following hold:
\begin{enumerate}
\item For all substitutions $\vec{u} \c \prod_{i \in
    \dom(\Gamma)}\Lambda_{\Gamma(i)}(\Delta)$, we have that $R_{\tau}(\Delta)(t[\vec{u}], s[\vec{u}])$.
\item If $\tau = \ctype$ and $t \to t'$, then $s \To s'$ and
  $R_{\ctype}(\Gamma)(t',s')$.
\item If $t = \star$, then $s = \star$.
\item If $t = \mathsf{thunk}_{\ctype}(t')$, then $s =
  \mathsf{thunk}_{\ctype}(s')$ and $R_{\tau}(\Gamma)(t',s')$.
\item If $t = \linl_{\vtype_{1},\vtype_{2}}(v)$, then $s =
  \linl_{\vtype_{1},\vtype_{2}}(w)$ and $R_{\vtype_{1}}(\Gamma)(v,w)$.
\item If $t = \linr_{\vtype_{1},\vtype_{2}}(v)$, then $s =
  \linr_{\vtype_{1},\vtype_{2}}(w)$ and $R_{\vtype_{2}}(\Gamma)(v,w)$.
\item If $t = \mathsf{pair}_{\vtype_{1},\vtype_{2}}(v_{1},v_{2})$, then $s =
  \mathsf{pair}_{\vtype_{1},\vtype_{2}}(w_{1},w_{2})$ with
  $R_{\vtype_{1}}(\Gamma)(v_{1},w_{1})$ and
  $R_{\vtype_{2}}(\Gamma)(v_{2},w_{2})$.
\item If $t = \mathsf{pair}_{\ctype_{1},\ctype_{2}}(t_{1},t_{2})$, then $s \To
  \mathsf{pair}_{\ctype_{1},\ctype_{2}}(s_{1},s_{2})$ with
  $R_{\ctype_{1}}(\Gamma)(t_{1},s_{1})$ and
  $R_{\ctype_{2}}(\Gamma)(t_{2},s_{2})$.
\item If $t = \mathsf{fold}_{\ctype}(t')$ then $s \To \mathsf{fold}_{\ctype}(s')$
  and $R_{\ctype[\mu \alpha.\ctype/\alpha]}(\Gamma)(t',s')$.
\item If $t = \lambda x \c \vtype.M$, then $s \To \lambda x \c \vtype.N$ and for
  all $\Gamma \vdash^{\zero} v \c \vtype$, $R_{\ctype}(\Gamma)(M[v/x],N[v/x])$.
\item If $t = \mathsf{prod}_{\vtype}(v)$ then $s \To \mathsf{prod}_{\vtype}(w)$
  and $R_{\vtype}(\Gamma)(v,w)$.
\end{enumerate}
\emph{Applicative
  similarity} $\appsim$ is the (pointwise) union of all applicative
simulations.
\end{definition}

Similarly, \autoref{def:abslogrel} yields the following notion of step-indexed logical relation:

\begin{definition}[Step-indexed logical relation for \cbpv]
  \label{eq:logrelcbpv}
The \emph{step-indexed logical
relation} $\logrel = \bigcap_\alpha \logrel^\alpha$ is given by the relations $\logrel^{\alpha} \monoto \Lambda\times \Lambda$ 
defined by transfinite induction as follows: put $\logrel^{0}= \Lambda\times \Lambda$ and $\logrel^\alpha = \bigcap_{\beta<\alpha} \logrel^\beta$ for limit ordinals $\alpha$. For successor ordinals, given terms $\Gamma \vdash t,s \c \tau$ be terms in \cbpv we put $\logrel^{\alpha+1}_{\type}(t,s)$ if and only if the
  following hold:
  \begin{enumerate}
  \item For all pairs of substitutions $\vec{v}, \vec{w} \c \prod_{i \in
      \dom(\Gamma)}\Lambda_{\Gamma(i)}(\Delta)$ whose subterms are
    pairwise related in $\logrel^{\alpha}$, we have that
    $\logrel^{\alpha}_{\tau}(\Delta)(t[\vec{u}], s[\vec{w}])$.
  \item If $\tau = \ctype$ and $t \to t'$, then $s \To s'$ and
    $\logrel^{\alpha}_{\ctype}(\Gamma)(t',s')$.
  \item If $t = \star$, then $s = \star$.
  \item If $t = \mathsf{thunk}_{\ctype}(t')$, then $s =
    \mathsf{thunk}_{\ctype}(s')$ and $\logrel^{\alpha}_{\tau}(\Gamma)(t',s')$.
  \item If $t = \linl_{\vtype_{1},\vtype_{2}}(v)$, then $s =
    \linl_{\vtype_{1},\vtype_{2}}(w)$ and $\logrel^{\alpha}_{\vtype_{1}}(\Gamma)(v,w)$.
  \item If $t = \linr_{\vtype_{1},\vtype_{2}}(v)$, then $s =
    \linr_{\vtype_{1},\vtype_{2}}(w)$ and $\logrel^{\alpha}_{\vtype_{2}}(\Gamma)(v,w)$.
  \item If $t = \mathsf{pair}_{\vtype_{1},\vtype_{2}}(v_{1},v_{2})$, then $s =
    \mathsf{pair}_{\vtype_{1},\vtype_{2}}(w_{1},w_{2})$ with
    $\logrel^{\alpha}_{\vtype_{1}}(\Gamma)(v_{1},w_{1})$ and
    $\logrel^{\alpha}_{\vtype_{2}}(\Gamma)(v_{2},w_{2})$.
  \item If $t = \mathsf{pair}_{\ctype_{1},\ctype_{2}}(t_{1},t_{2})$, then $s \To
    \mathsf{pair}_{\ctype_{1},\ctype_{2}}(s_{1},s_{2})$ with
    $\logrel^{\alpha}_{\ctype_{1}}(\Gamma)(t_{1},s_{1})$ and
    $\logrel^{\alpha}_{\ctype_{2}}(\Gamma)(t_{2},s_{2})$.
  \item If $t = \mathsf{fold}_{\ctype}(t')$ then $s \To \mathsf{fold}_{\ctype}(s')$
    and $\logrel^{\alpha}_{\ctype[\mu \alpha.\ctype/\alpha]}(t',s')$.
  \item If $t = \lambda x \c \vtype.M$, then $s \To \lambda x \c \vtype.N$ and for
    all $\Gamma \vdash^{\zero} v,w \c \vtype$,
    \[
      \logrel^{\alpha}_{\vtype}(\Gamma)(v,w) \implies
      \logrel^{\alpha}_{\ctype}(\Gamma)(M[v/x],N[w/x]).
    \]
  \item If $t = \mathsf{prod}_{\vtype}(v)$ then $s \To \mathsf{prod}_{\vtype}(w)$
    and $\logrel^{\alpha}_{\vtype}(v,w)$.
  \end{enumerate}
\end{definition}
The Abstract Soundness Theorem \ref{thm:congruence-ho-abstract-gsos} now yields as a special case:
\begin{theorem}[Soundness Theorem for \cbpv] Both applicative similarity $\appsim$ and the step-indexed logical relation $\logrel$ are sound for the contextual preorder: for all terms $p,q\in \Lambda_\tau(\Lambda)$,
\[
p \appsim q \; \implies\; p\contpre q\qquad\text{and}\qquad
\logrel(p,q) \; \implies\; p\contpre q.\]
\end{theorem}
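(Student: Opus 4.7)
The plan is to deduce the soundness theorem as an instance of the Abstract Soundness Theorem \ref{thm:congruence-ho-abstract-gsos}, using the same template as the soundness proof for \skif. First I would verify that both $\appsim$ and $\logrel$ are contained in the observation preorder $O$: this is immediate for $\appsim$ (since a singleton simulation that witnesses a terminating $\beta$-reduction chain forces the related term to weakly reduce to a terminating configuration) and for $\logrel$ it follows from the first successor clause of \autoref{eq:logrelcbpv}, which directly says strong reductions are matched by weak ones. Given containment in $O$, it remains to check conditions \ref{asm:1}--\ref{asm:6} for the CBPV instantiation, namely the syntax endofunctor $\Sigma$ in \eqref{eq:cbpvsyn}, the behaviour bifunctor $B$ in \eqref{eq:cbpvbeh}, the law $\rho$ of \eqref{eq:rhorho}, the weakening $\wt\gamma$ of \eqref{eq:laxmodelcbpv}, and the lifting $\barB$ of \eqref{eq:eglimilnercbpv}.

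The conditions \ref{asm:1}--\ref{asm:5} I expect to be routine. \ref{asm:1} holds because $\Sigma$ is built from constant presheaves, products, coproducts, and the binding endofunctors $\delta^\vtype$, all of which preserve directed colimits, strong epis, and monos in the presheaf category $\cbpvcat$. For \ref{asm:2}, I would equip each hom-set $\C(Z,B(X,Y)) = \C(Z, \llangle X,Y\rrangle \times \Pow_\star B'(X,Y))$ with the order given by equality in the $\llangle X,Y\rrangle$-component and componentwise inclusion in the $\Pow_\star B'$-component; this is preserved by precomposition. For \ref{asm:4}, the canonical lifting $\ol{\llangle -,-\ol{\rrangle}}$ part is determined by equalities so it is trivially up-closed in the first component, and the $\arP_\star\barB'$ part is up-closed by the standard up-closure of the left-to-right Egli-Milner lifting applied pointwise at each sort. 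For \ref{asm:3} and \ref{asm:5}, since $\barB$ is the product of canonical liftings with an Egli-Milner wrapping, the diagonal-respecting and composition-compatible properties reduce, as in the \skif proof, to weak-pullback preservation of the deterministic parts of $B$ and of $\Pow$ equipped with the left-to-right lifting; the relation lifting of $\rho$ (\ref{asm:5}) then follows by inspecting the clauses of $\rho^1$ and $\rho^2$ and checking that each clause maps related inputs to related outputs, which holds because the clauses are syntactically uniform (dinaturality of $\rho$) and thanks to the up-closure of $\arP_\star$.

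The genuinely substantive step, and the main obstacle, is the lax bialgebra condition \ref{asm:6}: the diagram
\[
    \begin{tikzcd}[column sep=1.95em]
      \Sigma(\mS) \ar{r}{\ini} \ar{d}[swap]{\Sigma \langle \id, \widetilde{\gamma}\rangle}
      & \mS \ar{r}{\widetilde{\gamma}}
      &[2em] B(\mS,\mS) \\
      \Sigma(\mS\times B(\mS,\mS)) \ar{r}{\rho_{\mS,\mS}}
      & B(\mS,\Sigmas(\mS+\mS)) \ar[phantom]{u}[description]{\dgeq{-90}}
        \ar{r}{B(\id,\Sigmas\nabla)}
      & B(\mS,\Sigmas(\mS)) \ar{u}[swap]{B(\id,\hat{\ini})}
    \end{tikzcd}
\]
must commute laxly at $(\mS,\ini,\wt\gamma)$. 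This splits into two pieces, corresponding to the two components of $B$. On the $\llangle \Lambda,\Lambda\rrangle$-component it must commute strictly, which reduces to \Cref{prop:sub} plus the equational compatibility of substitution with every constructor of \cbpv (for example, $(\lambda x.t)[\vec u] = \lambda x.\, t[\vec u, x]$), and this is precisely what $\rho^1$ encodes. On the $\Pow_\star B'$-component, the lax inequality says that each reduction rule of \autoref{fig:lamsemantics} remains sound when strong transitions in the premises are replaced by weak ones. For premise-free rules (the $\beta$-rules, $\mathsf{fold}/\mathsf{unfold}$, $\mathsf{force}/\mathsf{thunk}$, projections on pairs, $\oplus$, $\mathsf{case}$, $\mathsf{pm}$, $\mathsf{prod}$-label) this is automatic. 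For the congruence rules (application in the function position, projections, $\mathsf{to}$-sequencing), the weak version $t \To t'$ implies $C[t] \To C[t']$ by iterating the single-step rule, after which the corresponding $\beta$-rule may fire; since the right-hand side of the lax inequality sits under $\arP_\star$ (an \emph{up-closed} lifting), the extra intermediate reductions contributed by the weakening are absorbed into the larger set on the right. Once this case analysis is completed (essentially mirroring \Cref{prop:reduction}), the Abstract Soundness Theorem applies and delivers both implications $\appsim\,\leq\,\contpre$ and $\logrel\,\leq\,\contpre$ at once.
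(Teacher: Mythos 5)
Your proposal is correct and follows essentially the same route as the paper: both reduce the statement to checking containment in $O$ together with conditions \ref{asm:1}--\ref{asm:6} of the Abstract Soundness Theorem, verify \ref{asm:1}--\ref{asm:5} by the same observations (polynomial/$\delta^\vtype$ structure of $\Sigma$, equality-plus-inclusion ordering on hom-sets, up-closure of $\arP_\star$, expressing $\barB$ as a canonical lifting composed with the left-to-right Egli-Milner lifting), and settle \ref{asm:6} by noting that premise-free rules are trivially sound for weak transitions while the congruence-style rules follow by iterating their strong versions. Your extra remarks on the strict commutation of the $\llangle\Lambda,\Lambda\rrangle$-component are consistent with the paper's construction (where $\wt\gamma$ shares the substitution component $\mathrm{sub}^\flat$ with the operational model) and do not change the argument.
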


\begin{proof}
Clearly both $\appsim$ and $\logrel$ are contained in $O$. Therefore, we only need to check the conditions \ref{asm:1}--\ref{asm:6} of \autoref{thm:congruence-ho-abstract-gsos}. Condition \ref{asm:1} holds because $+$, $\times$ and $\delta$ (being a left adjoint) preserve directed colimits, strong epis and monos. The order on $(\Set^{\fset/{\Tyv}})^{\Tyl}(Z,\llangle X,Y \rrangle \times \Pow_{\star}B'(X,Y))$ demanded by \ref{asm:2} is given be pointwise equality in the first component and pointwise inclusion in the second component. Due to the use of the left-to-right Egli-Milner lifting, \ref{asm:4} holds. \ref{asm:3} and \ref{asm:5} follow like for $\xCL$, since our lifting $\barB$ can be expressed as a canonical lifting of a deterministic behaviour bifunctor composed with the left-to-right lifting of the power set functor. In particular \ref{asm:5} holds because the rules have no negative premises. The lax bialgebra condition \ref{asm:6} again means that the operational rules remain sound when strong transitions are replaced by weak ones. This is easily verified by inspecting the rules; for instance, the weak versions of the rules for
  $\lfst$ are
  \[
    \inference{t \To t'}{\lfst(t) \To \lfst(t')}
    \qquad
    \inference{t \To \mathsf{pair}(t',s)}{\lfst(t) \To t'}
  \]
and they are clearly sound because they follow by repeated application of the strong versions.
\end{proof}

\begin{example}
  \label{ex:cbpvbetalaw}
  One of the $\beta$-laws for \cbpv states that
\[   t \contpre \mathsf{force}(\mathsf{thunk}(t)) \contpre t\qquad \text{for all computations $t\colon \ctype$}.\]
To prove this, we show (i) $\logrel^\alpha_{\ctype}(t,\mathsf{force}(\mathsf{thunk}(t))$ and (ii) $\logrel^\alpha_\ctype(\mathsf{force}(\mathsf{thunk}(t),t)$ by transfinite induction. The only non-trivial case is the successor step. For (i), suppose that $t\to t'$. Then $\mathsf{force}(\mathsf{thunk}(t))\To t'$ and $\logrel^\alpha(t',t')$ because $\logrel^\alpha$ is reflexive (being a congruence on an initial algebra). Similarly, for (ii), we have  $\mathsf{force}(\mathsf{thunk}(t))\to t$ and $t\To t$ and $\logrel^\alpha(t,t)$.
\end{example}

\begin{example}
  In this example we investigate one half of the ``thunk'' $\eta$-law for \cbpv:
  \[
    v \contpre \mathsf{thunk}(\mathsf{force}(v))\qquad \text{for all closed values $v \c U\ctype$.}
  \]
  To prove this, we show 
  $\logrel^\alpha_{U\ctype}(v,\mathsf{thunk}(\mathsf{force}(v))$ by transfinite induction. The only non-trivial case is the successor step $\alpha\to \alpha+1$. We show that
  $\logrel_{U\ctype}^{\alpha+1}(v,\mathsf{thunk}(\mathsf{force}(v))$. Since $v$
  is closed, it is not a variable. Thus $v = \mathsf{thunk}(t)$, meaning that $\gamma^{2}(v) = \{t\}$. As
  $\wt{\gamma}_2(\mathsf{thunk}(\mathsf{force}(\mathsf{thunk}(t))=
  \{\mathsf{force}(\mathsf{thunk}(t)\})$, it suffices to show
  $\logrel_{\ctype}^{\alpha}(t,\mathsf{force}(\mathsf{thunk}(t))$, which is
  true by \Cref{ex:cbpvbetalaw}.
\end{example}

\begin{remark}
  To capture the reverse direction
  $\mathsf{thunk}(\mathsf{force}(v)) \contpre v$ via the logical relation
  $\logrel$, one has to define the more liberal \emph{testing} weakening as in \cite[Def. 3.10, Ex. 4.34]{gmtu24lics}. More precisely,
  for each value $\Gamma \vdash^{\zero} v \c U\ctype$, the ``testing''
  weakening $\wt{\gamma}_2(v)$ additionally includes the transition
  $\mathsf{force}(v)$, which then allows the logical relation to identify
  $\mathsf{thunk}(\mathsf{force}(v))$ with $v$.
\end{remark}

\section{Conclusions and Future Work}

We developed a novel theory of compositional program equivalences for fine-grain call-by-value (FGCBV) and call-by-push-value (CPBV) languages. In doing so, we demonstrated that the categorical
framework of higher-order abstract GSOS is capable of handling call-by-value semantics, which had been an open question up to this point, and derived congruence properties of FGCPV and CBPV from general congruence results available in the abstract framework. The insight that higher-order abstract GSOS applies smoothly to such complex languages, without any need for adapting or extending the existing abstract theory of congruence, is somewhat remarkable and further highlights the scope and expressive power of the framework.

Effects play a prominent role in call-by-push-value, and may go beyond
the nondeterministic example of \cbpv. Of particular interest would be to model
effect-parameterized variants of CBPV languages to capture, e.g., concurrent or stateful computations. The higher-order abstract GSOS framework is particularly
useful for developing unifying approaches to program reasoning, although the
existing theory is not yet compatible with computational effects at their full
generality.

There are also other interesting aspects of CBPV, including its use as
an intermediate language for e.g. call-by-name and call-by-value languages. We
plan to further study the properties of FGCBV and CBPV languages, with a focus
on their use as compilation targets, and understand
them at a high level of abstraction similarly to the generic normalization
theorem of \cite{gmstu24}. For instance, we are interested
in generalizing the embeddings of CBV and CBN into CBPV to a broader
scope, and to formally capture the \emph{families} of languages that embed into CBPV.

The current work exemplifies the effectiveness of the higher-order
abstract GSOS framework towards modelling the operational semantics of
higher-order languages with binding and substitutions and reasoning about program
equivalences. The unifying pattern emerging through all of the examples is that the
mathematical structures of such languages live in presheaf toposes. We wish to
develop \emph{rule formats} for the operational
semantics of languages with binding and substitution, which would
leverage the well-behavedness of such categories, and provide the means to
develop semantics without the extra overhead of working directly with the
categorical structures.

\begin{acks}
Sergey Goncharov
acknowledges funding by the Deutsche Forschungsgemeinschaft (DFG, German
  Research Foundation) -- project numbers 527481841 and 501369690. Stelios Tsampas acknowledges funding by the Deutsche Forschungsgemeinschaft (DFG, German
  Research Foundation) -- project number 527481841. Henning Urbat
acknowledges funding by by the Deutsche Forschungsgemeinschaft (DFG, German
  Research Foundation) -- project number 470467389.
\end{acks}
\bibliography{mainBiblio}

\ifarx{
\clearpage
\appendix

\section{Higher-Order GSOS Law for CBPV}\label{app:rho-cbpv}
The operational semantics of the language \cbpv is modeled by a $V$-pointed
higher-order GSOS law $\rho$ of $\Sigma$ \eqref{eq:cbpvsyn} over $B$
\eqref{eq:cbpvbeh} as follows. The component
\[\rho_{(X,\var),Y}\colon \Sigma(X \times \llangle X,Y \rrangle
    \times \Pow_{\star}B'(X,Y)) \to
    \llangle X,\Sigmas(X+Y)\rrangle \times \Pow_{\star}B'(X,\Sigma^{\star}(X + Y)) \] 
is defined as a pairing $\langle\rho^{1}_{(X,\var),Y} \comp
\Sigma p,\,\rho^{2}_{(X,\var),Y}\rangle$, where $p$ is the middle product projection and
\begin{equation}
  \label{eq:rhorho}
    \begin{aligned}
  & \rho^{1}_{(X,\var),Y} \c \Sigma\llangle X,Y \rrangle \to
    \llangle X, \Sigma^{\star}(X + Y) \rrangle, \\
  & \rho^{2}_{(X,\var),Y} \c \Sigma(X \times \llangle X,Y \rrangle
    \times \Pow_{\star}B'(X,Y)) \to
    \Pow_{\star}B'(X,\Sigma^{\star}(X + Y)).
  \end{aligned}
\end{equation}
Here $\rho^{1}$ models the simultaneous substitution structure in the sense that it induces the map $\mathrm{sub}^{\flat} \c
\Lambda \to \llangle \Lambda,\Lambda\rrangle$, and $\rho^2$ models the dynamics of \cbpv. Formally, the two components are defined as follows.

\medskip\noindent {\textbf{Definition of $\rho^{1}_{(X,\var),Y}$:}}\\
The component of the natural transformation $\rho^{1}_{(X,\var),Y}$ at $\tau\in\Ty$ and $\Gamma\in \fset/\Tyv$, i.e.
\[ \rho^{1}_{(X,\var),Y,\tau,\Gamma} \c \Sigma_\tau\llangle X,Y \rrangle(\Gamma) \to
    \Set^{\fset/{\Tyv}}\bigl(\prod_{i \in
    \dom(\Gamma)}X_{\Gamma(i)}, \Sigmas_{\tau}(X+Y)\bigr), \]
is defined below. Here, we fix a context $\Delta\in \fset/\Phi$ and use ``curried'' notation; for
instance, the assignment $\linl_{\vtype_1,\vtype_2}(f),\vec{u} \mapsto \linl_{\vtype_1,\vtype_2}(f_\Delta(\vec{u}))$ below means that
$\rho^{1}_{(X,\var),Y,\vtype_{1} \boxplus \vtype_{2},\Gamma}(\linl_{\vtype_1,\vtype_2}(f))$ is the natural transformation $\prod_{i \in
    \dom(\Gamma)}X_{\Gamma(i)} \to \Sigmas_{\tau}(X+Y)$ whose component at $\Delta$ is given by
 $\lambda \vec{u}.\linl(f_\Delta(\vec{u}))$). We let $\vec{u}_i\in X_{\Gamma(i)}(\Delta)$ denote the $i$-th entry of $\vec{u}\in \prod_{i \in
    \dom(\Gamma)}X_{\Gamma(i)}(\Delta)$.

 Then for the non-binding operators of the signature, the map $\rho^{1}_{(X,\var),Y,\tau,\Gamma}$ is given by:
\begin{alignat*}{4}
& x,\vec{u}
    && \quad \mapsto \quad && \vec{u}_{x} &&\qquad\qquad (x\in V_\tau(\Gamma), \text{ i.e. } \Gamma(x)=\tau \})\\
&\star,\vec{u}
   && \quad \mapsto \quad && \star && \\
& \mathsf{app}_{\vtype,\ctype}(f,g),\vec{u}
    &&\quad\mapsto\quad && \mathsf{app}_{\vtype,\ctype}(f_\Delta(\vec{u}),g_\Delta(\vec{u}))  &&\qquad\qquad (f\in \llangle X,Y\rrangle_{\arty{\vtype}{\ctype}}(\Gamma),\, g\in \llangle X,Y\rrangle_{\vtype}(\Gamma))\\
&f\oplus_\ctype g,\vec{u}
    && \quad\mapsto\quad && f_\Delta(\vec{u}) \oplus_\ctype g_\Delta(\vec{u}) &&\qquad\qquad (f,g\in \llangle X,Y\rrangle_\ctype(\Gamma))\\
& \mathsf{force}_\ctype(f),\vec{u}
    &&\quad\mapsto\quad && \mathsf{force}_\ctype(f_\Delta(\vec{u})) &&\qquad\qquad (f\in \llangle X,Y\rrangle_{U\ctype}(\Gamma))\\
&\mathsf{thunk}_\ctype(f),\vec{u}
    && \quad \mapsto \quad && \mathsf{thunk}_\ctype(f_\Delta(\vec{u})) &&\qquad\qquad (f\in \llangle X,Y\rrangle_\ctype(\Gamma))\\
& \mathsf{prod}_\vtype(f),\vec{u}
    &&\quad\mapsto\quad&& \mathsf{prod}_\vtype(f_\Delta(\vec{u})) &&\qquad\qquad (f\in \llangle X,Y\rrangle_\vtype(\Gamma))\\
& \linl_{\vtype_1,\vtype_2}(f),\vec{u}
    &&\quad\mapsto\quad && \linl_{\vtype_1,\vtype_2}(f_\Delta(\vec{u})) &&\qquad\qquad (f\in \llangle X,Y\rrangle_{\vtype_1}(\Gamma)) \\
 & \linr_{\vtype_1,\vtype_2}(f),\vec{u}
    && \quad\mapsto\quad && \linr_{\vtype_1,\vtype_2}(f_\Delta(\vec{u})) &&\qquad\qquad (f\in \llangle X,Y\rrangle_{\vtype_2}(\Gamma))\\
& \mathsf{pair}_{\vtype_{1},\vtype_{2}}(f,g),\vec{u}
    &&\quad\mapsto\quad && \mathsf{pair}_{\vtype_{1},\vtype_{2}}(f_\Delta(\vec{u}),g_\Delta(\vec{u})) &&\qquad\qquad (f\in \llangle X,Y\rrangle_{\vtype_1}(\Gamma),\, g\in \llangle X,Y\rrangle_{\vtype_2}(\Gamma))\\
& \mathsf{pair}_{\ctype_{1},\ctype_{2}}(f,g),\vec{u}
    &&\quad\mapsto\quad && \mathsf{pair}_{\ctype_{1},\ctype_{2}}(f_\Delta(\vec{u}),g_\Delta(\vec{u})) && \qquad\qquad (f\in \llangle X,Y\rrangle_{\ctype_1}(\Gamma),\, g\in \llangle X,Y\rrangle_{\ctype_2}(\Gamma))\\
& {\mathsf{fst}}_{\ctype_1,\ctype_2}(f),\vec{u}
    &&\quad\mapsto\quad && {\mathsf{fst}}_{\ctype_1,\ctype_2}(f_\Delta(\vec{u})) && \qquad\qquad (f\in \llangle X,Y\rrangle_{\ctype_1\otimes \ctype_2}(\Gamma))\\
    &{\mathsf{snd}}_{\ctype_1,\ctype_2}(f),\vec{u}
    && \quad\mapsto\quad&& {\mathsf{snd}}_{\ctype_1,\ctype_2}(f_\Delta(\vec{u})) &&\qquad\qquad (f\in \llangle X,Y\rrangle_{\ctype_1\otimes \ctype_2}(\Gamma))\\
& \mathsf{fold}_\ctype(f),\vec{u}
    &&\quad\mapsto\quad && \mathsf{fold}_\ctype(f_\Delta(\vec{u})) &&\qquad\qquad (f\in \llangle X,Y\rrangle_{\ctype[\mu\alpha.\ctype/\alpha]}(\Gamma))\\
&\mathsf{unfold}_\ctype(f),\vec{u}
    && \quad\mapsto\quad && \mathsf{unfold}_\ctype(f_\Delta(\vec{u})) &&\qquad\qquad (f\in \llangle X,Y\rrangle_{\mu\alpha.\ctype}(\Gamma))
\end{alignat*}
For the binding operators, the map $\rho^{1}_{(X,\var),Y,\tau,\Gamma}$ is given as follows. For $\lambda$-abstractions,
\[  \mathsf{lam}_\ctype\,x\c\vtype. f,\vec{u}
    \quad \mapsto \quad \mathsf{lam}_\ctype\,x\c\vtype.f_{\Delta+\check\vtype}(\vec{\up}_{X,\Delta}^\vtype(\vec{u}), \var_{\Delta+\check\vtype,\vtype}(\oname{new}_{\Delta}^{\vtype})) \]
where $f\in \delta^\vtype\llangle X,Y\rrangle_\ctype(\Gamma) = \llangle X,Y\rrangle_\kappa(\Gamma+\check\vtype)$. In more detail, we are given a natural transformation
\[ f\colon (\prod_{i\in \dom(\Gamma)} X_{\Gamma(i)}) \times X_{\vtype} = \prod_{i\in \dom(\Gamma+\check\vtype)} X_{\Gamma(i)} \to Y_\kappa.  \]
We turn $\vec{u}\in \prod_{i\in \dom(\Gamma)} X_{\Gamma(i)}(\Delta)$ into 
\[ \vec{\up}_{X,\Delta}^\vtype(\vec{u})\in \prod_{i\in \dom(\Gamma)} X_{\Gamma(i)}(\Delta+\check\vtype) \]
by applying the map $(\up_{X,\Delta}^\vtype)_{\Gamma(i)} \colon X_{\Gamma(i)}(\Delta)\to X_{\Gamma(i)}(\Delta+\check\vtype)$ of \eqref{eq:up} in the $i$th component. Moreover, 
\[ \var_{\Delta+\check\vtype,\vtype}(\oname{new}_{\Delta}^{\vtype}) \in X_{\vtype}(\Delta+\check\vtype)   \]
where $\var_{\Delta+\check\vtype,\vtype}\colon V_{\vtype}(\Delta+\check\vtype) \to X_{\vtype}(\Delta+\check\vtype)$ is the component of the point $\var\colon V\to X$ of $X$, and we regard $\oname{new}_{\Delta}^\vtype\colon \check 1\cong \vtype\to \Delta+\check\vtype$ as an element of $V(\Delta+\check\vtype)$. By applying the $(\Delta+´\check\vtype)$-component of $f$ to $(\vec{\up}_{X,\Delta}^\vtype(\vec{u}), \var_{\Delta+\check\vtype,\vtype}(\oname{new}_{\Delta}^{\vtype}))$ we thus obtain
\[ f_{\Delta+\check\vtype}(\vec{\up}_{X,\Delta}^\vtype(\vec{u}), \var_{\Delta+\check\vtype,\vtype}(\oname{new}_{\Delta}^{\vtype})) \in Y_\ctype(\Delta+\check\vtype), \]
and so 
\[ \mathsf{lam}_\ctype\,x\c\vtype.f_{\Delta+\check\vtype}(\vec{\up}_{X,\Delta}^\vtype(\vec{u}), \var_{\Delta+\check\vtype,\vtype}(\oname{new}_{\Delta}^{\vtype})) \in (\Sigmas_{\arty{\vtype}{\ctype}}(X+Y))(\Delta). \]
For the remaining binding operators, we have (omitting subscripts for better readability)
\[  \toexprt{g}{x}{f}{\vtype}{\ctype},\vec{u}
    \quad\mapsto\quad \toexprt{g(\vec{u})}{x}{f(\vec{\up}^{\vtype}(\vec{u}),
      \oname{var}_{\vtype}(\oname{new}))}{\vtype}{\ctype},  \]
and 
\[
 \mathsf{case}_{\vtype_1,\vtype_1,\ctype}(f,g,h),\vec{u}
    \quad \mapsto \quad \mathsf{case}_{\vtype_1,\vtype_1,\ctype}(f(\vec{u}),g(\vec{\up}^{\vtype_{1}}(\vec{u}),
      \oname{var}_{\vtype_{1}}(\oname{new})),
      h(\vec{\up}^{\vtype_{2}}(\vec{u}),\oname{var}_{\vtype_{2}}(\oname{new})))\\
\]
and 
\[
 \mathsf{pm}_{\vtype_1,\vtype_1,\ctype}(f,g),\vec{u}
    \mapsto \mathsf{pm}_{\vtype_1,\vtype_1,\ctype}(f(\vec{u}),
    g(\delta^{\vtype_{1}}(\vec{\up}^{\vtype_{2}})
      \comp \vec{\up}^{\vtype_{1}}(\vec{u}),
      \oname{var}_{\vtype_{1}}(\oname{old} \comp \oname{new}),
      \oname{var}_{\vtype_{2}}(\oname{new} \comp \oname{new}))).
\]
This concludes the definition of $\rho^{}_{(X,\var),Y}$.

\medskip\noindent\textbf{Definition of $\rho^{2}_{(X,\var),Y}$:}\\
The component of the natural transformation $\rho^{2}_{(X,\var),Y}$ at $\tau\in \Ty$ and $\Gamma\in\fset/\Tyv$ is the map
\[    \rho^{2}_{(X,\var),Y,\tau,\Gamma} \c
     \Sigma_\tau(X \times \llangle X,Y \rrangle
      \times \Pow_{\star}B'(X,Y))(\Gamma)
     \to \Pow\cdot B'_\tau(X,\Sigma^{\star}(X + Y))(\Gamma)
\]
given by
{\allowdisplaybreaks
  \begin{alignat*}{4}
    &{x}
    && \quad\mapsto\quad && \varnothing \\
    & (x\in V_{\vtype}(\Gamma)) &&&& \\[1ex]
    &{\star}
    && \quad\mapsto\quad &&1\\[1ex]
    &{\linl_{\vtype_1,\vtype_2}(v,-,-)}
    && \quad\mapsto\quad && \{\inl(v)\} \\
    & (v\in X_{\vtype_1}(\Gamma))\\[1ex]
    &{\linr_{\vtype_1,\vtype_2}(v,-,-)}
    && \quad\mapsto\quad && \{\inr(v)\} \\
    & (v\in X_{\vtype_2}(\Gamma))\\[1ex]
    &{\mathsf{thunk}_{\ctype}(t,-,-)}
    && \quad\mapsto\quad && \{t\} \\
    & (t\in X_\ctype(\Gamma))\\[1ex]
    &{\mathsf{pair}_{\vtype_{1},\vtype_{2}}((v,-,-),(w,-,-))}
    && \quad\mapsto\quad && \{(v,w)\} \\
    & (v\in X_{\vtype_1}(\Gamma),\, w\in X_{\vtype_2}(\Gamma))\\[1ex]
    &{\mathsf{fold}_{\ctype}(t,-,-)}
    && \quad\mapsto\quad && \{t\} \\
    & (t\in X_{\ctype}(\Gamma))\\[1ex]
    &{\mathsf{unfold}_{\ctype}(-,-,L)}
    && \quad\mapsto\quad && \{t' \mid t' \in L \cap Y_{\ctype[\mu \alpha.\ctype/\alpha]}\} \\
    & (L\in \Pow\cdot B'_{\ctype[\mu \alpha.\ctype/\alpha]}(X,Y)(\Gamma))\\[1ex]
    &{\mathsf{lam}_{\ctype}\,x\c
      \vtype.\,(-,f,-)}
    && \quad\mapsto\quad && \{\lambda e. f(x_{1},\dots,x_{\dom(\Gamma)},e)\} \text{ (see below)} \\
    & (f\in \delta^{\vtype}\llangle X,Y\rrangle_\ctype(\Gamma)=\llangle X,Y\rrangle_\ctype(\Gamma+\check\vtype))\\[1ex]
    &{(t,-,-) \oplus_\ctype (s,-,-)}
    && \quad\mapsto\quad && \{t,s\} \\
    & (t,s\in X_{\ctype}(\Gamma))\\[1ex]
    &\mathsf{case}_{\vtype_{1},\vtype_{2},\ctype}((-,-,L),(-,-,-),(-,-,-))
    && \quad \mapsto\quad && \{ \mathsf{app}_{\vtype_1,\ctype}(\mathsf{lam}\,x\c\vtype_{1}.s,t) \mid t \in L
       \cap Y_{\vtype_1}(\Gamma)\}\,{\cup} \\
    & (L\in \Pow\cdot B'_{\vtype_1\boxplus \vtype_2}(X,Y)(\Gamma))   &&&& \{\mathsf{app}_{\vtype_2,\ctype}(\mathsf{lam}\,x\c\vtype_{2}.r,t) \mid t \in L \cap
        Y_{\vtype_2}(\Gamma)\}\\[1ex]
    &{\lfst_{\ctype_1,\ctype_2}(-,-,L)}
    && \quad\mapsto\quad && \{\lfst_{\ctype_1,\ctype_2}(t') \mid t' \in L \cap
       Y_{\ctype_1\boxtimes\ctype_2}(\Gamma)\}\,{\cup}\\
    & (L\in \Pow\cdot B'_{\ctype_1\otimes \ctype_2}(X,Y)(\Gamma)) &&&& \{\fst(t') \mid t' \in L \cap
        Y_{\ctype_1}\times Y_{\ctype_2}(\Gamma)\} \\[1ex]
    &{\lsnd_{\ctype_1,\ctype_2}(-,-,L)}
    && \quad\mapsto\quad && \{\lsnd_{\ctype_1,\ctype_2}(t') \mid t' \in L \cap
       Y_{\ctype_1\boxtimes\ctype_2}(\Gamma)\}\,{\cup}\\
    & (L\in \Pow\cdot B'_{\ctype_1\otimes \ctype_2}(X,Y)(\Gamma))  &&&& \{\snd(t') \mid t' \in L \cap Y_{\ctype_1}\times Y_{\ctype_2}(\Gamma)\} \\[1ex]
    &{\mathsf{pair}_{\ctype_{1},\ctype_{2}}((t,-,-),(s,-,-))}
    && \quad\mapsto\quad && \{(t,s)\}\\
 & (t\in X_{\ctype_1}(\Gamma),\, s\in X_{\ctype_2}(\Gamma))\\[1ex]
    &{\mathsf{app}_{\vtype,\ctype}((-,-,L),(v,-,-))}
    && \quad\mapsto\quad && \{ f(v) \mid f \in L \cap
       Y_{\ctype}^{X_{\vtype}}(\Gamma)\}\,{\cup}\\
 & (L\in \Pow\cdot B_{\arty{\vtype}{\ctype}}'(X,Y)(\Gamma),\, v\in X_\vtype(\Gamma))
    &&&& \{\mathsf{app}_{\vtype,\ctype}(t', v) \mid t' \in L \cap
        Y_{\arty{\vtype}{\ctype}}(\Gamma)\} \\[1ex]
    &{\toexprt{(-,-,L)}{x}{(t,-,-)}{\vtype}{\ctype}}
    && \quad\mapsto\quad &&
       \{\mathsf{app}_{\vtype,\ctype}(\mathsf{lam}\,x\c\vtype.t, t')
       \mid t' \in L \cap Y_{\vtype}(\Gamma)\}\,\cup \\
    & (L\in \Pow\cdot B'_{F\vtype}(X,Y)(\Gamma),\, t\in \delta^{\vtype}X(\Gamma))  &&&& \{\toexprt{t'}{x}{t}{\vtype}{\ctype} \mid t' \in L \cap Y_{F\vtype}(\Gamma)\}\\[1ex]
    &{\mathsf{force}_{\ctype}(-,-,L)}
    && \quad\mapsto\quad && L\cap Y_{\ctype} \\
    & (L\in \Pow\cdot B'_{U\ctype}(X,Y)(\Gamma))  \\[1ex]
    &{\mathsf{prod}_{\vtype}(v,-,-)}
    && \quad\mapsto\quad && \{v\}\\
    & (v\in X_{\vtype}) \\[1ex]
    &{\mathsf{pm}_{\vtype_{1},\vtype_{2},\ctype}((-,-,L),(s,-,-))}
    && \quad\mapsto\quad && \{\mathsf{app}((\mathsf{lam}\,y.\mathsf{app}((\mathsf{lam}\,x.s),\fst(t))),\snd(t)) \mid \\
    & (L\in \Pow\cdot B_{\vtype_2\otimes \vtype_2}(X,Y)(\Gamma),\, s\in \delta^{\vtype_1}\delta^{\vtype_2}X(\Gamma)) &&&& \;\; t \in L \cap Y_{\vtype_1}(\Gamma) \times Y_{\vtype_{2}}(\Gamma)\}
\end{alignat*}
\begin{remark}\label{rem:rho-lambda-def}
In the clause for $\mathsf{lam}_{\ctype}\,x\c
      \vtype.\,(t,f,L)$, the expression $\lambda e. f(x_{1},\dots,x_{\dom(\Gamma)},e)$ on the right-hand side denotes the natural transformation 
 \[(\Sigmas_\ctype(X+Y))^{X_\vtype}(\Gamma) = \mbox{$\Set$}^{\fset/{\Tyv}}\bigl({\fset/{\Tyv}}(\Gamma,
  \argument) \times X_\vtype, \Sigmas_\ctype(X+Y)\bigr)\]
 whose component at $\Delta\in \fset/\Tyv$ sends $(h,e)\in \fset/\Tyv(\Gamma,\Delta)\times X_\vtype(\Delta)$ to $f_\Delta(x_1,\ldots,x_{\dom(\Gamma)},e)$, where $x_i\in X_{\Gamma(i)}(\Delta)$ is the image of $i\in \dom(\Gamma)$ under the map
\[ V_{\Gamma(i)}(\Gamma)\xto{V_{\Gamma(i)}h} V_{\Gamma(i)}(\Delta) \xto{\var_{\Gamma(i),\Delta}} X_{\Gamma(i)}(\Delta).\]
The clause thus specifies the desired labeled transition 
$\mathsf{lam}_\kappa\, x\colon \varphi.t \xto{e} t[e/x]$ in the model \eqref{eq:op-model-cbpv}.
\end{remark}
}}{}

\end{document}